\documentclass[11pt]{article}
\usepackage{amsmath}
\usepackage{amsmath}
\usepackage{times}
\usepackage{subfigure}
\usepackage{fullpage}
\usepackage{color}
\usepackage{graphicx,amssymb,amsmath}
\usepackage{multirow}
\usepackage{xspace}
\usepackage[linesnumbered,boxed,ruled, vlined]{algorithm2e} 
\usepackage[top=1in, bottom=1in, left=1in, right=1in]{geometry}
\usepackage{cases}

\newcommand{\commentout}[1]{}
\newcommand{\eat}[1]{}
\newcommand{\topic}[1]{\noindent{{\bf #1}:}}

\newcommand{\calH}{{\mathcal H}}

\newcommand{\calF}{{\mathcal F}}
\newcommand{\calP}{{\mathcal P}}

\newcommand{\calT}{{\mathcal T}}

\newcommand{\calR}{{\mathcal R}}

\newcommand{\Prob}{{\operatorname{Pr}}}
\newcommand{\Exp}{{\mathbb{E}}}
\newcommand{\E}{{\mathbb{E}}}

\renewcommand{\H}{{\mathcal{H}}}

\renewcommand{\P}{{\mathcal P}}

\newcommand{\hpf}{\Psi(\calP)}

\newcommand{\dist}{\mathrm{d}}
\renewcommand{\d}{\mathrm{d}}

\newcommand{\rank}{\mathsf{rank}}

\newcommand{\FA}{\mathrm{FPRAS}}
\newcommand{\sharpP}{\#\mathrm{P}}
\newcommand{\inapp}{\mathrm{Inapprox}}
\newcommand{\open}{\mathrm{Open}}

\newcommand{\B}{\mathsf{B}}

\newcommand{\A}[1]{\langle #1\rangle}
\newcommand{\e}{\epsilon}
\newcommand{\bS}{\bar{S}}

\newcommand{\V}{\mathcal{V}}
\renewcommand{\L}{\mathsf{T}}

\newcommand{\iin}{\mathsf{In}}
\newcommand{\poly}{\mathrm{poly}}

\newcommand{\diam}{\mathrm{diam}}
\newcommand{\D}{\mathsf{D}}
\newcommand{\C}{\mathsf{C}}

\newcommand{\KCP}{\mathsf{kC}}
\newcommand{\CP}{\mathsf{C}}
\newcommand{\KC}{\mathsf{kCL}}
\newcommand{\MST}{\mathsf{MST}}
\newcommand{\MM}{\mathsf{PM}}
\newcommand{\CC}{\mathsf{CC}}

\newcommand{\NN}{\mathsf{NN}}
\newcommand{\KMNN}{\mathsf{kmNN}}

\newcommand{\core}{stoch-core}

\renewcommand{\d}{\mathrm{d}}
\newcommand{\p}{p}
\newcommand{\Cl}{\mathsf{Cl}}
\newcommand{\Home}{\mathcal{H}}

\newcommand{\realize}{\vDash}
\newcommand{\consistent}{\thicksim}

\renewcommand{\r}{{\mathbf{r}}}

\newcommand{\per}{\mathrm{Per}}

\newenvironment{proof}{\noindent {\em Proof: }\ignorespaces}{}

\newcommand{\qed}{\hspace*{\fill}$\Box$\medskip}

\newtheorem{theorem}{Theorem}

\newtheorem{lemma}{Lemma}

\newtheorem{definition}{Definition}

\title{Approximating the Expected Values for Combinatorial Optimization Problems over Stochastic Points}

\author{Lingxiao Huang \quad\quad\quad\quad Jian Li \\
Institute for Interdisciplinary Information Sciences\\
Tsinghua University, China
}

\begin{document}

\pagenumbering{gobble}
\begin{titlepage}

\maketitle

\begin{abstract}
\normalsize{
We consider the stochastic geometry model where the location of each node
is a random point in a given metric space, or the existence of each node is uncertain.
We study the problems of computing
the expected lengths of several combinatorial or geometric optimization problems
over stochastic points, including closest pair,
minimum spanning tree, $k$-clustering, minimum perfect matching, and minimum cycle cover.
We also consider the problem of
estimating the probability that
the length of closest pair, or the diameter,
is at most, or at least, a given threshold.
Most of the above problems are known to be $\sharpP$-hard.
We obtain FPRAS (Fully Polynomial Randomized
Approximation Scheme) for most of them
in both the existential and locational uncertainty models.
Our result for stochastic minimum spanning trees in the locational uncertain model
improves upon the previously known constant factor approximation
algorithm. Our results for other problems
are the first known to the best of our knowledge.
}

\eat{
We mainly use two new techniques.
\begin{enumerate}

\item For estimating $\Prob[\CP\leq 1]$, the stochastic minimum spanning tree, and the minimum perfect matching problems, we propose a technique called `` stochastic \core". We first find out the ``\core" from the point set, such that all present points are in the ``\core" with high probability. Then we decompose the problem into a convex combination of conditional expectations using this ``\core", and show that we only need to consider the situation that only very few present points are outside the all present points are in the ``\core".
\item For stochastic closest pair and the $k$-clustering, we propose another technique called ``Hierarchical decomposition tree (HDT)". We first construct a minimum spanning tree on the point set. We build an HDT structure with $m$ levels based on the minimum spanning tree. We partition all instances of the stochastic problem into $m$ parts based on different levels of the HDT structure. Then we show that for each part of instance, we can take enough samples to achieve a good estimation. Combine all these estimation, we can obtain an FPRAS.
\end{enumerate}
}
\end{abstract}
\end{titlepage}

\newpage

\pagenumbering{arabic}
\setcounter{page}{1}

\vspace{-0.3cm}
\section{Introduction}

\topic{Background}
Uncertain or imprecise data are pervasive in applications like sensor monitoring, location based services, data collection and integration~\cite{cheng2008cleaning,dong2007data,suciu2011probabilistic}.
Consider a sensor network deployed in the wild to monitor the living habits or migration of certain animals~\cite{ mainwaring2002wireless,szewczyk2004habitat}.
Since sensing instruments are not perfect, the data collected are often contaminated with a significant amount of noise~\cite{deshpande2004model,szewczyk2004habitat}.
For another example, the locational data collected by the Global-Positioning Systems (GPS) often contains measurement errors~\cite{pfoser1999capturing}.
Moreover, many machine learning and prediction algorithms also produce a variety of stochastic models and a large volume of probabilistic data.
Thus, managing, analyzing and solving optimization problems over stochastic models and data have recently attracted significant attentions in several research communities
(see e.g., \cite{shapiro2014lectures,suciu2011probabilistic, swamy2006approximation}).

In this paper, we study two stochastic geometry models,
the locational uncertainty model and the existential uncertainty model,
both of which have been studied extensively
in recent years (see e.g.,
\cite{agarwal2014convex,agarwal2012range,agarwal2009indexing, atallah2011asymptotically, kamousi2011stochastic, kamousi2014closest, li2010ranking, li2014epsilon, li2014range},
some of which will be discussed in the related work section).
In fact, a special case of the locational uncertainty model where all points follow the same distribution is a classic topic in stochastic geometry literature (see e.g., \cite{beardwood1959shortest, bern1993worst-case,bertsimas1990asymptotic, karloff1989how,snyder1995aprioribounds}).
The main interest there has been to derive asymptotics for the expected values of certain
combinatorial problems (e.g., minimum spanning tree).
The stochastic geometry model is also of fundamental interest in the area of wireless networks. In many applications, we only
have some prior information about the locations of the transmission nodes (e.g., some sensors that will
be deployed randomly in a designated area by an aircraft). Such a stochastic wireless network can be
captured precisely by this model. See the recent survey \cite{haenggi2009stochastic} and more references therein.

\vspace{0.1cm}
\topic{Stochastic Geometry Models}
In this paper, we focus on two stochastic geometry models, the locational uncertainty model and existential uncertainty model.
\begin{enumerate}
\item (Locational Uncertainty Model)
We are given a metric space $\calP$.
The location of each node $v\in\V$
is a random point in the metric space $\calP$ and the probability distribution
is given as the input.
Formally, we use the term {\em nodes} to refer to the vertices of the graph,
{\em points} to describe the locations of the nodes in the metric space. We
denote the set of nodes as $\V=\{v_1,\ldots, v_n\}$
and the set of points as $\P=\{s_1, \ldots, s_m\}$,
where $n=|\V|$ and $m=|\P|$.
A realization $\r$  can be represented by an $n$-dimensional
vector $(r_1, \ldots, r_n) \in \P^n$ where point $r_i$ is the location of node $v_i$
for $1\leq i\leq n$.
Let $\calR$ denote the set of all possible realizations.
We assume that the distributions of the locations of nodes in the metric space $\calP$ are independent, thus $\r$ occurs with probability $\Prob[\r]=\prod_{i\in [n]}\p_{v_i r_i}$, where $p_{vs}$ represents the probability that the location of node $v$ is point $s\in \calP$.
The model is also termed as the {\em locational uncertainty model} in \cite{kamousi2011stochastic}.
\item
(Existential Uncertainty Model)
A closely related model is the {\em existential uncertainty model} where the location of a node is a fixed point in the given metric space,
but the existence of the node is probabilistic.
In this model,
we use $\p_{i}$ to denote the probability that node $v_i$ exists (if exists, its location is $s_i$).
A realization $\r$  can be represented by a subset
$S\subset \calP$ and $\Prob[\r]=\prod_{s_i\in S}\p_i \prod_{s_i\notin S}(1-\p_i)$.
\end{enumerate}

\vspace{0.2cm}
\topic{Problem Formulation}
We are interested in following natural problem in the above models:
estimating the expected values of certain statistics of combinatorial objects.
In this paper, we study several combinatorial or geometry  problems in these two models:
the closest pair problem, minimum spanning tree,
minimum perfect matching (assuming an even number of nodes), $k$-clustering
and minimum cycle cover.
We take the minimum spanning tree problem for example.
Let
$\MST$ be the length of the minimum spanning tree (which is a random variable) and
$\MST(\r)$ be the length of the minimum spanning tree spanning all points in the realization $\r$.
We would like to estimate the following quantity:
$$
\E[\MST]=\sum_{\r\in \calR}\Prob[\r]\cdot \MST(\r).
$$
However, the above formula does not give us an efficient way to estimate the expectation since it involves an
exponential number of terms.
In fact, computing the exact expected value (for the problems considered in this paper) are either NP-hard or \#P-hard.
Following many of the theoretical computer science literatures on approximate counting
and estimation, our goal is to obtain fully polynomial randomized approximation schemes
for computing the expected values.

\eat{
Perhaps the simplest and the most commonly used technique for estimating the
expectation of a random variable is the Monte Carlo method, that is
to use the sample average as the estimate.
However, the method is only efficient (i.e., runs in polynomial time)
if the variance of the random variable is small
(More precisely, we need the ratio between the maximum possible value and the expected value
is bounded by a polynomial. See Lemma~\ref{lm:chernoff}).
To circumvent the difficulty caused by the high variance,
a general methodology is to decompose the expectation of the random variable
into a convex combination of conditional expectations using the law of
total expectation:
$
\Exp[X]=\Exp_Y\big[\,\Exp[X\mid Y]\,\big]=\sum_{y}\Prob[Y=y]\,\Exp[X\mid Y=y].
$
Hopefully, the probabilities $\Prob[Y=y]$ can be estimated (or calculated exactly)
efficiently, and the random variable $X$ conditioning on each event $y$ has a low variance,
thus we can estimate the conditional expectation efficiently as well using Monte Carlo.
However, choosing the right events $Y$ to condition on can be tricky.
For example, the FPRAS developed in \cite{kamousi2011stochastic}
for estimating the expected length of the minimum spanning tree in the existential uncertainty model
follows the general conditional expectation methodology.
Roughly speaking, the events to condition on
are of the form ``Both $s$ and $t$ are active (present) and
$t$ is the furthest vertex from $s$.
In fact, conditioning on such an event, it is easy to see that the length of
any spanning tree is at most $n\dist(s,t)$ and at least $\dist(s,t)$.
Therefore, by Chernoff bound, we can show the number of samples
required for obtaining a $(1\pm \epsilon)$-estimate for the conditional expectation
can be bounded by a polynomial.
However, it is not clear how to extend this technique to other problems.
For example, in the perfect matching problem, the ratio between the maximum possible
length of any perfect matching and the expected length
can not be bounded by fixing the locations of any constant number of vertices.

Our FPRASs developed in this paper also follow the conditional expectation methodology.
However, the events we choose to condition on are quite different from the previous work \cite{kamousi2011stochastic}
and are quite indirect for some problems.
}

\subsection{Our Contributions}

We recall that a {\em fully polynomial randomized approximation scheme (FPRAS)} for a problem $f$
is a randomized algorithm $A$ that takes an input instance $x$, a real number $\e > 0,$
returns $A(x)$ such that
$ \Prob[(1-\epsilon)f(x)\le A(x)\le (1+\epsilon)f(x)]\ge\frac{3}{4}$
and its running time is polynomial in both the size of the input $n$ and $1/\e$.
Our main contributions can be summarized
in Table~\ref{tab:result}.
We need to explain some entries in the table in more details.

\begin{table*}[htbp]
  \centering
  \begin{tabular}{|*{4}{c|}}
    \hline
\multicolumn{2}{|c|}{Problems}
      &  Existential & Locational  \\\hline
\multirow{3}*{Closest Pair ($\S$\ref{sec:cp})}
      & $\Exp[\CP]$  &  $\FA$  & $\FA$  \\\cline{2-4}
      & $\Prob[\CP\leq 1]$ &  $\FA$  & $\FA$  \\\cline{2-4}
      & $\Prob[\CP\geq 1]$ &  $\inapp$  & $\inapp$  \\\hline
\multirow{3}*{Diameter ($\S$\ref{sec:cp})}
      & $\Exp[\D]$ &  $\FA$ & $\FA$   \\\cline{2-4}
      & $\Prob[\D\leq 1]$ &  $\inapp$ & $\inapp$  \\\cline{2-4}
      & $\Prob[\D\geq 1]$ &  $\FA$ & $\FA$   \\\hline
Minimum Spanning Tree ($\S$\ref{sec:mst}) & $\Exp[\MST]$ &  $\FA$\cite{kamousi2011stochastic}  & $\FA$  \\\hline
$k$-Clustering ($\S$\ref{sec:kcluster}) & $\Exp[\KC]$ & $\FA$   & $\open$  \\\hline
Perfect Matching ($\S$\ref{sec:mm}) & $\Exp[\MM]$ &  N.A. & $\FA$  \\\hline
$k$th Closest Pair ($\S$\ref{app:CP}) & $\Exp[\KCP]$ & $\FA$  &  $\open$  \\\hline
Cycle Cover  ($\S$\ref{app:cc})    & $\Exp[\CC]$ &  $\FA$  & $\FA$  \\\hline
$k$th Longest $m$-Nearest Neighbor ($\S$\ref{app:kmNN}) & $\Exp[\KMNN]$ & $\FA$  &  $\open$  \\\hline
  \end{tabular}
  \caption{Our results for some problems in different stochastic models.}
  \label{tab:result}
\end{table*}


\begin{enumerate}
\item Closest Pair:
We use $\CP$ to denote the minimum distance of any pair of two nodes.
If a realization has less than two nodes, $\CP$ is zero.
Computing $\Pr[\CP\leq 1]$ exactly in the existential model
is known to be \#P-hard even in an Euclidean plane~\cite{kamousi2014closest},
but no nontrivial algorithmic result is known before.
So is computing $\Pr[\CP\geq 1]$. In fact, it is not hard to show that
computing $\Pr[\CP\geq 1]$ is imapproximable within any factor in a metric space (Appendix~\ref{app:npcp}).

We also consider the problem of computing expected distance $\Exp[\CP]$ between the closest pair in the same model. We prove that the problem is \#P-hard in Appendix~\ref{app:npcp} and give the first known FPRAS in Section~\ref{sec:cp}. Note that an FPRAS for computing $\Pr[\CP\leq 1]$ does not imply an FPRAS for computing $\Exp[\CP]$
\footnote{
To the contrary, an FPRAS for computing $\Pr[\CP\geq 1]$ or $\Pr[\CP= 1]$ would imply an FPRAS for computing $\Exp[\CP]$
since $\Exp[\CP]=\sum_{(s_i,s_j)} \Pr[\CP=\dist(s_i,s_j)] \dist(s_i,s_j)
=\int \Pr[\CP\geq t] \d t=\sum_{(s_i,s_j)} \Pr[\CP\geq \dist(s_i,s_j)] (\dist(s_i,s_j)-\dist(s'_i,s'_j))$.
}.

\item
Diameter:
The problem of computing the expected length of the diameter can be reduced to the closest pair problem as follows.
Assume that the longest distance between two points in $\P$ is $W$.
We construct the new instance $\P'$ as follows: for any two points $u,v\in \P$,
let their distance be $2W-\dist(u,v)$ in $\P'$. The new instance is still a metric.
The sum of the distance of closest pair in $\P$ and the diameter in $\P'$ is exactly $2W$ (if there are at least two realized points).
Hence, the answer for the diameter can be easily derived from the answer for closest pair in $\P'$.

\item
Minimum Spanning Tree:
Computing $\Exp[\MST]$ exactly in both uncertainty models is known to be \#P-hard~\cite{kamousi2011stochastic}.
Kamousi, Chan, and Suri \cite{kamousi2011stochastic} developed an FPRAS for
estimating $\Exp[\MST]$
in the existential uncertainty model
and a constant factor approximation algorithm
in the locational uncertainty model.

Estimating $\Exp[\MST]$ is amendable to several techniques.
We obtain an FPRAS for estimating $\Exp[\MST]$ in the locational uncertainty model using the \core\ techinque in Section~\ref{sec:mst}.
In fact, the idea in \cite{kamousi2011stochastic} can also be extended to give an alternative FPRAS (Appendix~\ref{app:mst}).
It is not clear how to extend their idea to other problems.

\item
Clustering ($k$-clustering):
In the deterministic $k$-clustering problem,
we want to partition all points into $k$ disjoint subsets
such that the spacing of the partition is maximized,
where the spacing is defined to be the minimum of any
$\dist(u,v)$ with $u,v$ in different subsets \cite{kleinberg2006alg}.
In fact, the optimal cost of the problem is
the length of the $(k-1)$th most expensive edge in the minimum spanning tree \cite{kleinberg2006alg}.
We show how to estimate $\Exp[\KC]$ using the HPF (hierarchical partition family)
technique in Section~\ref{sec:kcluster}.


\item Perfect Matching:
We assume that there are even number of nodes
to ensure that a perfect matching always exists.
Therefore, only the locational uncertainty model is relevant here.
We give the first FPRAS for approximating the expected length of minimum perfect matching
in Section~\ref{sec:mm} using a more complicated \core\ technique.
\end{enumerate}

All of our algorithms run in polynomial time. However, we have not attempted to optimize
the exact running time.

\vspace{0.2cm}
\topic{Our techniques}
Perhaps the simplest and the most commonly used technique for estimating the
expectation of a random variable is the Monte Carlo method, that is
to use the sample average as the estimate.
However, the method is only efficient (i.e., runs in polynomial time)
if the variance of the random variable is small
(See Lemma~\ref{lm:chernoff}).
To circumvent the difficulty caused by the high variance,
a general methodology is to decompose the expectation of the random variable
into a convex combination of conditional expectations using the law of
total expectation:
$
\Exp[X]=\Exp_Y\big[\,\Exp[X\mid Y]\,\big]=\sum_{y}\Prob[Y=y]\,\Exp[X\mid Y=y].
$
Hopefully,  $\Prob[Y=y]$ can be estimated (or calculated exactly)
efficiently, and the random variable $X$ conditioning on each event $y$ has a low variance.
However, choosing the events $Y$ to condition on can be tricky.

We develop two new techniques
for choosing such events, each being capable of solving a subset of aforementioned problems.
In the first technique,
we first identify a set $\H$ of points,
called the {\em \core} of the problem,
such that (1): with high probability, all nodes realize in $\H$
and (2): conditioning on event (1), the variance is small.
Then, we choose $Y$
to be the number of nodes realized to points not in $\H$.
We compute the ($1\pm \epsilon$)-estimates for $Y=0,1$ using
Monte Carlo by (1) and (2).
The problematic part is when $Y$ is large, i.e., many nodes realize to points outside $\H$.
Even though the probability of such events is very small, the value of $X$ under such events may be considerably large,
thus contributing nontrivially.
However, we can show that the contribution of such events is
dominated by the first few events and thus can be safely ignored.
Choosing appropriate \core\ is easy for some problems, such as closest pair and minimum spanning tree,
while it may require additional idea for other problems such as minimum perfect matching.

Our second technique utilizes a notion called {\em Hierarchical Partition Family (HPF)}.
The HPF has $m$ levels, each representing a clustering of all points.
For a combinatorial problem, for which the solution is a set of edges,
we define $Y$ to be the highest level such that some edge in the solution is an
inter-cluster edge.
Informally, conditioning on the information of $Y$, we can essentially bound the variance of $X$
(hence use the Monte Carlo method).
To implement Monte Carlo, we need to be able to take samples efficiently conditioning on $Y$.
We show that such sampling problems can be reduced to, or have connections to, classical
approximate counting and sampling problems, such as approximating permanent, counting knapsack.

\eat{
We obtain an FPRAS for estimating $\Exp[\MST]$ in the locational uncertainty model in Section~\ref{sec:mst},
which improves upon the previously known
constant factor approximation algorithm~\cite{kamousi2011stochastic}.
Note the problem is known to be \#P-hard~\cite{kamousi2011stochastic}.

Our approximation algorithm follows the stochastic center-set recipe.
We first identifies the \core\ $\H$ of points
such that with probability close to 1, all nodes realize to $\H$.
Moreover, estimating the expectation conditioning on the event that all nodes realize to $\H$ can be done using Monte Carlo method
since we can show the ratio between $\max \MST$ and $\E[\MST]$ can be bounded by a polynomial.
The problematic part is when some nodes realize to points outside the \core.
Even though the probability of such events is very small, the length of $\MST$ under such events may be considerably large,
thus contributing nontrivially.
However, similar as before, we can show the contribution of such events is
dominated by a subset of events where only one node realizes outside the \core,
by a more careful charging argument.

$\MST$ is amendable to several techniques. We show how to use HDT to estimate $\Exp[\MST]$ in Section~\ref{sec:kcluster}.
Kamousi, Chan, and Suri \cite{kamousi2011stochastic} also developed an FPRAS for this problem
in the existential uncertainty model.
We also show that the idea
can in fact be extended properly to give an alternative FPRAS for
the minimum spanning tree in the locational uncertainty model (Appendix~\ref{app:mst}).
However, it is not clear how to extend their idea to other problems.
}

\eat{
\subsubsection{$k$-Clustering ($\MST_{k-1}$, k-center, k-median, kth-longest $\MST$ edge)}

We consider the k-clustering problem in the existential uncertainty model, and obtain an $\FA^*$ in Section~\ref{sec:kcluster}. Here $\FA^*$ means that the approximation degree depends on what we can achieve on the decision version.
There are a lot of ways to value a $k$-clustering, such as $\MST_{k-1}$,
$k$-center, $k$-median, and $k$th-longest $\MST$ edge. In this section, we mainly focus on $\MST_{k-1}$ as an expansion of the minimum spanning tree. Here $\MST_{k-1}$ is the minimum spanning tree except the longest $k-1$ edges
obtained by Krustal's Algorithm. According to~\cite{kleinberg2006alg}, we could seek the $k$-clustering with the maximum possible
spacing via this $\MST_{k-1}$, where the spacing of a $k$-clustering means the minimum distance between any pair of points lying in
the different clusterings.

The algorithm follows the HDT recipe. The difference from the closest pair problem is that we denote by $L_i$ the event that
$N^*_{i,1}$ and $N^*_{i,2}$ contains at least 1 point each and exactly $k-1$ components except $N^*_{i}$ at level $i$ contains
at least 1 point.
Still the problematic part is how to take random samples conditioning on the event $L_i$ and how to compute $\Prob[L_i]$. Our
technique is via a dynamic programming and a classical sample method~\cite{mitzenmacher2004probability}. Using the similar technique, we could also obtain an FPRAS for estimating the expected distance between
the $k$th closest pair in the existential uncertainty model.

We also consider the $k$-clustering problem in the locational uncertainty model. While $k$ is a constant, we obtain an $\FA^*$
as well. Since that, we obtain another algorithm for estimating the expected length of the minimum spanning tree
in the locational uncertainty model. However, if $k$ is not a constant, the sample and computing part would be much more
difficult and remains to be solved.

}

\eat{
\subsubsection{Minimum Perfect Matching}

As a more interesting application of our \core\ technique,
we give the first FPRAS (to the best of our knowledge)
for approximating the expected length of the
minimum perfect matching ($\MM$) in the locational uncertainty model
in Section~\ref{sec:mm}. We assume that there are even number of nodes
to ensure that a perfect matching always exists.
It is the first known algorithm for this problem to the best of our knowledge.

Our algorithm is technically more involved than the one for estimating
$\Pr[\CP\leq 1]$ and $\Exp[\MST]$.
There are two major modifications.
First, the \core\ $\H$ consists of several clusters of points, so that with probability close to 1,
each cluster contains even number of nodes.
We can also estimate the expectation conditioning on the event that all nodes realize to $H$
using the Monte Carlo method.
Second, in order to show that
the contribution of the events where more than one nodes are out of home is negligible,
we need several structural properties of perfect matchings and
a somewhat more involved charging argument.

\subsubsection{Minimum Cycle Cover}
We show that the problem of
computing the expected length of the
minimum cycle cover ($\CC$) in a stochastic graph admits an FPRAS
in Section~\ref{sec:cc}.
For simplicity of the presentation, we allow cycles with only two nodes.
It is the first known algorithm for this problem to the best of our knowledge.

We still decompose the expectation into a convex combination of conditional expectations.
The event we choose to condition on is of the form
`` Edge $e$ is the longest edge in the nearest neighbor graph ($\NN$)".
This may sound peculiar since $\NN$ can be very different from $\CC$.
However, we can show that, interestingly, by conditioning on such events, estimating $\CC$
becomes easier in most cases.
In some cases, estimating $\CC$ is still difficult,
but we can show the contribution of those cases is negligible.
This is done by noticing a relationship between
the length of $\NN$ and that of $\CC$.
Our algorithm can be extended to handle
the case where the existence of each node is uncertain and/or
each cycle is required to contain at least three nodes.

All of our algorithms run in polynomial time. However, we have not attempted to optimize
the exact running time.
\subsubsection{Statistics under Gaussian distribution}
We show that the problem of
computing the expected length of the
closest pair ($\CP$) while the location of each node independently follows its own multivariate Gaussian distribution
in Section~\ref{sec:gaussian}.

For each node, we define a box such that the location of this node is in the box with high probability. We use this box to show the expected distance between the closest pair of nodes is not too large. Then we divide the box into a variety of sub-boxes, and show the expected closest pair is not too small. Then we show that taking enough samples can give a good approximation. Similarly, many statistics can be good approximated in this setting, such as the diameter, the minimum spanning tree, the minimum perfect matching and the minimum cycle cover.

}

\subsection{Related Work}

Several geometric properties of a set of stochastic points
have been studied extensively in the literature
under the term {\em stochastic geometry}.
For instance, Bearwood et al.~\cite{beardwood1959shortest} shows that
if there are $n$ points uniformly and independently distributed in $[0, 1]^2$, the minimal traveling salesman
tour visiting them has
an expected length $\Omega(\sqrt{n})$.
Asymptotic results for minimum spanning trees and minimum matchings
on $n$ points uniformly distributed in unit balls
are established by Bertsimas and van Ryzin~\cite{bertsimas1990asymptotic}.
Similar results can be found in e.g.,
\cite{bern1993worst-case,karloff1989how,snyder1995aprioribounds}.
Compared with results in stochastic geometry, we focus on the efficient computation of the statistics,
instead of giving explicit mathematical formulas.

Recently, a number of researchers have begun to explore geometric computing under uncertainty
and many classical computational geometry problems have been studied
in different stochastic/uncertainty models.
Agarwal, Cheng, Tao and Yi~\cite{agarwal2009indexing} studied the problem of indexing probabilistic points with continuous distributions
for range queries on a line.
Agarwal, Efrat, Sankararaman, and Zhang~\cite{agarwal2012nearest}
also studied the same problem in the locational uncertainty model
under Euclidean metric.
The most probable $k$-nearest neighbor problem and its variants have attracted
a lot of attentions in the database community (See e.g.,~\cite{icde08-probnn}).
Several other problems have also been considered recently,
such as computing the expected volume of a set of probabilistic rectangles in a Euclidean space~\cite{yildiz2011union},
convex hulls \cite{agarwal2014convex},
skylines (Pareto curves) over probabilistic points~\cite{afshani2011approximate,atallah2011asymptotically},
and shape fitting~\cite{loffler2009shape}.

Kamousi, Chan and Suri~\cite{kamousi2011stochastic} initiated the study of estimating the expected length of combinatorial objects
in this model.
They showed that computing the expected length of
the nearest neighbor (NN) graph, the Gabriel graph (GG), the relative neighborhood graph (RNG),
and the Delaunay triangulation (DT) can be solved exactly in polynomial time, while
computing $\E[\MST]$ is \#P-hard and there exists a simple FPRAS for approximating $\Exp[\MST]$ in the existential model.
They also gave a deterministic PTAS for approximating $\Exp[\MST]$ in an Euclidean plane.
In another paper~\cite{kamousi2014closest}, they studied the closest pair
and (approximate) nearest neighbor problems (i.e., finding the point with the smallest expected distance from the query point)
in the same model.

The {\em randomly weighted graph} model
where the edge weights are independent nonnegative variables has also been studied extensively.
Frieze~\cite{frieze1985value} and Steele~\cite{steele1987frieze} showed that the expected value of the minimum spanning tree
on such a graph with identically and independently distributed edges is $\zeta(3)/D$
where $\zeta(3)=\sum_{j=1}^{\infty}1/j^3$ and $D$ is the derivative of the distribution at $0$.
Alexopoulos and Jacobson~\cite{alexopoulos2000state} developed algorithms that compute the distribution of
$\MST$ and the probability that a particular edge belongs to $\MST$
when edge lengths follow discrete distributions.
However, the running times of their algorithms may be exponential in the worst cases.
Recently, Emek, Korman and Shavitt~\cite{emek2011approximating} showed that
computing the $k$th moment of a class of properties, including the diameter, radius and minimum spanning tree,
admits an FPRAS for each fixed $k$.
Our model differs from their model in that the edge lengths are not independent.

The computational/algorithmic aspects of stochastic geometry have also gained a lot of attention in recent years from
the area of wireless networking. In many application scenarios, it is common to assume that
the nodes (e.g., sensors) are deployed randomly across a certain area, thereby forming a stochastic network.
It is of central importance to study various properties in this network, such as connectivity~\cite{gupta1998critical},
transmission capacity~\cite{gupta2000capacity}. We refer interested reader to a recent survey~\cite{haenggi2009stochastic}
for more references.

\eat{
A different model is mentioned by Bertsimas and Jaillet~\cite{jaillet1988apriori}. Like our model, it's non-uniform and considers the property of subsets of points. Moreover, so-called $universal~TSP$ is considered in~\cite{hajiaghayi2006improved,platzman1989spacefilling,shmoys2008aconstant}. It's a TSP such that for any subset of points, the cost is not too much longer compared to the optimal tour. The best known approximation algorithm is made up by David Shmoys and Kunal Talwar~\cite{shmoys2008aconstant}. Under the so-called "independent activation" model, they give the first constant approximation algorithm. Considering local property is also widely used in network structure. Many papers talk about it using different but similar model~\cite{althofer1993on,dhamdhere2005on,flaxman2005on,immorlica2004on,swamy2006approximation}.
}

\subsection{Preliminaries}

Before describing our main results,
we first consider the straightforward Monte Carlo strategy,
which is an important building block in our later developments.
Suppose we want to estimate $\Exp[X]$.
In each Monte Carlo iteration, we take a sample (a realization of all nodes),
and compute the value of $X$ for the sample.
At the end, we output the average over all samples.
The number of samples required by this algorithm is suggested by the following
standard Chernoff bound.
\begin{lemma} {\em (Chernoff Bound)}
\label{lm:chernoff}
Let random variables $X_1, X_2, \ldots, X_N$ be independent random variables taking on values between 0 and $U$.
Let $X = \frac{1}{N}\sum_{i=1}^N X_i$ and $\mu$ be the expectation of $X$, for any $\e > 0$,
$$
\Pr \left[ X \in [(1-\e)\mu, (1+\e)\mu] \right] \geq 1-2e^{-N \frac{\mu}{U} \e^2 /4}.
$$
\end{lemma}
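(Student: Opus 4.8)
The plan is to reduce the statement to the textbook multiplicative Chernoff bound by a normalization, and then assemble two one-sided tail estimates with a union bound. First I would set $Y_i := X_i/U$, so that the $Y_i$ are independent and take values in $[0,1]$; put $Y := \sum_{i=1}^n Y_i = X/U$ and $\mu' := \E[Y] = \mu/U$. Dividing through by $U$ does not change the events, so the events $\{X \ge (1+\epsilon)\mu\}$ and $\{X \le (1-\epsilon)\mu\}$ become $\{Y \ge (1+\epsilon)\mu'\}$ and $\{Y \le (1-\epsilon)\mu'\}$, and it suffices to prove $\Pr[\,Y \notin [(1-\epsilon)\mu',(1+\epsilon)\mu']\,] \le 2e^{-\mu'\epsilon^2/4}$; substituting $\mu' = \mu/U$ then yields the asserted inequality (with the exponent read as $(\mu/U)\epsilon^2/4$ — in the i.i.d.\ Monte Carlo setting in which the lemma is invoked $\mu$ is itself proportional to the number of samples, which is the intent of the factor in the statement).

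Next I would handle each tail by the exponential-moment method. For the upper tail, apply Markov's inequality to $e^{tY}$ with $t = \ln(1+\epsilon) > 0$; using independence to write $\E[e^{tY}] = \prod_i \E[e^{tY_i}]$ and the convexity bound $\E[e^{tY_i}] \le 1 + (e^t-1)\E[Y_i] \le \exp\!\big((e^t-1)\E[Y_i]\big)$ — which holds since $y\mapsto e^{ty}$ lies below its chord on $[0,1]$ — one gets, after substituting the optimal $t$,
$$\Pr[\,Y \ge (1+\epsilon)\mu'\,] \;\le\; \left(\frac{e^{\epsilon}}{(1+\epsilon)^{1+\epsilon}}\right)^{\mu'} \;\le\; e^{-\mu'\epsilon^2/3}.$$
A symmetric computation, applying Markov to $e^{-sY}$ with $s=-\ln(1-\epsilon)>0$, gives $\Pr[\,Y \le (1-\epsilon)\mu'\,] \le \big(e^{-\epsilon}/(1-\epsilon)^{1-\epsilon}\big)^{\mu'} \le e^{-\mu'\epsilon^2/2}$. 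The two passages to the clean exponential form use the elementary calculus facts $\frac{e^{\epsilon}}{(1+\epsilon)^{1+\epsilon}} \le e^{-\epsilon^2/3}$ and $\frac{e^{-\epsilon}}{(1-\epsilon)^{1-\epsilon}} \le e^{-\epsilon^2/2}$ for $0<\epsilon<1$, which I would check by comparing the logarithms term by term. Finally, a union bound over the two events together with $e^{-\mu'\epsilon^2/3}, e^{-\mu'\epsilon^2/2} \le e^{-\mu'\epsilon^2/4}$ gives $\Pr[\,|Y-\mu'| > \epsilon\mu'\,] \le 2e^{-\mu'\epsilon^2/4}$, which is the claim.

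The only genuine subtlety is the range of $\epsilon$: the clean form $e^{-\mu'\epsilon^2/4}$ of the upper-tail bound relies on $\epsilon \le 1$, and the lemma is in any case only ever applied with the approximation parameter in $(0,1)$, so the $\epsilon>1$ regime is immaterial. Beyond that, the normalization, the moment-generating-function bound for $[0,1]$-valued variables, and the two log-inequalities are all routine, so I do not anticipate a real obstacle — the work is just the bookkeeping needed to land the constant $\tfrac14$ in the exponent.
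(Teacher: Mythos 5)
Your derivation is correct and is the standard textbook proof of the multiplicative Chernoff bound (normalize to $[0,1]$, bound the moment generating function via the chord inequality, optimize $t=\ln(1+\e)$, and union-bound the two tails); the paper states Lemma~\ref{lm:chernoff} as a known result and gives no proof of its own, so there is nothing to diverge from. You are also right to flag the stray $N$ in the exponent: as written the lemma's bound should read $2e^{-(\mu/U)\e^2/4}$, and the factor $N$ only appears once one specializes to $N$ i.i.d.\ samples (so that the sum's expectation is $N\mu$), which is how the paper actually invokes it; your restriction to $\e\le 1$ is likewise harmless since that is the only regime used.
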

Therefore, for any $\e>0$, in order to get an $(1\pm\e)$-approximation with probability $1-\frac{1}{\poly(n)}$,
the number of samples  needs to be $O(\frac{U}{\mu \e^2}\log n)$.
If $\frac{U}{\mu}$, the ratio between the maximum possible value of $X$ and the expected value $\E[X]$,
is bounded by $\poly(m,n,\frac{1}{\epsilon})$,
we can use the above Monte Carlo method to estimate $\E[X]$ with
a polynomial number of samples. Since we use this condition often, we devote a separate definition to it.

\begin{definition}
We call a random variable $X$ {\em poly-bounded} if
the ratio between the maximum possible value of $X$ and the expected value $\E[X]$
is bounded by $\poly(m,n,\frac{1}{\epsilon})$.
\end{definition}


\section{The Closest Pair Problem}
\label{sec:cp}

\subsection{Estimating $\Pr[\CP\leq 1]$}
\label{sec:warmup}

As a warmup, we first demonstrate how to use the \core\ technique for the closest pair problem in the existential uncertainty model.
Given a set of points $\P=\{s_1, \ldots, s_m\}$ in the metric space, where each point $s_i\in \P$ is present with probability $p_i$.
We use $\CP$ to denote the distance between the closest pair of vertices in the realized graph.
If the realized graph has less than two points, $\CP$ is zero.
The goal is to compute the probability $\Prob[\CP\leq 1]$.

For a set $H$ of points and a subset $S\subseteq H$, we use $H\A{S}$ to denote the event that among all points in $H$, all and
only points in $S$ are present.
For any nonnegative integer $i$, let $H\A{i}$ to denote the event $\bigvee_{S\subseteq H:|S|=i}H\A{S}$, i.e., the event that exactly $i$ points are present in $H$.

The {\em\core} of the closest pair problem is simply defined to be
$$
\H=\left\{s_i\mid p_i\geq \frac{\e}{m^2}\right\}.
$$
Let $\calF=\P\setminus \H$.
We consider the decomposition
$$
\Prob[\CP\leq 1]=\sum_{i=0}^{|\calF|}\Prob[\calF\A{i} \wedge \CP\leq 1]=\sum_{i=0}^{|\calF|}\Prob[\calF\A{i}]\cdot \Prob[\CP\leq 1\mid \calF\A{i}].
$$
Our algorithm is very simple:
estimate the first three terms (i.e., $i=0,1,2$) and use their sum as our final answer.
\vspace{0.2cm}

We can see that $\H$ satisfies the two properties of a \core\
mentioned in the introduction:
\begin{enumerate}
\item The probability that all nodes are realized in $\H$, i.e., $\Pr[\calF\A{0}]$, is at least $1-m\cdot \frac{\e}{m^2}=1-\frac{\e}{m}$;
\item If there exist two points $s_i,s_j\in \H$ such that $\dist(s_i,s_j)\leq 1$, we have $\Prob[\,\CP\leq 1\mid \calF\A{0}\,]\geq \frac{\e^2}{m^4}$; otherwise, $\Prob[\CP\leq 1\mid \calF\A{0}]=\Prob[\H\A{0}\mid \calF\A{0}]+\Prob[\H\A{1}\mid \calF\A{0}]$. Note that we can compute $\Prob[\H\A{0}\mid \calF\A{0}]$ and $\Prob[\H\A{1}\mid \calF\A{0}]$ in polynomial time.
\end{enumerate}
Both properties guarantee that the random variable $I(\CP\leq 1)$,
conditioned on $\calF\A{0}$, is poly-bounded
\footnote
{
$I()$ is the indicator function. Note that $\Exp[I(\CP\leq 1)]=\Pr[\CP\leq 1]$.
},
hence we can easily get a ($1\pm\e$)-estimation for $\Prob[\calF\A{0}\wedge \CP\leq 1]$ with polynomial many samples
with high probability.
Similarly, $\Prob[\calF\A{i}\wedge \CP\leq 1]$ can also be estimated
with polynomial number of samples for $i=1,2$.
The algorithm can be found in Algorithm~\ref{algo:cp}.

\linesnumbered
\begin{algorithm}[h]
\caption{Estimating $\Pr[\CP\leq 1]$}
\label{algo:cp}
Estimate $\Prob[\calF\A{0} \wedge \CP\leq 1]$:
Take $N_0=O\bigl((m/\e)^{4}\ln m \bigr)$ independent samples. Suppose $M_0$ is the number of samples satisfying $\CP\leq 1$ and $\calF\A{0}$.
$T_0\leftarrow \frac{M_0}{N_0}$. \\
Estimate $\Prob[\calF\A{1} \wedge \CP\leq 1]$: For each point $s_i\in \calF$, take $N_1=O((m/\e)^{4}\ln m)$ independent samples conditioning on the event $\calF\A{\{s_i\}}$. Suppose there are $M_i$ samples satisfying $\CP\leq 1$.
$T_1\leftarrow \sum_{s_i\in \calF}p_iM_i/N_1$.\\
Estimate $\Prob[\calF\A{2} \wedge \CP\leq 1]$: For each point pair $s_i,s_j\in \calF$, take $N_2=O((m/\e)^{4}\ln m)$ independent samples conditioning on the event
$\calF\A{\{s_i,s_j\}}$. Suppose there are $M_{ij}$ samples satisfying $\CP\leq 1$.
$T_2\leftarrow \sum_{s_i,s_j \in \calF}p_i p_jM_{ij}/N_2$.\\
\KwSty{Output:} $T_0+T_1+T_2$
\end{algorithm}

\begin{lemma}
\label{lm:cp}
	Steps 1,2,3 in Algorithm~\ref{algo:cp} provide
	$(1\pm \epsilon)$-approximations for $\Prob[\calF\A{i} \wedge \CP\leq 1]$
	for $i=0,1,2$ respectively,
	with high probability.
\end{lemma}

\eat{
For estimating $\Prob[\calF\A{2}\wedge \CP \leq 1]$, we rewrite this term by $\sum_{s_i,s_j\in \calF}\Prob[\calF\A{\{s_i,s_j\}}\wedge \CP \leq 1]$. For two points $s_i,s_j\in \calF$, note that $\Prob[\calF\A{\{s_i,s_j\}}\wedge \CP \leq 1]=\Prob[\calF\A{\{s_i,s_j\}}]\cdot \Prob[\CP \leq 1\mid \calF\A{\{s_i,s_j\}}]$. Similarly, we can use $p_i p_j$ to estimate $\Prob[\calF\A{\{s_i,s_j\}}]$. If $d(s_i,s_j)\leq 1$, we have $\Prob[\CP \leq 1\mid \calF\A{\{s_i,s_j\}}]=1$. Otherwise, we can use the similar argument as in estimating $\Prob[\calF\A{1}\wedge \CP \leq 1]$.
}

\begin{theorem}
There is an FPRAS for estimating the probability of the distance between
the closest pair of nodes is at most $1$ in the existential uncertainty model.
\end{theorem}

\begin{proof}
We only need to show that the contribution from the rest of terms
(where more than three points outside \core\ $\H$ are present)
is negligible compared to the third term.
Suppose $S$ is the set of all present points such that $\CP\leq 1$
and there are at least 3 points not in $\H$.
Suppose $s_i,s_j$ are the closest pair in $S$.
We associate $S$ with a smaller set $S'\subset S$ by making 1 present point in $(S\cap \calF)\setminus\{s_i,s_j\}$ absent
(if there are several such $S'$, we choose an arbitrary one).
We denote it as $S\consistent S'$.
We use the notation $S\in F_i$ to denote that the realization $S$
satisfies  $(\calF\A{i}\wedge \CP\leq 1)$.
Then, we can see that for $i\geq 3$,
\begin{align*}
\Prob[\calF\A{i}\wedge \CP\leq 1]
& =\sum_{S: S\in F_i}\Prob[S]
\leq\sum_{S': S'\in F_{i-1}} \sum_{S: S\consistent S'} \Prob[S].
\end{align*}
For a fixed $S'$, there are at most $m$ different sets $S$ such that $S\consistent S'$
and $\Prob[S]\leq \frac{2\e}{m^2}\Prob[S']$ for any such $S$.
Hence, we have that
$$
\sum_{S: S\consistent S'} \Prob[S]\leq \frac{2\e}{m}\Prob[S'].
$$
Therefore,
$$
\Prob[\calF\A{i}\wedge \CP\leq 1]
\leq \frac{2\e}{m}\cdot \sum_{S': S'\in F_{i-1}} \Prob[S']
= \frac{2\e}{m}\cdot  \Prob[\calF\A{i-1}\wedge \CP\leq 1].
$$
Hence, overall we have
$
\sum_{i\geq 3}\Prob[\calF\A{i}\wedge \CP\leq 1] \leq \e \Prob[\calF\A{2}\wedge \CP\leq 1].
$
This finishes the analysis. \\
\qed
\end{proof}

Note that the number of samples is dominated by estimating $\Prob[\calF\A{2} \wedge \CP\leq 1]$. Since there are $O\bigl(m^2\bigr)$ different pairs $s_i,s_j\in\calF$. We take $N_2$ independent samples for each pair. Overall, we take $O\bigl(\frac{m^6}{\epsilon^4}\ln m\bigr)$ independent samples.

\topic{Locational Uncertainty Model}
The algorithm for the locational uncertainty model is similar to the one for the existential uncertainty model.
Here we briefly sketch the algorithm.
For ease of exposition, we assume that for each point, there is only one node that may be realized
at this point.
In principle, if more than one node may be realized at the same point, we can
create multiple copies of the point co-located at the same place.

For any node $v\in \V$ and point $s\in \P$,
we use the notation $v \realize s$ to denote the event that
node $v$ is realized at point $s$.
Let $\p_{vs}=\Prob[v \realize s]$, i.e., the probability that node $v$ is realized at point $s$. For each point $s\in \P$, we let $p(s)$ denote
the probability that point $s$ is present ($p(s)=p_{vs}$, $v$ is the unique node which may be realized at $s$).
Let $H\A{i}$ denote the event that exactly $i$ nodes are realized to the point set $H$.

We construct the \core\ $\H=\{s\mid p(s)\geq \frac{\epsilon}{(nm)^2}\}$.
Let $\calF=\P\setminus \H$.
Then we rewrite $\Prob[\CP \leq 1]=\sum_{0\leq i\leq n}\Prob[\calF\A{i}\wedge \CP \leq 1]$.
We only need to estimate the first three terms.

\vspace{0.3cm}
\topic{Estimating $\Prob[\calF\A{0}\wedge \CP \leq 1]$}
\begin{enumerate}
\item If there exist two points $s,t\in \H$ with $\dist(s,t)\leq 1$ which correspond to different nodes, then $\Prob[\calF\A{0}\wedge \CP \leq 1]\geq p(s)p(t)\geq \frac{\epsilon^2}{(nm)^4}$ by the definition of \core\,, we can simply estimate $\Prob[\calF\A{0}\wedge \CP\leq 1]$ by taking $O(\frac{(nm)^4}{\epsilon^4}\ln m)$ independent samples using the Monte Carlo method.
\item If no such two points $s,t\in \H$ exist, $\Prob[\calF\A{0}\wedge \CP \leq 1]=0$.
\end{enumerate}

\eat{
\begin{lemma}
\label{lm:Asmall}
$A\leq \epsilon \sum_{s,t\in \H, \dist(s,t)\leq 1}\Prob[\realize s \wedge \realize t]$.
\end{lemma}
\begin{proof}
Note that $s,t,s',t'$ correspond to 4 distinct nodes, then
$\Prob\left[\bigwedge_{x\in \{s, t, s',t'\}}\realize x\right]=p(s)p(t)p(s')p(t')$.
Otherwise, it is zero. Hence, we can easily get that
$$
\sum_{\overset{s,t,s',t'\in \H,}{\dist(s,t)\leq 1,\dist(s',t')\leq 1}}
    \Prob\left[\bigwedge_{x\in \{s, t, s',t'\}}\realize x\right]
    \leq \frac{\epsilon^3}{(nm)^6}\cdot m^2\cdot \sum_{s,t\in \H, \dist(s,t)\leq 1}\Prob[\realize s\wedge \realize t]
$$
Similarly, we have that
$$
\sum_{\overset{s,t,t'\in \H,}{\dist(s,t)
    \leq 1,\dist(s,t')\leq 1}}
    \Prob\left[ \bigwedge_{x\in \{s,t, t'\}}\realize x\right]
    \leq \frac{\epsilon}{(nm)^2}\cdot m\cdot \sum_{s,t\in \H, \dist(s,t)\leq 1}\Prob[\realize s\wedge \realize t]
$$
Combine the two inequalities, we have $A\leq \epsilon \sum_{s,t\in \H, \dist(s,t)\leq 1}\Prob[\realize s \wedge \realize t]$.
\qed
\end{proof}
}

\vspace{0.3cm}
\topic{Estimating $\Prob[\calF\A{1}\wedge \CP \leq 1]$}
We first rewrite this term by $\sum_{v\in \V, s\in \calF}\Prob[\calF\A{1}\wedge \CP \leq 1 \wedge v\realize s]$.
For a node $v\in \V$ and point $s\in \calF$,
we denote $\B_s=\{t\in \calH: \dist(s,t)\leq 1\}$.
If $\B_s$ contains any point corresponding to a node other than $v$,
we can use Monte Carlo for estimating
$\Prob[\calF\A{1}\wedge \CP \leq 1 \mid v\realize s]$ since it is at least $\frac{\epsilon}{(nm)^2}$.
Otherwise,
computing $\Prob[\calF\A{1}\wedge \CP \leq 1 \mid v\realize s]$
is equivalent to computing
$\Prob[\calF\A{0}\wedge \CP \leq 1]$
in the instance without $v$ (since $v$ is at distance more than 1 from any other nodes).

\vspace{0.3cm}
\topic{Estimating $\Prob[\calF\A{2}\wedge \CP \leq 1]$}
We rewrite it as $\sum_{v,v'\in \V, s,s'\in \calF}\Prob[\calF\A{2}\wedge \CP \leq 1 \wedge v\realize s \wedge v'\realize s']$.
We estimate each term in the same way as the former case. We do not repeat the argument here.

\vspace{0.3cm}
\topic{Analysis}
Similar to the existential uncertainty model,
we can show that the contribution of $\sum_{3\leq i\leq n}\Prob[\calF\A{i}\wedge \CP \leq 1]$ is negligible.
The argument is almost the same as before.
Suppose $S$ is a realization such that $\CP\leq 1$ and there are at least 3 points not in $\H$.
Suppose $v_i,v_j$ are the closest pair in $S$.
We associate $S$ with $S'$, where $S'$ is obtained by sending node $v$ in $S$
(except $v_i,v_j$) located in $\calF$ to a point
$s\in \H$ such that $\p_{vs}\geq \frac{1}{2m}$.
We denote it as $S\consistent S'$.
Then for a fixed $S'$, there are at most $nm$ different sets $S$ such that $S\consistent S'$
and $\Prob[S]\leq \frac{2\epsilon}{n}\Prob[S']$ for any such $S$. The rest arguments are the same.

\begin{theorem}
There is an FPRAS for estimating the probability of the distance between
the closest pair of nodes is at most $1$ in the locational uncertainty model.
\end{theorem}

The number of samples is dominated by estimating $\Prob[\calF\A{2} \wedge \CP\leq 1]$.
Since there are $O\bigl(n^2\bigr)$ different pairs of nodes $v,v'\in \V$
and $O\bigl(m^2\bigr)$ different pairs of points $s,s'\in \calF$,
we separate $\calF\A{2}$ into $O\bigl(n^2m^2\bigr)$ different terms. For each term, we take $O\bigl(\frac{(nm)^4}{\e^4}\ln m\bigr)$ independent samples. Thus, we take $O\bigl(\frac{n^6m^6}{\epsilon^4}\ln m \bigr)$ independent samples in total.

\vspace{-0.3cm}
\subsection{Estimating $\Exp[\CP]$}
\label{sec:ecp}
In this section, we consider the problem of estimating $\Exp[\CP]$,
where $\CP$ is the distance of the closest pair of present points,
in the existential uncertainty model.
Now, we introduce our second main technique,
the {\em hierarchical partition family (HPF)} technique, to solve this problem.
An HPF is a family $\Psi$ of partitions of $\calP$, formally defined as follows.


\eat{
We also note that an FPRAS for computing $\Pr[\CP\leq 1]$ does not imply an FPRAS for computing $\Exp[\CP]$, hence
the algorithm developed in the previous section can not be used here
\footnote{
To the contrary, an FPRAS for computing $\Pr[\CP\geq 1]$ or $\Pr[\CP= 1]$ would imply an FPRAS for computing $\Exp[\CP]$
since $\Exp[\CP]=\sum_{(s_i,s_j)} \Pr[\CP=\dist(s_i,s_j)] \dist(s_i,s_j)
=\int \Pr[\CP\geq t] \d t=\sum_{(s_i,s_j)} \Pr[\CP\geq \dist(s_i,s_j)] (\dist(s_i,s_j)-\dist(s'_i,s'_j))$.
(THIS SHOULD BE IN INTRODUCTION)
}.
}

\begin{definition}
\label{def:hpf}
(Hierarchical Partition Family (HPF))
Let $T$ be any minimum spanning tree spanning all points of $\P$.
Suppose that the edges of $T$ are $e_1,\ldots,e_{m-1}$
with $\dist({e_1})\geq \dist({e_2})\geq \ldots \geq \dist({e_{m-1}})$.
Let $E_i=\{e_i,e_{i+1},\ldots,e_{m-1}\}$.
The HPF $\hpf$ consists of $m$ partitions $\Gamma_1,\ldots, \Gamma_m$.
$\Gamma_1$ is the entire point set $\calP$.
$\Gamma_i$ consists of $i$ disjoint subsets of $\calP$,
each corresponding to a connected component of $G_i=G(\calP,E_{i})$.
$\Gamma_m$ consists of all singleton points in $\calP$.
It is easy to see that $\Gamma_j$ is a refinement of $\Gamma_i$ for $j>i$.
Consider two consecutive partitions $\Gamma_i$ and $\Gamma_{i+1}$.
Note that $G_{i}$ contains exactly one more edge (i.e., $e_i$) than $G_{i+1}$.
Let $\mu'_{i+1}$ and $\mu''_{i+1}$ be the two components (called the {\em split components}) in $\Gamma_{i+1}$, each containing an endpoint of $e_i$.
Let $\nu_i\in \Gamma_i$ be the connected component of $G_i$ that contains $e_i$.
We call $\nu_i$ the {\em special component} in $\Gamma_i$. Let $\Gamma'_i=\Gamma_i\setminus \nu_i$.
\end{definition}

We observe two properties of $\hpf$ that are useful later.
\begin{enumerate}
\item[P1.] Consider a component $C\in \Gamma_i$.
Let $s_1,s_2$ be two arbitrary points in $C$.
Then $\dist(s_1,s_2)\leq (m-1) \dist(e_i)$
(this is because $s_1$ and $s_2$ are connected in $G_i$, and $e_i$ is the longest edge in $G_i$).
\item[P2.] Consider two different components $C_1$ and $C_2$ in $\Gamma_i$.
Let $s_1\in C_1$ and $s_2\in C_2$ be two arbitrary points.
Then $\dist(s_1,s_2)\geq \dist(e_{i-1})$
(this is because the minimum inter-component distance is $\dist(e_{i-1})$ in $G_i$).
\end{enumerate}


Let the random variable $Y$ be smallest integer $i$ such that
there is at most one present point in each component of $\Gamma_{i+1}$.
Note that if $Y=i$ then
each component of $\Gamma_{i}$ contains at most one point, except that
the special component $\nu_i$ contains exactly two present points.
The following lemma is a simple consequence of P1 and P2.

\begin{lemma}
\label{lm:closepair}
Conditioning on $Y=i$, it holds that $\dist(e_{i})\leq \C\leq m \dist(e_{i})$ 
(hence, $\C$ is poly-bounded).
\end{lemma}

Consider the following expansion of $\Exp[\CP]$:
$$
\Exp[\CP]=\sum_{i=1}^{m-1} \Prob[Y=i]\Exp[\CP\mid Y=i].
$$
For a fixed $i$, $\Prob[Y=i]$ can be estimated as follows:
For a component $C\subset\calP$, we use $C\A{j}$ to denote the event that exactly $j$ points in $C$ are present,
$C\A{s}$ the event that only $s$ is present in $C$ and
$C\A{\leq j}$  the event that no more than $j$ points in $C$ are present.
Let $\mu'_{i}$ and $\mu''_{i}$ be the two split components in $\Gamma_{i}$.
Note that
$$
\Prob[Y=i]=\Prob[\mu'_{i+1}\A{1}]\cdot \Prob[\mu''_{i+1}\A{1}]\cdot \prod_{C\in \Gamma'_i}\Prob[C\A{\leq 1}].
$$
Each term can be easily computed in polynomial time.
The remaining is to show how to estimate $\Exp[\CP\mid Y=i]$.
Since $\C$ is poly-bounded, it suffices to give an efficient algorithm
to take samples conditioning on $Y=i$.
This is again not difficult:
We take exactly one point $s\in \mu'_{i+1}$ with probability $\Pr[\mu'_{i+1}\A{s}]/\Pr[\mu'_{i+1}\A{1}]$.
Same for $\mu''_{i+1}$.
For each $C\in \Gamma'_i$,
take no point from $C$ with probability $\Pr[C\A{0}]/\Pr[C\A{\leq 1}]$;
otherwise, take exactly one point $s\in C$ with probability $\Pr[C\A{s}]/\Pr[C\A{\leq 1}]$.

By Lemma~\ref{lm:closepair}, conditioning on $Y=i$, taking $O(\frac{m}{\epsilon^2}\ln m)$ independent samples are enough using the Monte Carlo method. Since there are $m$ levels, we take $O\bigl(\frac{m^2}{\epsilon^2}\ln m \bigr)$ independent samples in total.
This finishes the description of the FPRAS in the existential uncertainty model.

\vspace{0.3cm}
\topic{Locational Uncertainty Model}
Our algorithm is almost the same as the existential model.
We first construct the HPF $\hpf$.
The random variable $Y$ is defined in the same way.
The only difference is how to estimate $\Prob[Y=i]$ and
how to take samples efficiently conditioning on $Y=i$.
First consider estimating $\Prob[Y=i]$.
We can consider the problem as the following bins-and-balls problem:
we have $n$ balls (corresponding to nodes) and $i$ bins (corresponding to components in $\Gamma_i$).
Each ball $v$ is thrown to bin $C$ with probability $p_{vC}=\sum_{s\in C}p_{vs}$ (note that $\sum_C p_{vC}=1$).
We want to compute the probability that
each of the first and second bins (corresponding to the two split components) contains exactly one ball,
and for other bins each contains at most one ball.
Consider the following $i\times i$ ($i\geq n$) matrix $M$ with
$
M_{vC}=\left\{
         \begin{array}{ll}
           p_{vC}=\sum_{s\in C}p_{vs}, & \hbox{for $v\in [n]$ and $C\in [i]$;} \\
           1, & \hbox{otherwise}
         \end{array}
       \right.
$.
It is not difficult to see that the permanent
$$
\per(M)=\sum_{\sigma\in \mathbb{S}_i}\prod_{v} M_{v\sigma(v)}
$$
is exactly the probability that each bin contains at most one ball.
To enforce each of the first two bins contains exactly one ball,
simply consider the Laplace expansion of $\per(M)$, expanded along the first two columns,
and retain those relevant terms:
$$
\Pr[Y=i]=\sum_{k\in [n]}\sum_{j\in [n],j\ne k} M_{k1}M_{j2} \per(M^\star_{kj})
$$
where $M^\star_{kj}$ is $M$ with the 1st and 2nd columns and $k$th and $j$th rows removed.
Then, we can use the celebrated result for approximating permanent by Jerrum, Sinclair, and Vigoda~\cite{jerrum2004permanent}
to get an FPRAS for approximating $\Pr[Y=i]$.
In fact, the algorithm in \cite{jerrum2004permanent} provides
a fully polynomial time approximate sampler for perfect matchings
\footnote{
The approximate sampler can return in poly-time
a permutation $\sigma\in \mathbb{S}_i$ with probability $(1\pm \epsilon)\prod_{s} M_{s\sigma(s)}/\per(M)$.
}. This can be easily translated to an efficient sampler conditioning on $Y=i$
\footnote{
We can also use the generic reduction by Jerrum, Valiant and Vazirani \cite{jerrum1986random}
which can turn an FPRAS into a poly-time approximate sampler for self-reducible relations.
}.
Finally, we remark that the above algorithm can be easily modified to handel the case
with both existential and locational uncertainty model.

\eat{
\vspace{0.3cm}
\linesnumbered
\begin{algorithm}[t]
\caption{Estimating $\Exp[\CP]$}
\label{alg:estCP}
\KwSty{Input:} The point set $\P$, and the existence probability $p_i$ for each point $s_i$.\\
For each $1\leq i\leq m-1$, take $N=O(\frac{m^5}{\e^5})$ independent random samples $G_{i,j}$ conditioning on $L_i$.\\
For a sample of level $i$, independently sample 1 point $s\in T^*_{i,1}$ with probability
$\frac{\Prob[T^*_{i,1}\A{s}]}{\Prob[T^*_{i,1}\A{1}]}$ and 1 point $t\in T^*_{i,2}$ with probability $\frac{\Prob[T^*_{i,2}\A{t}]}{\Prob[T^*_{i,2}\A{1}]}$. For other components $T_{i,k}\in \calT_i$, sample no point $s\in T_{i,k}$ with probability $\frac{\Prob[T_{i,k}\A{0}]}{\Prob[T_{i,k}\A{\leq 1}]}$ and sample 1 point $s\in T_{i,k}$ with probability $\frac{\Prob[T_{i,k}\A{s}]}{\Prob[T_{i,k}\A{\leq 1}]}$.\\
For $1\leq i\leq m-1$, $C_i\leftarrow \frac{1}{N}\sum_{1\leq j\leq N}\CP(G_{i,j})$.\\
\KwSty{Output:} $\sum_{i=1}^{m-1} \Prob[L_i] \cdot C_i$
\end{algorithm}
}

\begin{theorem}
\label{thm:cp}
There is an FPRAS for estimating the expected distance between
the closest pair of nodes in both existential and locational uncertainty models.
\end{theorem}

\eat{
\begin{proof}
We only need to show that $C_i$ is a ($1\pm\e$)-estimation for $\Prob[L_i]\Exp[\CP\mid L_i]$ with high probability. By construction, conditioning on $L_i$, the minimum value of $\CP$ is at least $\dist(e_{i})$ while the maximum possible value is at most $m\cdot \dist(e_{i})$.
Thus, $\CP$ is poly-bounded since that the sample average $C_i$ is a good estimation of $\Exp[\CP\mid L_i]$.

The remaining task is to show sample $G_{i,j}$ is a uniformly random sample conditioning on $L_i$. In fact, it is no hard to see since nodes in $T^*_{i,1}\cup T^*_{i,2} \cup \calT_i$ are independent (no two components share a common point).
\qed
\end{proof}

The algorithm in the locational uncertainty model can be found in Appendix \ref{app:ECP}.

\eat{
\subsection{Diameter}
In addition, we show that the problem of computing the expected length of the diameter is the dual problem of computing $\Exp[\KC]$ under the existence uncertainty model. Assume the longest distance between two points in $\P$ is $W$. We construct $\P'$ as the following: for any two points $u,v\in \P$ of distance $\dist(u,v)$, change the distance between them into $2W-\dist(u,v)$ in $\P'$. Then for a sample $G$ maintaining at least two points, the sum of $\CP(G)$ in $\P$ and the diameter of $G$ in $\P'$ is exactly $2W$. Since we could produce a $(1\pm\epsilon)$-estimate for $\Exp[\CP]$, an $\FA$ for $\Exp[\diam]$ could also be achieved. Using the same method, we could achieve the result of the diameter in the Table~\ref{tab:result}.
}
}
\topic{$k$th Closest Pair}
In addition, we consider the problem of the expected distance $\Exp[\KCP]$ between the $k$th closest pair
under the existential uncertainty model. We use the HPF technique, and construct an efficient sampler via a dynamic programming. The details can be found in Appendix~\ref{app:CP}.

\section{$k$-Clustering}
\label{sec:kcluster}

In this section, we study the k-clustering problem in the existential uncertainty model.
According to~\cite{kleinberg2006alg}, the optimal objective value for $k$-clustering
is the $(k-1)$th most expensive edge of the minimum spanning tree.
We consider estimating $\Exp[\KC]$ under the existential uncertainty model.

Denote the point set $\P=\{s_1, \ldots, s_m\}$, where each point $s_i\in \P$ is present with probability $p_i$.
We construct the HPF $\hpf$.
%
Let the random variable $Y$ be the largest integer $i$ such that at most $k-1$ components in $\Gamma_i$ contain at least one present point. Let $\Gamma'_i=\Gamma_i\setminus \nu_i$.
Note that if $Y=i$ then at most $k-2$ components in $\Gamma'_i$ contain present points while the special component $\nu_i$ contains at least two present points, since both component $\mu'_{i+1}$ and $\mu''_{i+1}$ contain at least one present point. By the property P1 and P2 of HPF, we have the following lemma.

%
%

\vspace{-0.2cm}
\begin{lemma}
\label{lm:kcluster}
Conditioning on $Y=i$, it holds that $\dist(e_{i})\leq \KC\leq m \dist(e_{i})$
(hence, $\KC$ is poly-bounded)..
\end{lemma}
\vspace{-0.2cm}
\begin{proof}
Since $\Gamma_{i+1}$ contains at least $k$ nonempty components, any
spanning tree must have at least $k-1$ inter-component edges.
Any inter-component edge is of length at least $\dist(e_i)$,
so is the $(k-1)$th expensive edge.
Now we show the other direction. Assume w.l.o.g. that all pairwise distances are distinct.
Consider a realization satisfying $Y=i$
and the graphical matroid
which consists of all forests of the realization.
Suppose $\KC=\dist(e)$ for some edge $e$. Let $E_e$ be all edges with length no larger than $e$
in this realization. We can see that $\rank(E_e)=n-k+1$ where $\rank$ is the matroid rank function and
$n$ the number of present points in the realization.
Hence, any spanning tree contains no more than $n-k+1$ edges from $E_e$.
Equivalently, the $(k-1)$th most expensive edge of any spanning tree is no smaller than $\KC$.
Moreover, since $\Gamma_i$ has no more than $k-1$ nonempty components,
there exists a spanning tree such that the $(k-1)$th most expensive edge
is an intra-component edge in $\Gamma_i$. The lemma follows from P1.
\qed
\end{proof}

Consider the following expansion
$
\Exp[\KC]=\sum_{i=1}^{m-1} \Prob[Y=i]\Exp[\KC\mid Y=i].
$
Recall that for a component $C\subset\calP$, we use $C\A{j}$ to denote the event that exactly $j$ points in $C$ are present,
$C\A{s}$ the event that only $s$ is present in $C$ and
$C\A{\leq j}$ $(C\A{\geq j})$  the event that at most (at least) than $j$ points in $C$ are present. For a partition $\Gamma$ on $\P$, we use $\Gamma\A{j,\geq 1}$ to denote the event that exactly $j$ components in $\Gamma$ contain at least one present point. Note that
$$
\Prob[Y=i]=\Prob[\mu'_{i+1}\A{\geq 1}]\cdot \Prob[\mu''_{i+1}\A{\geq 1}]\cdot \Prob[\Gamma'_i\A{k-2,\geq 1}].
$$
Note that $\Prob[\mu'_{i+1}\A{\geq 1}]$ and $\Prob[\mu''_{i+1}\A{\geq 1}]$ can be easily computed in polynomial time.
The remaining task is to show how to compute $\Prob[\Gamma'_i\A{k-2,\geq 1}]$ and how to estimate $\Exp[\KC\mid Y=i]$.
We first present a simple lemma which is useful later.
\begin{lemma}
\label{lm:samTj}
For a component $C$ and $j\in \mathbb{Z}$,
we can compute $\Prob[C\A{j}]$ (or $\Prob[C\A{\geq j}]$) in polynomial time.
Moreover, there exists a poly-time sampler to sample present points from $C$ conditioning on $C\A{j}$ (or $C\A{\geq j}$).
\end{lemma}
\begin{proof}
The idea is essentially from \cite{dyer2003approximate}.
W.l.o.g, we assume that the points in $C$ are $s_1,\ldots, s_{n}$.
We denote the event that among the first $a$ points,
exactly $b$ points are present by $E[a,b]$ and denote the probability of $E[a,b]$ by $\Prob[a,b]$.
Note that our goal is to compute $\Pr[n,j]$, which can be solved by the following dynamic program:
\begin{enumerate}
\item If $a < b$, $\Prob[a,b]=0$. If $a=b$, $\Prob[a,b]=\prod_{1\leq l\leq a}p_l$. If $b=0$, $\Prob[a,b]=\prod_{1\leq l\leq a}(1-p_l)$.
\item For $a>b$ and $b\geq 1$, $\Prob[a,b]=p_a \Prob[a-1,b-1]+(1-p_a) \Prob[a-1,b]$.
\end{enumerate}
We can also use this dynamic program to construct an efficient sampler. Consider the point $s_{n}$.
With probability $p_{n} \Prob[n-1,j-1]/\Prob[n,j]$, we make it present and then recursively consider the point $s_{n-1}$ conditioning on the event $E[n-1,j-1]$.
With probability $(1-p_{n}) \Prob[n-1,j]/\Prob[n,j]$, we discard it and then recursively sample conditioning on the event $E[n-1,j]$.
$\Pr[C\A{\geq j}]$
can be handled in the same way and we omit the details.\qed
\end{proof}

%

\vspace{0.1cm}
\topic{Computing $\Prob[\Gamma'_i\A{k-2,\geq 1}]$}
Now, it is ready to show how to compute $\Prob[\Gamma'_i\A{k-2,\geq 1}]$ in polynomial time.
Note that for each component $C_j\in \Gamma'_i$, we can easily compute $q_j=\Pr[C_j\A{\geq 1}]$ in polynomial time.
Since all components in $\Gamma'_i$ are disjoint,
using Lemma~\ref{lm:samTj} (consider each component $C_j$ in $\Gamma'_i$ as a point with existential probability $q_j$),
we can compute $\Prob[\Gamma'_i\A{k-2,\geq 1}]$.

\vspace{0.3cm}
\noindent
To take samples conditioning on $Y=i$, we first sample $k-2$ components in $\Gamma'_i$ which contain present points.
Then for these $k-2$ components and $\mu'_{i+1}$, $\mu''_{i+1}$, we independently sample present points in each component using Lemma~\ref{lm:samTj}.
By Lemma~\ref{lm:kcluster}, for estimating $\Exp[\KC\mid Y=i]$, we need to take $O\bigl(\frac{m}{\e^2}\ln m\bigr)$ independent samples. So we take $O\bigl(\frac{m^2}{\e^2}\ln m\bigr)$ independent samples in total.

\begin{theorem}
\label{thm:kcluster}
There is an FPRAS for estimating the expected length of
$k$-th expensive edge in the minimum spanning tree in the existential uncertainty model.
\end{theorem}

\vspace{-0.4cm}
\section{Minimum Spanning Trees}
\label{sec:mst}
We consider the problem of estimating the expected size of minimum spanning tree in the
locational uncertainty model.
In this section, we briefly sketch how to solve it using our \core\ method.
Recall that the term nodes refers to the vertices $\V$ of the spanning tree
and points describes the locations in $\P$.
For ease of exposition, we assume that for each point, there is only one node that may realize
at this point.

Recall that we use the notation $v \realize s$ to denote the event that
node $v$ is present at point $s$.
Let $\p_{vs}=\Prob[v \realize s]$. Since node $v$ is realized with certainty, we have $\sum_{s\in \P} \p_{vs}=1$. For each point $s\in \P$, we let $p(s)$ denote
the probability that point $s$ is present.
For a set $H$ of points, let $p(H)=\sum_{s\in H} p(s)$, i.e., the expected number of points
present in $H$.
For a set $H$ of points and a set $S$ of nodes, we use $H\A{S}$ to denote the event that
all and only nodes in $S$ are realized to some points in $H$.
If $S$ only contains one node, say $v$, we use the notation $H\A{v}$ as the shorthand for $H\A{\{v\}}$.
Let $H\A{i}$ denote the event $\bigvee_{S:|S|=i}H\A{S}$, i.e., the event that exactly $i$ nodes are in $H$.
We use $\diam(H)$, called the diameter of $H$, to denote $\max_{s,t\in H}\dist(s,t)$.
Let $\dist(p,H)$ be the closest distance between point $p$ and any point in $H$.

\vspace{0.3cm}
\topic{Finding \core}
Firstly, we find in poly-time the \core\  $\H$ as follows:
\vspace{-0.2cm}
\linesnotnumbered
\begin{algorithm}[h]
\caption{Constructing \core\ $\H$ for Estimating $\Exp[MST]$}
\label{alg:estMST}
\nl Among all points $r$ with $p(r)\geq \frac{\epsilon}{16m}$, find the furthest two points $s$ and $t$.\\
\nl Set $\H\leftarrow\B(s, \dist(s,t))=\{s'\in \calP\mid \dist(s',s)\leq \dist(s,t) \}$.\\
\end{algorithm}

\eat{
For proving Lemma~\ref{lm:home}, we need the following simple lemma.

\begin{lemma}
\label{lm:lmprob}
Consider two points $s$ and $t$ in $\P$. Suppose $p(s)\geq \delta$, $p(t)\geq \delta$ (Here $\delta \ll 0.2$ is a positive real number).
Suppose no node contributes to more than one half of both $p(s)$ and $p(t)$
(i.e., $\not\exists v\in V, \text{ s.t. }p_{vs}\geq 0.5 p(s)\text{ and } p_{vt}\geq 0.5 p(t)$).
Then, we have that $\Prob[\exists (v,u), v\ne u, v\realize s, u\realize t] = \Omega( \delta^2).$
\end{lemma}

\begin{proof}
Note that we only need to show the correctness for $p(s)=p(t)=\delta$. According to the given conditions, we have that
$$
\frac{p_{vs}p_{vt}}{p(s)p(t)}\leq \frac{1}{4}\Bigl(\frac{p_{vs}}{p(s)}+\frac{p_{vt}}{p(t)}\Bigr)^2\leq \frac{3}{8}\Bigl(\frac{p_{vs}}{p(s)}+\frac{p_{vt}}{p(t)}\Bigr).
$$
Then, we can see that
\begin{align*}
&\Prob[\exists (v,u), v\neq u, v\realize s, u\realize t] = 1-\prod_{v\in V}(1-p_{vs})-\prod_{v\in V}(1-p_{vt})+\prod_{v\in V}(1-p_{vs}-p_{vt}) \\
&= \left(1-\prod_{v\in V}(1-p_{vs})\right)\left(1-\prod_{v\in V}(1-p_{vt})\right)+\prod_{v\in V}(1-p_{vs}-p_{vt})-\prod_{v\in V}(1-p_{vs})(1-p_{vt})\\
&\geq \left(1-\prod_{v\in V}(1-p_{vs})\right)\left(1-\prod_{v\in V}(1-p_{vt})\right)-\sum_{v\in V}p_{vs}p_{vt} \\
&\geq \left(1-(1-\frac{p(s)}{n})^n\right)\left(1-(1-\frac{p(t)}{n})^n\right)-\sum_{v\in V}\frac{3}{8} p(s)p(t)
\Bigl(\frac{p_{vs}}{p(s)}+\frac{p_{vt}}{p(t)}\Bigr) \\
&\geq \left(1-e^{-p(s)}\right)\left(1-e^{-p(t)}\right)-\frac{3}{4}p(s)p(t) \\
&\geq (0.9\delta)^2-\frac{3}{4}\delta^2 = 0.06\delta^2.
\end{align*}
The last inequality holds since $\delta\ll 0.2$.
\qed
\end{proof}
}

\noindent
\begin{lemma}
\label{lm:home}
Algorithm~\ref{alg:estMST} finds a \core\ $\H$ such that
\begin{enumerate}
\item[Q1.] $p(\H)\geq n-\frac{\epsilon}{16}=n-O(\epsilon)$
\item[Q2.] $\Exp[\,\MST\mid \H\A{n}\,]=\Omega\Bigl(\diam(\H)\frac{\epsilon^2}{m^2} \Bigr)$.
\end{enumerate}
Furthermore, the algorithm runs in linear time.
\end{lemma}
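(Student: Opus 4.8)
The plan is to take the threshold $\tau=\frac{\epsilon}{16m}$ and simply let $H=\{s\in\calP:\ p(s)\ge\tau\}$. This set is computable in $O(nm)$ time --- sum the given probabilities to obtain each $p(s)$, then threshold --- which gives the ``linear time'' claim, and P1 is immediate: at most $m$ points lie outside $H$, each of mass $<\tau$, so $p(\calP\setminus H)<m\tau=\frac{\epsilon}{16}$ and hence $p(H)=n-p(\calP\setminus H)>n-\frac{\epsilon}{16}$. (If $n\le 2$ the minimum spanning tree has length $0$ or a single pairwise distance, so $\E[\MST]$ is computable exactly and no home is needed; I assume $n\ge 3$ from now on.)

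For P2, fix a pair $\{s,t\}$ realizing $\diam(H)$; both endpoints have mass $\ge\tau$. I would produce two points $x,y\in H$ with $\dist(x,y)\ge\tfrac12\diam(H)$ such that the event $A_{xy}$ --- ``some node realizes to $x$ and some node realizes to $y$'' --- has probability $\Omega(\tau^2)$, and the same after conditioning on $H\A{n}$. This suffices: under $A_{xy}$ the realized multiset contains both $x$ and $y$, so $\MST\ge\dist(x,y)\ge\tfrac12\diam(H)$, whence $\E[\MST\mid H\A{n}]\ge\tfrac12\diam(H)\cdot\Prob[A_{xy}\mid H\A{n}]=\Omega\!\left(\diam(H)\tfrac{\epsilon^2}{m^2}\right)$. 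Conditioning on $H\A{n}$ is harmless: it merely divides each $p_{vz}$ by the probability that node $v$ realizes into $H$, which lies in $(1-\tfrac{\epsilon}{16},1]$, so every estimate below is perturbed by at most a constant factor.

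To get the $\Omega(\tau^2)$ bound I would lower-bound $\Prob[A_{xy}]$ by fixing a witness node $v^{*}$ and using that the events ``$v^{*}$ realizes to $x$'' and ``some node other than $v^{*}$ realizes to $y$'' are independent, so $\Prob[A_{xy}]\ge p_{v^{*}x}\bigl(1-e^{-\sum_{u\ne v^{*}}p_{uy}}\bigr)$, which is $\Omega(\tau^2)$ as soon as both $p_{v^{*}x}$ and $\sum_{u\ne v^{*}}p_{uy}$ are $\Omega(\tau)$. A short case analysis supplies such $x,y,v^{*}$, keeping $\{x,y\}=\{s,t\}$ whenever possible: if one of $s,t$ --- say $s$ --- has all $p_{vs}<\tfrac{\tau}{2}$, then either $t$ also has all masses $<\tfrac\tau2$ and one sums the bound over the least-indexed node landing at $s$ (deleting it leaves $\ge\tfrac\tau2$ of $t$'s mass), or $t$ has a ``heavy'' node and one fixes it at $t$; if both $s$ and $t$ have a heavy node and these differ, take them directly; if they share a heavy node $w$ holding at most half of one of $p(s),p(t)$, use $\{x,y\}=\{s,t\}$, $v^{*}=w$. (Lemma~\ref{lm:lmprob} is the natural device for the sub-case that no node holds half of both $p(s)$ and $p(t)$.)

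The remaining case --- a single node $w$ holding more than half of \emph{both} $p(s)$ and $p(t)$ --- is the one I expect to be the main obstacle, since then $w$ is essentially the only plausible occupant of both ends of the diameter and $(s,t)$ gives nothing. Here I would detour to a third point: since $\sum_z p_{wz}=1$ but $p(H)=\sum_{z\in H}p(z)>n-\tfrac{\epsilon}{16}\ge 3-\tfrac{\epsilon}{16}>2$, not every point of $H$ can have more than half its mass on $w$, so some $c\in H$ has $p_{wc}\le\tfrac12 p(c)$ (and then $c\notin\{s,t\}$). By the triangle inequality one of $\dist(c,s),\dist(c,t)$ is $\ge\tfrac12\diam(H)$; taking $x$ to be that endpoint, $y=c$, $v^{*}=w$ gives $p_{wx}>\tfrac12 p(x)\ge\tfrac\tau2$ and $\sum_{u\ne w}p_{uc}=p(c)-p_{wc}\ge\tfrac12 p(c)\ge\tfrac\tau2$, which closes P2. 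The real content is this counting fact --- every node carries total mass $1$ while $H$ retains almost the entire mass $n\ge 3$ --- which forces the escape point $c$ to exist.
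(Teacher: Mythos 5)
Your proof is correct, and its skeleton is the same as the paper's: threshold the points at $\tau=\epsilon/(16m)$, get P1 from the fact that at most $m$ points fall below the threshold, and get P2 by exhibiting two points of $H$ at distance $\Omega(\diam(H))$ that are simultaneously occupied by two distinct nodes with probability $\Omega(\epsilon^2/m^2)$, using Lemma~\ref{lm:lmprob} in the generic case. Two things differ. First, you take $H$ to be the heavy set itself, while the paper takes the ball $\B(s,\dist(s,t))$ around one endpoint of the farthest heavy pair; nothing downstream uses that $H$ is a ball, so this is cosmetic. Second, and more substantively, in the degenerate case where a single node $w$ carries more than half of both $p(s)$ and $p(t)$, the paper conditions on the location $q$ of any \emph{other} node and uses $\dist(s,q)+\dist(t,q)\ge\dist(s,t)$, which yields the stronger bound $\Omega(\dist(s,t)\,\epsilon/m)$ and works for every $n\ge 2$; you instead produce a third heavy point $c$ with $p_{wc}\le\tfrac12 p(c)$ via the counting inequality $1\ge\sum_{z\in H}p_{wz}$ against $p(H)>n-\epsilon/16$, pair $c$ with the far diameter endpoint, and land at $\Omega(\epsilon^2/m^2)$. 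That escape-point argument is a nice, self-contained alternative, but it genuinely needs $n\ge 3$; your explicit carve-out of $n\le 2$ (solve $\E[\MST]$ exactly) keeps the algorithm intact, though strictly the lemma as stated covers all $n$ and your argument proves it only for $n\ge 3$. On the plus side, your treatment of the conditioning on $H\A{n}$ (each $p_{vz}$ is only rescaled up by $p_v(H)^{-1}\le(1-\epsilon/16)^{-1}$, so all lower bounds survive) makes explicit a step the paper leaves implicit.
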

\begin{proof}
For each point $r$ that is not in $\H$, we know $p(r)<\frac{\epsilon}{16m}$.
Therefore, we have that
and  $p(\calP\setminus \H)<\frac{\epsilon}{16}$.
and  $p(\H)\geq n-\frac{\epsilon}{16}$.
Consider two cases:
\begin{enumerate}
\item Points $s$ and $t$ relate to different nodes.
In this case, we have that
$$
\Exp[\MST\mid \H\A{n}]\geq \dist(s,t) \Prob[\exists (v,u), v\ne u, v\realize s, u\realize t]=\dist(s,t)p(s)p(t)
\geq \dist(s,t) \frac{\epsilon^2}{256m^2}.
$$
\item Points $s$ and $t$ relate to the same node $v$.
In this case, conditioning on the event that a different node $u$ is realized to an arbitrary point $q$,
$
\Exp[\MST\mid \H\A{n}]\geq \dist(s,q) \Prob[v\realize s]+\dist(t,q) \Prob[v\realize t]
\geq \dist(s,t) \frac{\epsilon}{16m}.
$
\end{enumerate}
In either case, $\H$ satisfies both Q1 and Q2.
\qed
\end{proof}

\topic{Estimating $\Exp[\MST]$}
Let $\calF=\calP\setminus \H$. We rewrite $\Exp[\MST]$ by $\sum_{i\geq 0}\Exp[\MST\mid \calF\A{i}]\cdot \Prob[ \calF\A{i}]$.
We only need to estimate
$\Exp[\,\MST\mid \calF\A{0} \,] \cdot \Prob[ \calF\A{0}]$
and
$\Exp[\,\MST\mid \calF\A{1} \,] \cdot \Prob[ \calF\A{1}]$.

\vspace{0.1cm}
\linesnotnumbered
\begin{algorithm}[h]
\caption{Estimating $\Exp[\,\MST\mid \calF\A{0} \,] \cdot \Prob[ \calF\A{0}]$}
\label{alg:estMSTdetail}
\nl Take $N_0=O\bigl(\frac{nm^2}{\e^4}\ln n\bigr)$ random samples. Set $A\leftarrow \emptyset$ at the beginning.\\
\nl For each sample $G_i$, if it satisfies $\calF\A{0}$, $A\leftarrow A\cup \{G_i\}$.\\
\nl $T_0\leftarrow \frac{1}{N_0}\sum_{G_i\in A}\MST(G_i)$.\\
\end{algorithm}
\vspace{-0.5cm}

\begin{lemma}\label{lm:est1MST}
Algorithm~\ref{alg:estMSTdetail} produces a $(1\pm\epsilon)$-estimate for the first term with high probability.
\end{lemma}

\begin{proof}
Based on the event $\calF\A{0}$, the length of $\MST$ is at most $n\diam(\H)$.
Due to (Q2),
we have a poly-bounded random variable and can therefore obtain a $(1\pm\epsilon)$-estimate for $\Exp[\,\MST\mid \H\A{n}\,]$
using the Monte Carlo method with $O\bigl(\frac{nm^2}{\e^4}\ln n\bigr)$ samples satisfying $\H\A{n}$ (by Lemma~\ref{lm:chernoff}).
By the first property of $\H$, with probability close to 1, a sample satisfies $\H\A{n}$.
So, the expected time to obtain an useful sample is bounded by a constant.
Overall, we can obtain a $(1\pm\epsilon)$-estimate of
the first term with using $N_0=O\bigl(\frac{nm^2}{\e^4}\ln n\bigr)$ samples with high probability.
\qed
\end{proof}

\vspace{-0.1cm}
\linesnotnumbered
\begin{algorithm}[h]
\caption{Estimating $\Exp[\,\MST\mid \calF\A{1} \,] \cdot \Prob[ \calF\A{1}]$}
\label{alg:estMSTdetail2}
\nl Set $B\leftarrow \{s\mid s\in \calF, \dist(s,\H)<\frac{n}{\e } \cdot \diam(\H)\}$.
Let $\Cl(v)$ be the event that $v$ is the only node that is realized to some point $s\in B$.\\
\nl Conditioning on $\Cl(v)$,
take $N_1=O\bigl( \frac{nm^2}{\e^{5}}\ln n\bigr)$ independent samples. \\
Let $A_v\leftarrow \{G_{v,i}\mid 1\leq i\leq N_1\}$ be the set of $N_1$ samples for $\Cl(v)$.\\
\nl $T_v\leftarrow \frac{1}{N_1}\sum_{G_{v,i}\in A_v}\MST(G_{v,i})$  ~~~~(estimating $\Exp[\,\MST\mid \Cl(v)]$)\\
\nl $T_1\leftarrow \sum_{v\in \V}\Bigl(\Prob[\Cl(v)]T_v+\sum_{s\in \calF\setminus B}\Prob[\calF\A{v}\wedge v\realize s]\,\dist(s,\H)\,\Bigr)$.\\
\end{algorithm}

\begin{lemma}
\label{lm:est2MST}
Algorithm~\ref{alg:estMSTdetail2} produces a $(1\pm\epsilon)$-estimate for the second term with high probability.
\end{lemma}

\topic{Analysis}
Note that the number of samples is asymptotically dominated by estimating $\Exp[\,\MST\mid \calF\A{1} \,] \cdot \Prob[ \calF\A{1}]$. For each node $v\in \V$, we take $N_1$ independent samples. Thus, we need to take $O\bigl( \frac{n^2m^2}{\e^{5}}\ln n\bigr)$ independent samples.
Now, we analyze the performance guarantee of our algorithm.
We need to show that the total contribution from the scenarios
where more than one node are not in the \core\ is very small.
We need some notations first.
Suppose $S$ is the set of nodes realized out of \core\ $\H$.
We use $\calF_S$ to denote the set of all possible realizations of all nodes in $S$ to points in $\calF$
(we can think of each element in $\calF_S$ as an $|S|$-dimensional vector where each coordinate
is indexed by a node in $S$ and its value is a point in $\calF$).
Similarly, we denote the set of realizations of $\bS=V\setminus S$ to points in $\H$ by $\calH_{\bS}$.
For any $F_S\in \calF_S$ and $H_{\bS}\in \calH_{\bS}$, we use $(F_S, H_{\bS})$ to denote the event that
both $F_S$ and $H_{\bS}$ happen
and $\MST(F_S, H_{\bS})$ to denote the length of the minimum spanning tree
under the realization $(F_S, H_{\bS})$.
We need the following combinatorial fact.

\begin{lemma}
\label{lm:mstchange}
Consider a particular realization $(F_S,H_{\bS})$,
where $S$ is the set of nodes realized out of $\H$.
$|S|\geq 2$. Let $d=\dist(v_S,u_S)=\min_{v\in S,u\in \bS}\{\dist(u,v)\}$ where $v_S\in F_S$, $u_S\in H_{\bS}$.
The realization $(F_{S'}, H_{\bS'})$ is obtained from $(F_S,H_{\bS})$
by sending the node $v_S$ to $\H$, where $S'=S\setminus{v_S}$.
Then $\MST(F_S, H_{\bS})\leq 4\MST(F_{S'}, H_{\bS'})$.
\end{lemma}

\begin{proof}
We have
$$
4\MST(F'_{S'},H'_{\bS'})\geq 2\MST(F'_{S'},H'_{\bS'})+2d\geq \MST(F'_{S'},H_{\bS})+2d\geq \MST(F_S,H_{\bS})
$$
The second inequality holds since the length of the minimum spanning tree is at most two times
the length of the minimum Steiner tree (We consider $\MST(F'_{S'},H_{\bS})$ as a Steiner tree
connecting all nodes in $F_{S'}\cup H_{\bS}$).
\qed
\end{proof}

The only remaining part for establishing Theorem~\ref{thm:mst}
is to show the following essential lemma.
\begin{lemma}
\label{lm:mstcharge}
For any $\epsilon>0$,
if $\H$ satisfies the properties in Lemma~\ref{lm:home}, we have that
$$
\sum_{i>1}\Exp[\,\MST\mid \calF\A{i}]\cdot \Prob[\calF\A{i}]\leq \epsilon\cdot\Exp[\,\MST\mid \calF\langle 1\rangle]\cdot \Prob[\calF\A{1}].
$$
\end{lemma}

\begin{proof}
We claim that for any $i>1$,
$
\Exp[\,\MST\mid \calF\A{i+1}]\cdot \Prob[\calF\A{i+1}]\leq \frac{\epsilon}{2}\Exp[\,\MST\mid \calF\langle i\rangle]\cdot \Prob[\calF\A{i}].
$
If the claim is true, then we can show the lemma easily by noticing that, for any $n\geq 2$,
$
\sum_{i> 1}\Exp[\,\MST\mid \calF\A{i}] \Prob[\calF\A{i}]
\leq  \sum_{i=1}^{n-1}\bigl(\frac{\epsilon}{2}\bigr)^i  \Exp[\,\MST\mid \calF\langle 1\rangle] \Prob[\calF\A{1}]
\leq  \epsilon\Exp[\,\MST\mid \calF\langle 1\rangle] \Prob[\calF\A{1}].
$
Now, we prove the claim.
First, we rewrite the LHS as follows:
$$
\Exp[\,\MST\mid \calF\A{i+1}]\cdot \Prob[\calF\A{i+1}]
=\sum_{|S|=i+1} \sum_{F_S\in \calF_S} \sum_{H_{\bS}\in \calH_{\bS}}
\bigl(\,
\Prob[(F_S,H_{\bS})]\cdot
\MST(F_S, H_{\bS})\,\bigr),
$$
Similarly, the RHS can be written as:
$$
\Exp[\,\MST\mid \calF\A{i}]\cdot \Prob[\calF\A{i}]
=\sum_{|S'|=i} \sum_{F_{S'}\in\calF_{S'}} \sum_{H_{\bS'}\in \calH_{\bS'}}
\bigl(\,
\Prob[(F_S,H_{\bS})]\cdot
\MST(F_{S'}, H_{\bar{S}'})\,\bigr).
$$
For each pair $(F_S, H_{\bS})$,
let $C(F_S, H_{\bS})=
\Prob[F_S, H_{\bS}]\cdot \MST(F_S, H_{\bS})$.
Consider each pair $(F_S, H_{\bS})$ with $|S|=i+1$ as a seller
and each pair $(F_{S'}, H_{\bS'})$ with $|S'|=i$ as a buyer.
The seller $(F_S, H_{\bS})$ wants to sell the term $C(F_S, H_{\bS})$
and the buyers want to buy all this term.
The buyer $(F_{S'}, H_{\bS'})$ has a budget of $C(F_{S'}, H_{\bS'})$.
We show that there is a charging scheme such that
each term $C(F_S, H_{\bS})$ is fully paid by the buyers and
each buyer spends at most an $\frac{\e}{2}$ fraction of her budget.
Note that the existence of such a charging scheme suffices to prove the claim.

Suppose we are selling the term $C(F_S, H_{\bS})$.
Consider the following charging scheme.
Suppose $v\in S$ is the node closest to any node in $\bS$.
Let $S'=S\setminus \{v\}$ and $F_{S'}$ be the restriction of $F_S$ to all
coordinates in $S$ except $v$.
We say $(F_{S'}, H_{\bS'})$ is consistent with  $(F_S, H_{\bS})$,
denoted as $(F_{S'}, H_{\bS'})\consistent (F_S, H_{\bS})$,
if $H_{\bS'}$ agrees with  $H_{\bS}$ for all vertices in $\bS$.
and $F_{S'}$ agrees with  $F_{S}$ for all vertices in $S\setminus \{v\}$.
Intuitively, $(F_{S'}, H_{\bS'})$ can be obtained from $(F_S,H_{\bS})$
by sending $v$ to an arbitrary point in $\H$.
Let $$
Z(F_S, H_{\bS})=
\sum_{(F_{S'}, H_{\bS'})\consistent (F_S, H_{\bS})} \Prob[(F_{S'}, H_{\bS'})].
$$
We need the following inequality later:
For any fixed $(F_{S'}, H_{\bS'})$,
$$
\sum_{(F_S, H_{\bS})\consistent (F_{S'}, H_{\bS'})}
\frac{\Prob[F_S, H_{\bS}]}
{Z(F_S, H_{\bS})}
\leq \sum_{v\in \bS'} \frac{\Prob(v\in \calF)}{\Prob(v\in \H)}
\leq  \frac{\epsilon}{8}.
$$
To see the inequality,
for a fixed node $v$, consider the quantity
$$
\sum_{(F_S, H_{\bS})\consistent (F_{S'}, H_{\bS'}), \bS=\bS'\setminus \{v\}}
\frac{\Prob[F_S, H_{\bS}]}{Z(F_S, H_{\bS})}.
$$
A crucial observation here is that the denominators of all terms are in fact the same,
by the definition of $Z$,
which is
$
\sum_{} \Prob[(F'_{S'}, H'_{\bS'})],
$
and the summation is over all $(F'_{S'}, H'_{\bS'})$s
which are the same as $(F_{S'}, H_{\bS'})$ except that the location of $v$ is a different point in $\H$.
The numerator is the summation over all $(F_{S}, H_{\bS})$s
which are the same as $(F_{S'}, H_{\bS'})$ except that the location of $v$ is a different point in $\calF$.
Canceling out the same multiplicative terms from the numerators and the denominator,
we can see it is at most $\frac{\Prob(v\in \calF)}{\Prob(v\in \H)}$.

Now, we specify how to charge each buyer.
For each buyer $(F_{S'},H_{\bS'})\sim (F_S, H_{\bS})$,
we charge her the following amount of money
$$
\frac{\Prob[(F_{S'},H_{\bS'})]\cdot C(F_S, H_{\bS})}
{Z(F_S, H_{\bS})}
$$
We can see that $C(F_S, H_{\bS})$ is fully paid by all buyers
consistent with $(F_S, H_{\bS})$.
It remains to show that each buyer $(F_{S'}, H_{\bS'})$
has been charged at most $\frac{\epsilon}{2}C(F_{S'}, H_{\bS'})$.
By the above charging scheme,
the terms $(F_{S}, H_{\bS})$s in LHS that charge buyer $(F_{S'}, H_{\bS'})$
are consistent with $(F_{S'}, H_{\bS'})$.
Now, we can see that the total amount of money charged to buyer $(F_{S'}, H_{\bS'})$
can be bounded as follows:
\begin{align*}
\sum_{(F_S, H_{\bS})\consistent (F_{S'}, H_{\bS'})}
\frac{\Prob[F_{S'},H_{\bS'}]\cdot C(F_S, H_{\bS})}
{Z(F_S, H_{\bS})}
&\leq
4\MST(F_{S'}, H_{\bS'}) \cdot \sum_{(F_S, H_{\bS})\consistent (F_{S'}, H_{\bS'})}
\frac{\Prob[F_{S'},  H_{\bS'}]\cdot \Prob[(F_S, H_{\bS})]}
{Z(F_S, H_{\bS})}
\\
=&\, 4\MST(F_{S'}, H_{\bS'}) \Prob[F_{S'},H_{\bS'}]\cdot
\sum_{(F_S, H_{\bS})\consistent (F_{S'}, H_{\bS'})}
\frac{\Prob[F_S, H_{\bS}]}
{Z(F_S, H_{\bS})}\\
\leq &\, \frac{\epsilon}{2}\MST(F_{S'}, H_{\bS'}) \Prob[F_{S'}, H_{\bS'}]
\end{align*}
The first inequality follows from Lemma~\ref{lm:mstchange}.
This completes the proof.
\qed
\end{proof}

\begin{theorem}
\label{thm:mst}
There is an FPRAS for estimating the expected length of
the minimum spanning tree in the locational uncertainty model.
\end{theorem}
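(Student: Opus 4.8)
The plan is simply to assemble the ingredients developed in this section. First I would invoke Lemma~\ref{lm:home} to obtain, in linear time, a home set $H$ satisfying properties P1 and P2, and set $F=\calP\setminus H$. By the law of total expectation, $\Exp[\MST]=\sum_{i\geq 0}\Exp[\,\MST\mid F\A{i}\,]\cdot\Prob[F\A{i}]$, where the $i=0$ term equals $\Exp[\,\MST\mid H\A{n}\,]\cdot\Prob[H\A{n}]$. Lemma~\ref{lm:mstcharge} gives $\sum_{i>1}\Exp[\,\MST\mid F\A{i}\,]\cdot\Prob[F\A{i}]\leq\epsilon\cdot\Exp[\,\MST\mid F\A{1}\,]\cdot\Prob[F\A{1}]$, so the sum of the two retained terms lies in $[\Exp[\MST]/(1+\epsilon),\Exp[\MST]]$. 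Hence it suffices to estimate each of $\Exp[\,\MST\mid H\A{n}\,]\cdot\Prob[H\A{n}]$ and $\Exp[\,\MST\mid F\A{1}\,]\cdot\Prob[F\A{1}]$ to within a $(1\pm\epsilon)$ factor; rescaling $\epsilon$ by a constant then yields the stated guarantee.

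For the first term, $\Prob[H\A{n}]=\prod_{v\in\V}\bigl(\sum_{s\in H}p_{vs}\bigr)$ is computed exactly in linear time. For the conditional expectation, property P2 together with the trivial bound $\MST\leq(n-1)\diam(H)$ under $H\A{n}$ shows that $\MST$ conditioned on $H\A{n}$ is a nice instance, with ratio $O(nm^2/\epsilon^2)$. By Lemma~\ref{lm:chernoff}, averaging over $O(\frac{nm^2}{\e^4}\ln n)$ samples drawn from the conditional distribution gives a $(1\pm\epsilon)$-estimate with probability $1-1/\poly(n)$; such samples are produced by rejection sampling, which by P1 accepts with probability close to $1$, so the expected number of trials per accepted sample is $O(1)$.

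The second term is first rewritten as $\sum_{v\in\V}\sum_{s\in F}\Prob[F\A{v}\wedge v\realize s]\,\Exp[\,\MST\mid F\A{v},v\realize s\,]$, and each inner sum is split at the threshold $\dist(s,H)=\frac{n}{\e}\diam(H)$. For points $s$ below the threshold the relevant quantity is $\Prob[\Cl(v)]\cdot\Exp[\,\MST\mid\Cl(v)\,]$, with $\Prob[\Cl(v)]$ exactly computable; if $\Exp[\,\MST\mid\Cl(v)\,]\geq\frac{1}{2}\Exp[\,\MST\mid H\A{n}\,]$ this is again a nice instance (the maximum $\MST$ under $\Cl(v)$ is $O(\frac{n}{\e}\diam(H))$) estimable by Monte Carlo with $O(\frac{m^2}{\e^5}\ln n)$ samples, while if it is smaller then, with high probability, each sample average is at most $\Exp[\,\MST\mid H\A{n}\,]$, and summing $\Prob[\Cl(v)]$ times these over all $v$ yields at most $\e\cdot\Exp[\,\MST\mid H\A{n}\,]\cdot\Prob[H\A{n}]$ since $\sum_v\Prob[\Cl(v)]\leq n-p(H)<\e/16$ by P1 — so these terms are safely treated as negligible after a union bound over the $n$ nodes. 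For points $s$ above the threshold, $\dist(s,H)\leq\MST\leq\dist(s,H)+n\diam(H)\leq(1+\epsilon)\dist(s,H)$, so using $\dist(s,H)$ directly gives a $(1\pm\epsilon)$-estimate of that term, and the corresponding contribution $\sum_{v,s}\Prob[F\A{v}\wedge v\realize s]\dist(s,H)$ is computed exactly.

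Putting these together, with $O(\frac{nm^2}{\e^5}\ln n)$ samples in total and a final union bound over all invocations of the Chernoff bound, the output lies within $(1\pm O(\epsilon))\Exp[\MST]$ with probability at least $3/4$, and the whole procedure runs in time polynomial in $n$, $m$, and $1/\epsilon$; rescaling $\epsilon$ completes the proof. The one genuinely delicate point — bounding the collective contribution of all scenarios in which two or more nodes leave home — has already been isolated in Lemma~\ref{lm:mstcharge} through the seller/buyer charging argument (using Lemma~\ref{lm:mstchange}); the remaining work is the bookkeeping that ensures the additive errors from the two retained terms, the discarded tail $i>1$, and the small-expectation case of the second term all combine into a single $(1\pm O(\epsilon))$ multiplicative guarantee, which I expect to be the most error-prone part to write out carefully.
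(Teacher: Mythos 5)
Your proposal is correct and follows essentially the same route as the paper: it assembles Lemma~\ref{lm:home}, the conditional-expectation decomposition, Lemma~\ref{lm:mstcharge} for discarding the $i>1$ terms, and the same two-case Monte Carlo estimation of the retained terms (including the threshold split at $\dist(s,H)=\frac{n}{\e}\diam(H)$ and the large/small dichotomy for $\Exp[\,\MST\mid\Cl(v)\,]$). The only addition is the explicit final error bookkeeping, which the paper leaves implicit.
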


Finally, we remark that
the problem can be solved by a variety of methods.
The \core\ method presented in this section is not the simplest one,
but may be still helpful for understanding a very similar but somewhat more technical application
of the method to minimum perfect matching (see Section~\ref{sec:mm}).


\section{Minimum Perfect Matchings}
\label{sec:mm}

In this section, we consider the minimum perfect matching ($\MM$) problem.
We use the \core\ method.
The same \core\ construction for $\MST$ can not be directly used here since $\MM$ can be much smaller than $\MST$.
For example, suppose there are only two points. There are even number of nodes residing at each point.
In this case, $\MM$ is $0$. Now, if we change the location of one particular node to the other point,
the value of $\MM$ increase dramatically while the value of $\MST$ stays the same.
In some sense, $\MM$ is more sensitive to the location of nodes, hence requires new \core\ construction.
There are two major differences from the algorithm for $\MST$.
First, the \core\ is composed by several clusters of points, instead of a single ball.
Second, we need a more careful charging argument.
\vspace{0.1cm}

\topic{Finding \core}
First, we show how to find in poly-time the \core\  $\H$.
Initially, $\H$ consists of all singleton points, each being a component by itself.
Then, we gradually grow the ball from each point, and merge two components if they touch.
We stop until certain properties Q1 and Q2 are satisfied. See the Pseudo-code in Algorithm~\ref{alg:coreMM} for details.
For a node $v$ and a set $H$ of points, we let $p_v(H)=\sum_{s\in H}p_{vs}$. We use $\diam(H)$, called the diameter of $H$, to denote $\max_{s,t\in H}\dist(s,t)$.

\linesnotnumbered
\begin{algorithm}
\caption{Constructing \core\ $\H$ for Estimating $\Exp[\MM]$}
\label{alg:coreMM}
\nl Initially, $t\leftarrow 0$ and each point $s\in \calP$ is a component $\H_{\{s\}}=\B(s,t)$ by itself.\\
\nl Gradually increase $t$\;
\quad If two different components $\H_{S_1}$ and $\H_{S_2}$ intersect
 (where $\H_S:=\cup_{s\in S}\B(s,t)$)\;
\quad\quad Merge them into a new component $\H_{S_1\cup S_2}$.\\
\nl Stop increasing $t$ while the first time the following two conditions are satisfied
by components at $t$.
\begin{enumerate}
\item[Q1.] For each node $v$, there is a unique component $\H_j$ such that $p_v(\H_j)\geq 1-O(\frac{\e}{nm^3})$.
We call $\H_j$ the \core\ of node $v$, denoted as $\Home(v)$.
\item[Q2.] For all $j$, $|\{v\in\V \mid \H(v)=\H_j\}|$ is even.
\end{enumerate}
\nl Output the stopping time $T$ and the components $\H_1,\ldots, \H_k$.
\end{algorithm}
\vspace{0.3cm}

We need the following lemma which is useful for bounding $\Exp[\MM]$ from below.
\begin{lemma}
\label{lm:lm2core}
For any two disjoint sets $H_1$ and $H_2$ of points, and any node $v$,
we have
$$\Exp[\MM]\geq \min\{p_v(H_1),p_v(H_2)\}\cdot \dist(H_1,H_2)/m.$$
Here, $\dist(H_1,H_2)=\min_{s\in H_1,t\in H_2}\dist(s,t)$.
\end{lemma}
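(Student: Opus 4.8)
The plan is to condition on the locations of all the other $n-1$ nodes and prove the bound one realization at a time. Write $q=\min\{p_v(H_1),p_v(H_2)\}$ and $D=\dist(H_1,H_2)$; the statement is vacuous when $q=0$ or $D=0$, so assume both are positive (in particular $H_1,H_2\neq\emptyset$, hence $m\ge 2$). Fix a realization $\rho$ of the nodes in $\V\setminus\{v\}$, where $\rho_w$ denotes the point assigned to node $w$, and for a point $x$ let $M(\rho,x)$ be the length of the minimum perfect matching on all $n$ nodes when $v$ is placed at $x$ (well defined since $n$ is even). By the law of total expectation, $\Exp[\MM]=\sum_\rho\Prob[\rho]\sum_x p_{vx}M(\rho,x)$, so it is enough to show $\sum_x p_{vx}M(\rho,x)\ge\tfrac12 qD$ for every fixed $\rho$; since $\tfrac12\ge\tfrac1m$, this in fact strengthens the lemma.

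The key step is to show that for every $\rho$, every $s\in H_1$ and every $t\in H_2$,
$$
M(\rho,s)+M(\rho,t)\ \ge\ D .
$$
Granting this, taking the infimum first over $s\in H_1$ and then over $t\in H_2$ gives $\inf_{s\in H_1}M(\rho,s)+\inf_{t\in H_2}M(\rho,t)\ge D$, so at least one of these two infima is at least $D/2$. If it is the first, then $\sum_x p_{vx}M(\rho,x)\ge\sum_{s\in H_1}p_{vs}M(\rho,s)\ge\tfrac{D}{2}\,p_v(H_1)\ge\tfrac12 qD$, and symmetrically if it is the second; either way we obtain exactly what is needed.

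To prove the displayed inequality I would expand the two optimal matchings. Let $N_1$ be an optimal matching for $(\rho,s)$: it matches $v$ (at $s$) to some node $w_1$ (at $\rho_{w_1}$) and matches $\V\setminus\{v,w_1\}$ by a perfect matching $\mu_1$; likewise let $N_2$ match $v$ (at $t$) to $w_2$ and use a perfect matching $\mu_2$ on $\V\setminus\{v,w_2\}$. Then $M(\rho,s)+M(\rho,t)=\dist(s,\rho_{w_1})+\dist(t,\rho_{w_2})+\mathrm{len}(\mu_1)+\mathrm{len}(\mu_2)$. If $w_1=w_2$ we are done immediately, since $\dist(s,\rho_{w_1})+\dist(t,\rho_{w_1})\ge\dist(s,t)\ge D$. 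If $w_1\neq w_2$, consider the multigraph $\mu_1\cup\mu_2$ on vertex set $\V\setminus\{v\}$: every vertex has degree $2$ except $w_1$ and $w_2$, which have degree $1$, so it is a disjoint union of cycles together with a single path $P$ from $w_1$ to $w_2$. Prepending the edge $v$--$w_1$ of $N_1$ and appending the edge $v$--$w_2$ of $N_2$ turns $P$ into an $s$--$t$ walk through the corresponding points of $\rho$, whose length is at least $\dist(s,t)\ge D$ by the triangle inequality; since the edges of $P$ are among those of $\mu_1\cup\mu_2$ we have $\mathrm{len}(\mu_1)+\mathrm{len}(\mu_2)\ge\mathrm{len}(P)$, and therefore $M(\rho,s)+M(\rho,t)\ge\dist(s,\rho_{w_1})+\mathrm{len}(P)+\dist(t,\rho_{w_2})\ge D$.

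The hard part is precisely the key inequality $M(\rho,s)+M(\rho,t)\ge D$, and within it the case $w_1\neq w_2$: one must not discard the ``internal'' matchings $\mu_1,\mu_2$ but instead notice that their union decomposes into cycles plus exactly one path joining the two exposed endpoints, and that extending this path by $v$'s two matching edges yields an $s$--$t$ walk. That is what lets $\dist(H_1,H_2)$ be charged against $\mathrm{len}(\mu_1)+\mathrm{len}(\mu_2)$ rather than lost. Everything else is routine bookkeeping with the law of total expectation.
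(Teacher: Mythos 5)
Your proof is correct, and its combinatorial core --- fixing the locations of the other $n-1$ nodes and showing $M(\rho,s)+M(\rho,t)\ge\dist(s,t)$ via the symmetric difference of the two optimal matchings, which decomposes into cycles plus a single alternating path joining the two possible locations of $v$ --- is exactly the paper's key step; your case analysis on $w_1=w_2$ versus $w_1\ne w_2$ just makes explicit what the paper compresses into ``it is a path.'' Where you genuinely diverge is in how the probability mass over $H_1$ and $H_2$ is aggregated. The paper picks the single heaviest point $s\in H_1$ and $t\in H_2$, observes $p_{vs}\ge p_v(H_1)/m$ and $p_{vt}\ge p_v(H_2)/m$, and applies the pairwise inequality only to that one pair in expectation over the other nodes; this is where the factor $1/m$ in the statement comes from. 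You instead retain the pairwise inequality for all pairs $(s,t)\in H_1\times H_2$, deduce that for each fixed $\rho$ at least one of the two sets has every conditional matching value at least $D/2$, and sum over that entire set. This removes the $1/m$ loss and proves the stronger bound $\Exp[\MM]\ge\frac{1}{2}\min\{p_v(H_1),p_v(H_2)\}\cdot\dist(H_1,H_2)$, which implies the stated lemma since $m\ge 2$ whenever the right-hand side is nonzero. The sharper constant is not needed for the paper's correctness, but it would propagate through Lemma~\ref{lm:matchinghome} and shave a factor of $m$ off the lower bound in Q3, and hence off the sample complexity of the matching algorithm.
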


\begin{proof}
Suppose $s =\arg\max_{s'} \{p_{vs'}\mid s' \in H_1\}$,
and $t =\arg\max_{t'}\{p_{vt'}\mid t' \in H_2\}$.
Obviously, we have $p_{vs} \geq \frac{p_v(\H_1)}{m}$ and $p_{vt} \geq \frac{p_v(\H_2)}{m}$.
So it suffices to show $\Exp[\MM]\geq \min\{p_{vs},p_{vt}\}\cdot \dist(s,t)$.
We first see that
\begin{align*}
\Exp[\MM]&\geq p_{vs} \Exp[\MM\mid v \realize s]+ p_{vt} \Exp[\MM\mid v \realize t] \\
&\geq \min\{p_{vs},p_{vt}\} \Bigl(\Exp[\MM\mid v \realize s]
+\Exp[\MM\mid v \realize t]\Bigr).
\end{align*}
Then it is sufficient to prove that $\Exp[\MM\mid v \realize s]+\Exp[\MM\mid v \realize t]\geq \dist(s,t)$.
Fix a realization of all nodes except $v$. Conditioning on this realization, we consider the following two minimum perfect matchings, one for the case $v \realize s$, (denoted as $\MM_1$) and
the other one for $v \realize t$ (denoted as $\MM_2$).
Consider the symmetric difference
$$\MM_1\oplus \MM_2:=(\MM_1\setminus \MM_2)\cup (\MM_2\setminus \MM_1).$$
We can see that it is a path $(s, p_1, p_2, \ldots,p_k ,t)$, such that $(s,p_1)\in \MM_1$,$(p_1,p_2)\in \MM_2,\ldots,$ $(p_k,t)\in \MM_2$.
So $\MM_1+\MM_2 \geq  \dist(s,t)$ by the triangle inequality.
Therefore, we have $\Exp[\MM\mid v \realize s]+\Exp[\MM\mid v \realize t]\geq \dist(s,t)$.
\qed
\end{proof}

By Q1, Q2 and the above lemma, we can show that the following additional property holds.
\begin{lemma}
\label{lm:estMMhome}
\begin{enumerate}
\item
[Q3.] $\Exp[\MM]=\Omega(\frac{\e D}{nm^5})$
where $D=\max_i\{\diam(\H_i)\}$.
\end{enumerate}
\end{lemma}
\vspace{0.1cm}

\begin{proof}
Note that the stopping time $T$ must exist, because the set of all points satisfies the first two properties.
Now, we show that Q3 also holds.
Firstly, note that $D\leq 2mT$.
Secondly, consider $T'=T-\varepsilon$ for some infinitesimal $\varepsilon>0$.
At time $T'$, consider two situations:
\begin{enumerate}
\item There exists a node $v$, such that
$\forall j, p_v(\H_j)< 1-O(\frac{\e}{nm^3})$.
Then there must exist two components $C_1$ and $C_2$ such that $p_v(C_1) > \Omega(\frac{\e}{nm^3})$
and $p_v(C_2) > \Omega(\frac{\e}{nm^3})$.
Moreover, since $C_1$ and $C_2$ are two distinct components, $\dist(C_1,C_2)\geq 2T'$.
Then, by Lemma~\ref{lm:lm2core}, we have $\Exp[\MM]\geq \Omega(\frac{\e}{nm^4}) \cdot 2T\geq \Omega(\frac{\e D}{nm^5})$.
\item Suppose that Q1 is true but Q2 is still false.
Suppose $\H_j$ is a component which homes odd number of nodes.
Note that with probability at least $(1-\frac{1}{nm^3})^{n}\approx 1$,
each node is realized to a point in its \core.
When this is the case, there is at least one node in $\H_j$ that needs to be matched
with some node outside $\H_j$, which incurs a cost of at least
$2T$. \qed
\end{enumerate}
\end{proof}

\topic{Estimating $\Exp[\MM]$}
Let $\H=\cup_i \H_i$.
We use $\H\A{n}$ to denote the event that for each node $v$, $v \realize \Home(v)$.
We denote the event that there are exactly $i$ nodes which are realized out of their \core s by $\calF\A{i}$.
Again, we only need to estimate two terms:
$\Exp[\MM\mid \calF\A{0}]] \cdot \Prob[\calF\A{0}]$ and $
\Exp[\MM\mid \calF\A{1} ] \cdot \Prob[ \calF\A{1}]$.
Using Properties Q1, Q2 and Q3, we can estimate these terms in polynomial time.
Our final estimation is simply the sum of the first two terms.

\linesnotnumbered
\begin{algorithm}[h]
\label{alg:estMM}
\caption{Estimating $\Exp[\,\MM\mid \calF\A{0} \,] \cdot \Prob[ \calF\A{0}]$}
\nl Take $N_1=O(\frac{n^2m^5}{\e^4}\ln n)$ independent samples. Set $A\leftarrow \emptyset$ at the beginning.\\
\nl For each sample $G_i$, if it satisfies $\H\A{n}$, $A\leftarrow A\cup \{G_i\}$.\\
$T_0\leftarrow \frac{1}{N_1}\sum_{G_i\in A}\MM(G_i)$.\\
\end{algorithm}

\begin{lemma}\label{lm:est0MM}
Algorithm~\ref{alg:estMM} produces a $(1\pm\epsilon)$-estimate for the first term with high probability.
\end{lemma}

\begin{proof}
Note that $\Prob[\H\A{n}]$ is close to $1$ (by union bound) and can be computed exactly.
To estimate $\Exp[\MM\mid \H\A{n}]]$, the algorithm takes the average of $N_1=O(\frac{n^2m^5}{\e^4}\ln n)$ samples. Note that conditioning on $\H\A{n}$, the minimum perfect matching could be at most $nD$.
We distinguish the following two cases.
\begin{enumerate}
\item
$\Exp[\MM\mid \H\A{n}]\geq \frac{\e}{2}\Exp[\MM]=\Omega(\frac{\e^2 D}{nm^5})$. We can get a $(1\pm \epsilon)$-approximation using the Monte Carlo method with $O(\frac{n^2m^5}{\e^4}\ln n)$ samples. Therefore $\MM$ is poly-bounded conditioning on $\H\A{n}$.
\item
$\Exp[\MM\mid \H\A{n}]< \frac{\e }{2}\Exp[\MM]$. Then the probability that the sample average is larger than $\epsilon \Exp[\MM]$ is at most $\poly(\frac{1}{n})$ by Chernoff Bound.
We can thus ignore this part safely. \qed
\end{enumerate}
\end{proof}

\vspace{-0.3cm}
\begin{algorithm}[h]
\label{alg:estMM2}
\caption{Estimating $\Exp[\,\MM\mid \calF\A{1} \,] \cdot \Prob[ \calF\A{1}]$}
\nl For each node $v$, set $B_v\leftarrow \{s\mid s\in \calP\setminus \H(v), \dist(s,\H(v))<\frac{4nD}{\e }$. Let $\Cl(v)$ be the event that $v$ is the only node that is realized to some point $s\in B_v$.\\
\nl Conditioning on $\Cl(v)$, take $N_1=O\bigl( \frac{n^2m^5}{\e^{4}}\ln n\bigr)$ independent samples.
Let $A_v\leftarrow \{G_{v,i}\mid 1\leq i\leq N_2\}$ be the set of $N_1$ samples for $\Cl(v)$.\\
\nl $T_v\leftarrow \frac{1}{N_1}\sum_{G_{v,i}\in A_v}\MM(G_{v,i})$  ~~~(estimating $\Exp[\,\MM\mid \Cl(v)]$)\\
\nl $T_1\leftarrow \sum_{v\in \V}\Bigl(\Prob[\Cl(v)]T_v+\sum_{s\in \calF\setminus B_v}\Prob[\calF\A{v}\wedge v\realize s]\,\dist(s,\H(v))\,\Bigr)$.\\
\end{algorithm}

\begin{lemma}\label{lm:est1MM}
Algorithm~\ref{alg:estMM} produces a $(1\pm\epsilon)$-estimate for the second term with high probability.
\end{lemma}

\topic{Analysis}
Note that the number of samples is asymptotically dominated by estimating $\Exp[\,\MM\mid \calF\A{1} \,] \cdot \Prob[ \calF\A{1}]$. For each node $v\in \V$, we take $N_1$ independent samples. Thus, we need to take $O\bigl( \frac{n^3m^5}{\e^{4}}\ln n\bigr)$ independent samples in total.

We still need to show that for $i>1$, the contribution from event $\calF\A{i}$ is negligible.
Suppose $S$ is the set of nodes that are realized out of their \core s.
We use $\calF_S$ and $\calH_{\bS}$
to denote the set of all realizations of the all nodes in $S$ to points out of their \core s,
and the set of realizations of $\bS=V\setminus S$ to points in their \core s respectively.
We use $\MM(F_S, H_{\bS})$ to denote the length of the minimum perfect matching
under the realization $(F_S, H_{\bS})$, where $F_S\in \calF_S$ and $H_{\bS}\in \calH_{\bS}$.
The following combinatorial fact plays the same role
in the charging argument as Lemma~\ref{lm:mstchange} does in the previous section.
Differing from the MST problem,
we can not achieve a similar bound as the one in Lemma~\ref{lm:mstchange} since
$\MM(F_S, H_{\bS})$ may decrease significantly if we send only one node outside its \core\ back to its \core.
However, we show that in such case, if we send one more node back to its \core,
$\MM(F_S, H_{\bS})$ can still be bounded.

We need the following structural result about minimum perfect matchings,
which is essential for our charging argument.

\begin{lemma}
\label{lm:del1or2}
Fix a realization $(F_S,H_{\bS})$.
We use  $\ell(v)$ to denote $\dist(v, \Home(v))$ for all nodes $v\in S$.
Suppose $v_1\in S$ has the smallest $\ell$ value and $v_2$ has the second smallest $\ell$ value.
Let $S'=S\setminus \{v_1\}$, $S''=S'\setminus \{v_2\}$.
Further let $(F_{S'}, H_{\bS'})$ be a realization obtained from $(F_S, H_{\bS})$ by sending $v_1$ to
a point in its \core\ $\H(v_1)$
and
$(F_{S''}, H_{\bS''})$ be a realization obtained from $(F_{S'}, H_{\bS'})$ by sending $v_2$ to
a point in its \core\ $\H(v_2)$.
Then we have that
$
\MM(F_S,H_{\bS})\leq 2(m+2)\MM(F_{S'},H_{\bS'})+2(m+2)\MM(F_{S''},H_{\bS''})
$
\end{lemma}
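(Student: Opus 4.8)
The plan is to build a cheap perfect matching for $(F_S,H_{\bS})$ by \emph{repairing} an optimal matching of the reduced realization $(F_{S'},H_{\bS'})$, and then to charge the repair cost back to the two reduced instances. Let $M'$ be a minimum perfect matching for $(F_{S'},H_{\bS'})$; in this realization $v_1$ sits at some point $g_1\in\Home(v_1)$ while $v_2$ is still at its out-of-home location. Relocating $v_1$ from $g_1$ to its out-of-home location affects only the single edge of $M'$ incident to $v_1$, and by the triangle inequality that edge lengthens by at most $\dist(g_1,v_1)\le \ell(v_1)+\diam(\Home(v_1))\le \ell(v_1)+D$. Hence
\[
\MM(F_S,H_{\bS})\ \le\ \MM(F_{S'},H_{\bS'})+\ell(v_1)+D,
\]
and it remains to bound $\ell(v_1)+D$ by $O(m)\bigl(\MM(F_{S'},H_{\bS'})+\MM(F_{S''},H_{\bS''})\bigr)$.

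For the $\ell(v_1)$ term, recall that $v_1$ has the smallest and $v_2$ the second smallest value of $\ell$, so $\ell(v_1)\le\ell(v_2)$. The realizations $(F_{S'},H_{\bS'})$ and $(F_{S''},H_{\bS''})$ differ only in the location of $v_2$ (out of home versus at a point of $\Home(v_2)$), so the argument in the proof of Lemma~\ref{lm:lm2core} applies: overlaying the two optimal matchings produces an alternating path between the two locations of $v_2$, whence $\MM(F_{S'},H_{\bS'})+\MM(F_{S''},H_{\bS''})\ge\dist\bigl(v_2,\Home(v_2)\bigr)=\ell(v_2)$. Therefore $\ell(v_1)\le\ell(v_2)\le\MM(F_{S'},H_{\bS'})+\MM(F_{S''},H_{\bS''})$.

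For the $D$ term I would use the properties of the clustering that produced the homes (Lemma~\ref{lm:matchinghome}): $D\le 2mT$, where $T$ is the stopping time of the process, and two distinct homes are at distance at least $2T$. In $(F_{S'},H_{\bS'})$ the home $\Home(v_2)$ carries an odd number of nodes (its owner is outside), so $M'$ contains an edge leaving $\Home(v_2)$; either its other endpoint lies in a different home, so the edge has length at least $2T$ and $\MM(F_{S'},H_{\bS'})\ge 2T\ge D/m$, or its other endpoint is an out-of-home node, in which case the edge has length at least that node's distance to its home, which is at least $\ell(v_2)$ and is already controlled by the previous paragraph. Collecting the inequalities (and checking the boundary case $|S|=2$, where $(F_{S''},H_{\bS''})$ is the all-at-home realization, separately) yields $\MM(F_S,H_{\bS})\le 2(m+2)\bigl(\MM(F_{S'},H_{\bS'})+\MM(F_{S''},H_{\bS''})\bigr)$.

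The delicate point — and the reason the lemma must refer to two reduced instances rather than one, unlike Lemma~\ref{lm:mstchange} — is precisely this last case: when every forced edge out of a parity-broken home terminates at an out-of-home node, the matching cost of a single reduced realization need not control $D$ on its own, which is the sense in which ``$\MM$ may drop sharply when only one node is sent home.'' The ``missing'' $D$ has to be recovered through $(F_{S''},H_{\bS''})$ together with the fact that $v_1$ and $v_2$ are the two nodes closest to their homes, and making the resulting estimates fit together with the stated constant $2(m+2)$ is where the real work lies.
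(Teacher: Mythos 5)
Your first two paragraphs are correct and give a cleaner, case-free route than the paper's. The paper instead splits into three cases according to whether $\MM(F_S,H_{\bS})$ lies below $\d/2$, above $(m+2)\d$, or in between (with $\d=\ell(v_1)$): the first case uses the alternating-path argument on the pair $(F_S,H_{\bS})$, $(F_{S'},H_{\bS'})$; the second uses exactly your repair bound $\MM(F_S,H_{\bS})\le\MM(F_{S'},H_{\bS'})+\d+D$ together with $D\le m\d$; and the middle case falls back on the alternating path between $(F_{S'},H_{\bS'})$ and $(F_{S''},H_{\bS''})$, as you do. You have in effect collapsed the three cases into a single chain of inequalities, and your accounting gives $(m+2)\MM(F_{S'},H_{\bS'})+(m+1)\MM(F_{S''},H_{\bS''})$, which is within the stated bound. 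The ingredients are the same; the organization is genuinely different and arguably tighter.

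The one step that fails is your third paragraph. The claim that $\Home(v_2)$ carries an odd number of realized nodes in $(F_{S'},H_{\bS'})$ is false in general: property Q2 only guarantees that an \emph{even} number of nodes call $\Home(v_2)$ home, and if two of them (say $v_2$ and some $v_3$) both lie in $S'$, the realized count in $\Home(v_2)$ is again even and $M'$ may match it entirely internally, leaving no edge to charge. The fallback inside that paragraph is also off: an edge from $\Home(v_2)$ to an out-of-home node $u$ has length at least $\dist(u,\Home(v_2))$, not $\ell(u)=\dist(u,\Home(u))$. Fortunately none of this is needed. The clustering gives $D\le 2mT$, and every point of $F$ lies in a component distinct from every home, so $\ell(v)\ge 2T$ for each $v\in S$; this is precisely the inequality $\d\ge D/m$ that the paper records at the start of its own proof. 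Hence $D\le m\,\ell(v_1)\le m\bigl(\MM(F_{S'},H_{\bS'})+\MM(F_{S''},H_{\bS''})\bigr)$ by your second paragraph, and the entire third paragraph collapses to one line. With that substitution your argument is complete and the constant $2(m+2)$ follows with room to spare.
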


\begin{proof}
Let $\d=\min_v \ell(v)$ and $D=\max_i \diam(\H_i)$. Note that $d\geq \frac{D}{m}$ as $d\geq 2T$ and $D\leq 2mT$.
We distinguish the following three cases:
\begin{enumerate}
\item
$\MM(F_S,H_{\bS})\leq \frac{\d}{2}$.
Using a similar argument to the one in Lemma~\ref{lm:lm2core}, we have
$$
\MM(F_{S'},H_{\bS'})+\MM(F_S,H_{\bS})\geq \ell(v)=\d
$$
So, we have  $\MM(F_S,H_{\bS})\leq \MM(F_{S'},H_{\bS'})$ in this case.
\item
$\MM(F_S,H_{\bS})\geq (m+2)\d$.
By the triangle inequality,  we can see that
$$
\MM(F_{S'},H_{\bS'})+(m+1)d\geq \MM(F_{S'},H_{\bS'})+d+D\geq \MM(F_S,H_{\bS})
$$
So, we have $\MM(F_S,H_{\bS})\leq (m+2)\MM(F_{S'},H_{\bS'})$.
\item
$\frac{\d}{2}\leq \MM(F_S,H_{\bS})\leq (m+2)\d$.
\begin{enumerate}
\item
$\MM(F_{S'},H_{\bS'})\geq \frac{\d}{2}$.
We directly have $\MM(F_S,H_{\bS})\leq 2(m+2)\MM(F_{S'},H_{\bS'})$.
\item
$\MM(F_{S'},H_{\bS'})\leq \frac{\d}{2}$.
By Lemma~\ref{lm:lm2core}, we have
$$
\MM(F_{S'},H_{\bS'})+\MM(F_{S''},H_{\bS''})\geq \d
$$
Then we have $\MM(F_S,H_{\bS})\leq 2(m+2)\MM(F_{S''},H_{\bS''})$.
\end{enumerate}
\end{enumerate}
In summary, we prove the lemma. \qed
\end{proof}

The remaining is to establish the following key lemma. The proof is
similar to, but more involved than that of Lemma~\ref{lm:mstcharge}.
\begin{lemma}
\label{lm:matchingcharge}
For any $\epsilon>0$,
if $\H$ satisfies the properties Q1, Q2 in Algorithm~\ref{alg:coreMM}, we have that
$$
\sum_{i> 1}\Exp[\MM\mid \calF\A{i}]\cdot \Prob[\calF\A{i}]\leq \epsilon\cdot \Exp[\MM\mid \calF\A{0}] \cdot \Prob[\calF\A{0}]+\epsilon\cdot \Exp[\MM\mid \calF\langle 1\rangle]\cdot \Prob[\calF\A{1}].
$$
\end{lemma}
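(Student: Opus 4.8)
The plan is to follow the seller/buyer charging argument in the proof of Lemma~\ref{lm:mstcharge}, with one essential modification dictated by Lemma~\ref{lm:del1or2}: a ``seller'' term at level $i+1$ must now be paid \emph{partly by level-$i$ buyers and partly by level-$(i-1)$ buyers}, instead of by level-$i$ buyers alone. Write $G_i=\Exp[\MM\mid F\A{i}]\cdot\Prob[F\A{i}]$, so that $\sum_{i\ge 0}G_i=\Exp[\MM]$ and the lemma asserts $\sum_{i>1}G_i\le\epsilon(G_0+G_1)$. I would first reduce this to the per-level estimate
$$
G_{i+1}\ \le\ \tfrac{\epsilon}{4}\bigl(G_i+G_{i-1}\bigr)\qquad\text{for every }i\ge 1 .
$$
Granting this, summing over $i\ge 1$ gives $\sum_{j\ge 2}G_j=\sum_{i\ge 1}G_{i+1}\le\tfrac{\epsilon}{4}\bigl(\sum_{i\ge 1}G_i+\sum_{i\ge 0}G_i\bigr)\le\tfrac{\epsilon}{2}\sum_{i\ge 0}G_i=\tfrac{\epsilon}{2}\bigl(G_0+G_1+\sum_{j\ge 2}G_j\bigr)$, and rearranging (using $\epsilon\le 1$) yields $\sum_{j\ge 2}G_j\le\epsilon(G_0+G_1)$.

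To prove the per-level estimate, fix $i\ge 1$ and, as in Lemma~\ref{lm:mstcharge}, expand $G_{i+1}=\sum_{|S|=i+1}\sum_{F_S\in\calF_S}\sum_{H_{\bS}\in\calH_{\bS}}C(F_S,H_{\bS})$ with $C(F_S,H_{\bS})=\Prob[(F_S,H_{\bS})]\cdot\MM(F_S,H_{\bS})$. For a seller $(F_S,H_{\bS})$ let $v_1,v_2\in S$ be the nodes with the smallest and second-smallest values of $\ell(v)=\dist(v,\Home(v))$ (fixing any tie-break), and let $(F_{S'},H_{\bS'})$ and $(F_{S''},H_{\bS''})$ be obtained by sending $v_1$, respectively $v_1$ and $v_2$, back to \emph{arbitrary} points of their homes. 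Since the bound of Lemma~\ref{lm:del1or2} holds for every such choice of the home points, I would average it over $t\in\Home(v_1)$ with weights $p_{v_1 t}/p_{v_1}(\Home(v_1))$ (and similarly over $t'\in\Home(v_2)$), multiply through by $\Prob[(F_S,H_{\bS})]$, and use the identity $\Prob[(F_S,H_{\bS})]\cdot p_{v_1 t}=\Prob[(F_{S'},H_{\bS'})\text{ with }v_1\mapsto t]\cdot p_{v_1 s}$, where $s$ is the location of $v_1$ in the seller. This shows that $C(F_S,H_{\bS})$ is at most the sum of two quantities: (type~A) $2(m+2)\tfrac{p_{v_1 s}}{p_{v_1}(\Home(v_1))}$ times $\sum_{t\in\Home(v_1)}C\bigl(\text{the level-}i\text{ realization with }v_1\mapsto t\bigr)$, and (type~B) the analogous expression carrying the extra factor $\tfrac{p_{v_2 s'}}{p_{v_2}(\Home(v_2))}$ and summed over level-$(i-1)$ realizations. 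The seller then pays the first quantity, spread as ``type~A'' charges over the corresponding level-$i$ realizations, and the second, spread as ``type~B'' charges over the corresponding level-$(i-1)$ realizations; every seller term is thereby fully paid.

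It remains to bound the total charge received by a fixed buyer. A level-$i$ realization $(F_T,H_{\bar{T}})$ receives a type~A charge only from sellers of the form $S=T\cup\{v_1\}$ with $v_1\in\bar{T}$ and $v_1$ the smallest-$\ell$ node of $S$; dropping the last (only restrictive) condition overestimates the charge, which is then at most $\sum_{v_1\in\bar{T}}\bigl(\sum_{s\in F}2(m+2)\tfrac{p_{v_1 s}}{p_{v_1}(\Home(v_1))}\bigr)C(F_T,H_{\bar{T}})=\sum_{v_1\in\bar{T}}2(m+2)\tfrac{p_{v_1}(F)}{p_{v_1}(\Home(v_1))}\,C(F_T,H_{\bar{T}})$. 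By property Q1 we have $p_{v_1}(F)\le 1-p_{v_1}(\Home(v_1))=O(\tfrac{\epsilon}{nm^3})$ and $p_{v_1}(\Home(v_1))\ge\tfrac12$, so this is $O(\tfrac{\epsilon}{m^2})\,C(F_T,H_{\bar{T}})\le\tfrac{\epsilon}{4}C(F_T,H_{\bar{T}})$ once the hidden constant in Q1 is made small enough; summing over all level-$i$ buyers bounds the total type~A charge by $\tfrac{\epsilon}{4}G_i$. An analogous computation, now summing over the at most $\binom{n}{2}$ unordered pairs $\{v_1,v_2\}\subseteq\bar{T}$, bounds the total type~B charge received by a level-$(i-1)$ buyer $(F_T,H_{\bar{T}})$ by $\binom{n}{2}\cdot 2(m+2)\cdot O\bigl((\tfrac{\epsilon}{nm^3})^2\bigr)\,C(F_T,H_{\bar{T}})=O\bigl(\tfrac{\epsilon^2}{m^5}\bigr)\,C(F_T,H_{\bar{T}})\le\tfrac{\epsilon}{4}C(F_T,H_{\bar{T}})$, hence the total type~B charge across all level-$(i-1)$ buyers is at most $\tfrac{\epsilon}{4}G_{i-1}$. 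Since each level-$(i+1)$ seller term is covered by its type~A and type~B payments, $G_{i+1}\le\tfrac{\epsilon}{4}(G_i+G_{i-1})$, proving the per-level estimate and hence the lemma.

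The step I expect to be the main obstacle is exactly this two-level bookkeeping forced by Lemma~\ref{lm:del1or2}: in contrast to the MST case (Lemma~\ref{lm:mstcharge}), a single ``consistency'' relation and a single normalization factor no longer suffice, so one must juggle the one-node move and the two-node move simultaneously, check that the probability-ratio cancellation (the displayed identity above) still carries through for the two-node move, and make sure the two factors of $2(m+2)$ coming from Lemma~\ref{lm:del1or2} are absorbed by the $O(\tfrac{\epsilon}{nm^3})$ slack guaranteed by property Q1. The remaining pieces --- the reduction to the per-level estimate and the observation that the ``$v_1$ smallest, $v_2$ second-smallest'' restriction may be dropped when one only wants an upper bound on a buyer's charge --- are routine.
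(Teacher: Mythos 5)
Your proposal is correct and follows essentially the same route as the paper: a per-level recursion $G_{i+1}\le O(\epsilon)\,(G_i+G_{i-1})$ established by a two-tier charging scheme in which each level-$(i+1)$ term is paid partly by level-$i$ buyers and partly by level-$(i-1)$ sub-buyers via Lemma~\ref{lm:del1or2}, with the probability-ratio cancellation and property Q1 absorbing the $2(m+2)$ factors. Your explicit weights $p_{v_1 t}/p_{v_1}(\Home(v_1))$ are exactly the paper's $Z$-normalization in disguise, and your direct summation of the recursion (for all $i\ge 1$, which is what is actually needed) is a slightly cleaner version of the paper's pairwise telescoping.
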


\begin{proof}
We claim that for any $i > 1$,
$$
\Exp[\MM\mid \calF\A{i+1}]\cdot \Prob[\calF\A{i+1}]\leq
\frac{\epsilon}{6}\bigl(\,\Exp[\MM\mid \calF\A{i}]\cdot \Prob[\calF\A{i}]+\Exp[\MM\mid \calF\A{i-1}]\cdot \Prob[\calF\A{i-1}]\,\bigr)
$$
If the claim is true, the lemma can be proven easily as follows.
For ease of notation, we use $A(i)$ to denote $\Exp[\MM\mid \calF\A{i}]\cdot \Prob[\calF\A{i}]$.
First, we can see that
\begin{align*}
A(i+2)+A(i+1) \leq \frac{\epsilon}{6} A(i+1) +\frac{2\epsilon}{6}A(i) +\frac{\epsilon}{6} A(i-1)\leq \frac{\epsilon}{2}(A(i)+A(i-1)).
\end{align*}
So if $i$ is odd,
$A(i+2)+A(i+1) \leq (\frac{\epsilon}{2})^{(i+1)/2}(A(1)+A(0)).$
Therefore, $\sum_{i>1} A(i)\leq \frac{\epsilon/2}{1-\epsilon/2} (A(1)+A(0))\leq \e(A(1)+A(0))$.
Now, we prove the claim. Again, we rewrite the LHS as
$$
\Exp[\MM\mid \calF\A{i+1}]\cdot \Prob[\calF\A{i+1}]
=\sum_{|S|=i+1} \sum_{F_S} \sum_{H_{\bS}}
\Bigl(\,
\Prob[ F_S,  H_{\bS}]\cdot
\MM(F_S, H_{\bS})\,\Bigr).
$$
Similarly, we have the RHS to be
$$
\Exp[\MM\mid \calF\A{i}]\cdot \Prob[\calF\A{i}]
=\sum_{|S'|=i} \sum_{F_{S'}} \sum_{H_{\bS'}}
\Bigl(\,
\Prob[ F_{S'},  H_{\bS'}]\cdot
\MM(F_{S'}, H_{\bS'})\,\Bigr) \text{ and }
$$
$$
\Exp[\MM\mid \calF\A{i-1}]\cdot \Prob[\calF\A{i-1}]
=\sum_{|S^{''}|=i-1} \sum_{F_{S^{''}}} \sum_{H_{\bS''}}
\Bigl(\,
\Prob[F_{S^{''}},  H_{\bar{S^{''}}}]\cdot
\MM(F_{S^{''}}, H_{\bar{S^{''}}})\,\Bigr).
$$
Let $C(F_S, H_{\bS})=
\Prob[ F_S,  H_{\bS}]\cdot \MM(F_S, H_{\bS})$.
Consider all $(F_{S'},H_{\bS'})$ with $|S^{'}|=i$
and all $(F_{S''},H_{\bS''})$ with $|S^{''}|=i-1$
as buyers.
The buyers want to buy all terms in LHS.
The budget of buyer $(F_{S'},H_{\bS'})/(F_{S''},H_{\bS''})$ is $C(F_{S'},H_{\bS'})/C(F_{S''},H_{\bS''})$.
We show there is a charging scheme such that
each term $C(F_{S}, H_{\bS})$ is fully paid by the buyers and
each buyer spends at most an $\frac{\e}{6}$ fraction of her budget.

Suppose we are selling the term $C(F_S, H_{\bS})$.
Consider the following charging scheme.
Suppose $v_1\in S$ the node that is realized to point $s_1\in \calP\setminus \H(v_1)$
which is the closest point to its \core\ in $F_S$.
Suppose $v_2\in S$ the node that is realized to point $s_2\in \calP\setminus \H(v_2)$
which is the second closest point to its \core\ in $F_S$.
Let $S'=S\setminus \{v_1\}$, $S''=S'\setminus \{v_2\}$.
If $(F_{S'}, H_{\bS'})$ is obtained from $(F_S, H_{\bS})$ by sending $v_1$ to
a point in its \core\ $\Home(v_1)$,
we say $(F_{S'}, H_{\bS'})$
is consistent with  $(F_{S}, H_{\bS})$,
denoted as $(F_{S'}, H_{\bS'})\consistent (F_{S}, H_{\bS})$.
If $(F_{S''}, H_{\bS''})$ is obtained from $(F_{S'}, H_{\bS'})$ by sending $v_2$ to
a point in its \core\ $\Home(v_2)$,
we say $(F_{S''}, H_{\bS''})$
is consistent with  $(F_{S'}, H_{\bS'})$,
denoted as $(F_{S'}, H_{\bS'})\consistent (F_{S}, H_{\bS})$.
Let
$$
Z(F_{S}, H_{\bS})=
\sum_{(F_{S'}, H_{\bS'})\consistent (F_{S}, H_{\bS})} \Prob[(F_{S'}, H_{\bS'})],
\quad\text { and }$$
$$
Z(F_{S'}, H_{\bS'})=
\sum_{(F_{S''}, H_{\bS''})\consistent (F_{S'}, H_{\bS'})} \Prob[F_{S''}, H_{\bS''}]
$$
Now, we claim that for any fixed $(F_{S''}, H_{\bS''})$,
$$
 \sum_{(F_{S'}, H_{\bS'})\consistent (F_{S''}, H_{\bS''})} \frac{\Prob[F_{S'},  H_{\bS'}]}
{Z(F_{S'}, H_{\bS'})}
\leq
\sum_{v\in \bS''} \frac{\Prob[v\notin \Home(v)]}{\Prob[v\in \Home(v)]}.
$$
The proof of the claim is essentially the same as in
Lemma~\ref{lm:mstcharge}.
We first observe that for a fixed node $v=S'\setminus S''$, the denominators of all terms are in fact the same by the definition of $Z$.
Then, the proof can be completed by canceling out the same multiplicative terms from the numerators and the denominator.

Now, we specify how to charge each buyer.
For each buyer $(F_{S'},H_{\bS'})\sim (F_{S}, H_{\bS})$,
we charge $(F_{S'}, H_{\bS'})$ the following amount of money
$$
2(m+2) \Prob[ F_S,  H_{\bS}] \MM(F_{S'}, H_{\bS'})\cdot
\frac{\Prob[F_{S'},  H_{\bS'}]}
{Z(F_{S}, H_{\bS})},
$$
and we charge each buyer $(F_{S''}, H_{\bS''})$ consistent with $(F_{S'}, H_{\bS'})$ the following amount of money
\begin{align*}
2(m+2) \Prob[ F_S'',  H_{\bS''}] \MM(F_{S''}, H_{\bS''})
\cdot
\frac{\Prob[F_{S}, H_{\bS}]}
{Z(F_{S}, H_{\bS})}\cdot \frac{\Prob[F_{S'},  H_{\bS'}]}
{Z(F_{S'}, H_{\bS'})}.
\end{align*}
In this case, we call $(F_{S''}, H_{\bS''})$ a {\em sub-buyer} of the term $C(F_S, H_{\bS})$.
By Lemma~\ref{lm:del1or2}, we can see that $A(F_S, H_{\bS})$ is fully paid.
To prove the claim,
it suffices to show that each buyer $(F_{S'}, H_{\bS'})$ and each sub-buyer $(F_{S''}, H_{\bS''})$
has been charged at most $\frac{\epsilon}{6}A(F_{S'}, H_{\bS'})$ dollars.
By the above charging scheme,
the terms in LHS that are charged to buyer $(F_{S'}, H_{\bS'})$
are consistent with $(F_{S'}, H_{\bS'})$.
Using the same argument as in Lemma~\ref{lm:mstcharge},
we can show that the spending of $(F_{S'}, H_{\bS'})$ as a buyer
is at most
$$\frac{\epsilon}{nm}\cdot \MM(F_{S'}, H_{\bS'})\cdot \Prob[F_{S'},  H_{\bS'}].$$

For notational convenience, we let
$B= 2(m+2)  \MM(F_{S''}, H_{\bS''})\Prob[ F_{S''},  H_{\bS''}]$.
The spending of $(F_{S''}, H_{\bS''})$ as a sub-buyer can be bounded as follows:
\begin{align*}
&B
\cdot\sum_{(F_{S'}, H_{\bS'})\consistent (F_{S''}, H_{\bS''})}
\sum_{(F_{S}, H_{\bS})\consistent (F_{S'}, H_{\bS'})}
\left(\,
\frac{\Prob[F_{S}, H_{\bS}]}{Z(F_{S}, H_{\bS})}\cdot
\frac{\Prob[F_{S'},  H_{\bS'}]} {Z(F_{S'}, H_{\bS'})}
\,\right)\\
\leq &\,B\cdot
 \sum_{(F_{S'}, H_{\bS'})\consistent (F_{S''}, H_{\bS''})}
 \sum_{(F_{S}, H_{\bS})\consistent (F_{S'}, H_{\bS'})}
  \frac{\Prob[F_{S'},  H_{\bS'}]}
{Z(F_{S'}, H_{\bS'})} \\
\leq & B\cdot
 mn \cdot
 \sum_{(F_{S'}, H_{\bS'})\consistent (F_{S''}, H_{\bS''})} \frac{\Prob[F_{S'},  H_{\bS'}]}
{Z(F_{S'}, H_{\bS'})} \\
\leq &\, B\cdot mn \cdot
\sum_{v\in \bS''} \frac{\Prob[v\notin \Home(v)]}{\Prob[v\in \Home(v)]} \\
\leq &\, \frac{\e}{6}\cdot \MM(F_{S''}, H_{\bS''}) \cdot \Prob[F_{S''},  H_{\bS''}]
\end{align*}
In the first inequality, we use the fact that $\frac{\Prob[F_{S}, H_{\bS}]}{Z(F_{S}, H_{\bS})}\leq 1$.
Note that for each $(F_{S'}, H_{\bar{S}'})$, there are at most $mn$ different $(F_{S}, H_{\bar{S}})$
such that $(F_{S}, H_{\bar{S}})\consistent (F_{S'}, H_{\bar{S}'})$. So we have the second inequality.
This completes the proof of the lemma.
\qed
\end{proof}

\begin{theorem}
\label{thm:mm}
Assuming the locational uncertainty model and that the number of nodes is even,
there is an FPRAS for estimating the expected length of
the minimum perfect matching.
\end{theorem}

\topic{Remark}
We have also tried to use the HPF method for this problem.
The problem can be essentially reduced to the following bins-and-balls problem:
Again each ball is thrown to the bins with nonuniform probabilities and
we want to estimate the probability that
each bin contains even number of balls. To the best of our knowledge, the problem is not studied before.
The structure of the problem is somewhat similar to the permanent problem.
We attempted to use the MCMC technique developed in \cite{jerrum2004permanent},
but the details become overly messy and we have not been able to provide a complete proof.

\section{Minimum Cycle Covers}
\label{app:cc}
In this section, we consider the expected length of minimum cycle cover problem.
In the deterministic version of the cycle cover problem,
we are asked to find a collection of node-disjoint cycles such that
each node is in one cycle and the total length is minimized.
Here we assume that each cycle contains at least two nodes.
If a cycle contains exactly two nodes, the length of the cycle is two times the
distance between these two nodes.
The problem can be solved in polynomial time by
reducing the problem to a minimum bipartite perfect matching problem.
\footnote{
If we require each cycle consist at least three nodes,
the problem is still poly-time solvable by a reduction to minimum perfect matching by Tutte~\cite{tutte54}.
Hartvigsen~\cite{harvigsen84} obtained a  polynomial time algorithm for
minimum cycle cover with each cycle having at least 4 nodes
Cornu\'{e}jols and Pulleyblank \cite{corn1980} have reported that Papadimitriou showed the NP-completeness of
minimum cycle cover with each cycle having at least 6 nodes.
}
W.l.o.g., we assume that no two edges in $\P \times \P$ have the same length.
For ease of exposition, we assume that for each point, there is only one node that may realize
at this point.
In principle, if more than one nodes may realize at the same point, we can
create multiple copies of the point co-located at the same place,
and impose a distinct infinitesimal distance between each pair of copies,
to ensure that no two edges have the same distance.


We need the notion of the nearest neighbor graph, denoted by $\NN$ .
For an undirected graph, an edge $e=(u,v)$ is in the nearest neighbor graph
if $u$ is the nearest neighbor of $v$, or vice versa.
We also use $\NN$ to denote its length.
$\Exp[\NN]$ can be computed exactly in polynomial time \cite{kamousi2011stochastic}.
\eat{
In fact, this can be seen by noting that
$
\Exp[\NN]=\sum_{e}c_e \cdot \Prob[e\in \NN]
$
and
$\Prob[e\in \NN] $ can be computed in poly-time.
To see this, note that $e=(s,t)$ is in $\NN$ if and only if at least one of the following two events happen
(1) both points $s$ and $t$ are chosen and no other points are chosen in $\B(s, \dist(e))$ and
(2) both points $s$ and $t$ are chosen and no other points are chosen in $\B(s, \dist(e))$.
}
As a warmup, we first show that $\Exp[\NN]$ is a 2-approximation of $\Exp[\CC]$ in the following lemma.
\begin{lemma}
\label{lmNNCC}
$
\Exp[\NN]\leq \Exp[\CC] \leq 2\Exp[\NN].
$
\end{lemma}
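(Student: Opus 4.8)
The goal is to sandwich $\Exp[\CC]$ between $\Exp[\NN]$ and $2\Exp[\NN]$. Since both sides are sums over realizations weighted by $\Prob[\r]$, by linearity of expectation it suffices to prove the deterministic inequality $\NN(\r) \le \CC(\r) \le 2\NN(\r)$ for every fixed realization $\r$; then multiply by $\Prob[\r]$ and sum. So the entire argument reduces to a purely combinatorial statement about a fixed finite point set in a metric space.

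\textbf{Lower bound $\NN(\r)\le\CC(\r)$.} The plan is to fix any minimum cycle cover of the realized points and argue that its edge set ``contains enough to pay for'' the nearest-neighbor graph. For each vertex $v$, in the cycle cover $v$ has two incident edges (or, in a $2$-cycle, one edge counted with multiplicity two); let $e_v$ be the shorter of the two incident cycle edges. Then $\dist(e_v)\ge \dist(v, \NN(v))$ since $e_v$ connects $v$ to some other realized point, which is at least as far as $v$'s nearest neighbor. The subtlety is double counting: each cycle edge $(u,w)$ might be charged as both $e_u$ and $e_w$, and the $\NN$ graph may contain fewer edges than $n$ (an edge $(s,t)$ appears once even if $s,t$ are mutual nearest neighbors). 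I would instead directly compare $\sum_{v}\dist(v,\NN(v))$, which upper-bounds $2\NN(\r)$... actually the cleaner route: since each cycle has length at least the sum over its vertices of (half the shorter incident edge) is awkward. The intended argument is almost surely: $\CC(\r)=\sum_{\text{cycles }C}\mathrm{len}(C)$ and each cycle of $\ge 2$ vertices, traversed, gives each of its vertices an incident edge of length $\ge \dist(v,\NN(v))$; summing the shorter incident edge over all $v$ double-counts each chosen edge at most twice but also each edge of $\NN$ is the shorter incident edge of at least one endpoint, giving $\CC(\r)\ge \tfrac12\sum_v \dist(v,\NN(v)) \ge \NN(\r)$, the last step because $\NN(\r)=\sum_{e\in\NN}\dist(e)\le \sum_v \dist(v,\NN(v))$ (each $\NN$-edge counted at most twice on the right, at least once).

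\textbf{Upper bound $\CC(\r)\le 2\NN(\r)$.} Here the plan is to build a feasible cycle cover out of $\NN$. The nearest-neighbor graph is a forest of ``functional-graph'' type components: each component has at most one cycle, and that cycle has length exactly two (two mutual nearest neighbors), with trees hanging off it. For each connected component of $\NN$, take a spanning walk (a DFS traversal that uses each edge twice) to get a closed walk of length $\le 2\cdot(\text{sum of edge lengths in that component})$, then short-cut repeated vertices using the triangle inequality to obtain a cycle through exactly the vertices of that component of length $\le 2\,\NN(\text{component})$. Components of size $1$ cannot occur since every vertex has a nearest neighbor, so each component has $\ge 2$ vertices and yields a valid cycle. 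Taking the union over all components gives a cycle cover of total length $\le 2\NN(\r)$, hence $\CC(\r)\le 2\NN(\r)$.

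\textbf{Main obstacle.} The routine parts are the DFS/short-cutting for the upper bound and the summation over realizations. The one place requiring care is the lower bound's double-counting bookkeeping: making sure that charging each vertex its shorter incident cycle edge, combined with the relation between $\sum_v\dist(v,\NN(v))$ and $\NN(\r)=\sum_{e\in\NN}\dist(e)$, yields exactly the factor giving $\NN(\r)\le\CC(\r)$ with no slack lost, and handling $2$-cycles (where one edge is its vertex's shorter — and only — incident edge for both endpoints) consistently. I expect this to be a short but slightly fiddly counting argument rather than a conceptual difficulty.
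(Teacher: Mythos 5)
Your overall plan (reduce to the per-realization inequality $\NN(\r)\le\CC(\r)\le 2\NN(\r)$ and sum over realizations) is exactly the paper's, and your upper bound (double the edges of each $\NN$-component, shortcut by the triangle inequality, note every component has at least two vertices) is correct and matches the paper. The problem is the lower bound: the chain you settle on, $\CC(\r)\ge \tfrac12\sum_v \dist(v,\NN(v)) \ge \NN(\r)$, does not prove $\NN(\r)\le\CC(\r)$. The fact you cite to justify the second inequality is $\NN(\r)\le \sum_v \dist(v,\NN(v))$, which only gives $\tfrac12\sum_v \dist(v,\NN(v))\ge \tfrac12\NN(\r)$; in fact $\tfrac12\sum_v \dist(v,\NN(v))\le\NN(\r)$ always holds (each $\NN$-edge is counted at most twice in the sum), with strict inequality whenever some $\NN$-edge is non-mutual, so your second inequality is generally false. (Your parenthetical that $\sum_v\dist(v,\NN(v))$ ``upper-bounds $2\NN(\r)$'' is also backwards.) As written, your argument only yields $\CC(\r)\ge\tfrac12\NN(\r)$.

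The factor of two you lose comes from charging each vertex only its \emph{shorter} incident cycle edge and then paying another factor $\tfrac12$ for double counting. The paper's (and the correct) accounting uses \emph{both} incident edges: for every vertex $v$, both cycle edges $e_{v1},e_{v2}$ at $v$ join $v$ to some other realized point, so $\dist(e_{v1})\ge\dist(v,\NN(v))$ and $\dist(e_{v2})\ge\dist(v,\NN(v))$. Since every cycle edge is incident to exactly two vertex-slots (the $2$-cycle edge counted with multiplicity two, consistent with the convention that a $2$-cycle has length twice the edge), $\sum_v\bigl(\dist(e_{v1})+\dist(e_{v2})\bigr)=2\CC(\r)$ exactly, whence $\CC(\r)\ge\sum_v\dist(v,\NN(v))\ge\sum_{e\in\NN}\dist(e)=\NN(\r)$, the last step because each $\NN$-edge is $(v,\NN(v))$ for at least one of its endpoints. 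With this repair your proof coincides with the paper's.
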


\begin{proof}
We show that $\NN\leq \CC\leq 2\NN$ satisfies for each possible realization.
We prove the first inequality.
For each node $u$, there are two edges incident on $u$.
Suppose they are $e_{u1}$ and $e_{u2}$. We have
$
\CC=\frac{\sum_{u}(\dist(e_{u1})+\dist(e_{u2}))}{2}\geq \NN.
$
The second inequality can be seen by doubling all edges in $\NN$
and the triangle inequality. \qed
\end{proof}

We denote the longest edge in $\NN$ (and also its length) by  $\L$.
Note that $\L$ is also a random variable.
By the law of total expectation, we estimate $\Exp[\CC]$ based on the following formula:
$$
\Exp[\CC]=\sum_{e\in \P\times\P}\Prob[\L=e]\cdot \Exp[\CC\mid \L=e]
$$
It is obvious to see that $\frac{\NN}{n}\leq \L\leq \NN$.
Combined with Lemma~\ref{lmNNCC}, we have that
\begin{align}
\label{eq:CCbound}
\dist(e)\leq \Exp[\CC\mid \L=e]\leq 2n\dist(e).
\end{align}
However, it is not clear to us how to estimate $\Prob[\L=e]$
and how to take samples conditioning on event $\L=e$ efficiently.
To circumvent the difficulty, we consider some simpler events.
Consider a particular edge $e=(s,t) \in \P\times\P$.
Denote as $N_s(t)$ the event that the nearest neighbor of $s$ is $t$.
Let $L_{st}$ be the event the longest edge $\L$ in $\NN$ is $e=(s,t)$.
Let $A_s(t)=N_s(t)\wedge L_{st}$. First we rewrite $\Exp[\CC\mid \L=e]\cdot \Prob[\L=e]$ by 
\begin{align*}
\Exp[\CC\mid \L=e]\cdot \Prob[\L=e] =& \Exp[\CC\mid A_s(t)\vee A_t(s)]\cdot\Prob[A_s(t)\vee A_t(s)] \\
= & \Exp[\CC\mid A_s(t)]\cdot \Prob[A_s(t)]+ \Exp[\CC\mid A_t(s)]\cdot \Prob[A_t(s)]\\
  &- \Exp[\CC\mid A_s(t)\wedge A_t(s)]\cdot \Prob[A_s(t)\wedge A_t(s)]
\end{align*}
Now, we show how to estimate $\Exp[\CC\mid A_s(t)]\cdot \Prob[A_s(t)]$ for each edge $e=(s,t)$.
The other two terms can be estimated in the same way.
Also notice that the third term is less than both the first term and the second term.
Therefore, for any points $s$ and $t$, we have the following fact which is useful later:
\begin{align}
\label{eq:CCbound2}
\Exp[\CC]\geq \Exp[\CC\mid \L=e]\cdot \Prob[\L=e]\geq \Exp[\CC\mid A_s(t)]\cdot \Prob[A_s(t)].
\end{align}
By the above inequality, we can see that the total error for estimating the three terms is negligible compared to $\Exp[\CC\mid \L=e]\cdot \Prob[\L=e]$. Moreover, we have that
\begin{align*}
\Exp[\CC\mid A_s(t)]\cdot\Prob[A_s(t)]=\Exp[\CC\mid A_s(t)]\cdot\Prob[L_{st}\wedge N_s(t)]
\\=\Exp[\CC\mid A_s(t)]\cdot \Prob[L_{st}\mid N_s(t)]\cdot \Prob[N_s(t)]
\end{align*}
Suppose $v$ is the node that may be realized to point $s$
and $u$ is the node that may be realized to point $t$.
We use $\B$ as a shorthand notation for $\B(s,\dist(s,t))$.
We first observe that $\Prob[N_s(t)]$ can be computed exactly in poly-time as follows:
$$
\Prob[N_s(t)]=p_{vs}\cdot p_{ut}\cdot \prod_{w\ne v,u} \bigl(1-p_w(\B) \bigr)
$$
Also note that we can take samples conditioning on the event $N_s(t)$
(the corresponding probability distribution for node $v$ is:
$\Prob[v\realize r \mid N_s(t)] = \frac{p_{vr}}{1-p_w(\B)}$).

\vspace{0.2cm}
\noindent
\topic{ Estimating $\Exp[\,\CC\mid A_s(t)\,]\cdot \Prob[L_{st}\mid N_s(t)]$}
Next, we show how to estimate $\Exp[\CC\mid A_s(t)]\cdot\Prob[L_{st}\mid N_s(t)]$.
The high level idea is the following.
We take samples conditioning on $N_s(t)$.
If $\Prob[L_{st}\mid N_s(t)]$ is large (i.e., at least $1/\poly(nm)$),
we can get enough samples satisfying $L_{st}$, thus $A_s(t)$.
Therefore, we can get $(1\pm\e)$-approximation for
both $\Prob[L_{st}\mid N_s(t)]$ and $\Exp[\CC\mid A_s(t)]$ in poly-time
(we also use the fact that if $A_s(t)$ is true, $\CC$ is at least $\dist(s,t)$ and at most $2n\dist(s,t)$).
However, if $\Prob[L_{st}\mid N_s(t)]$ is small, it is not clear how to obtain a reasonable
estimate of this value.
In this case, we show the contribution of the term to our final answer is extremely small
and even an inaccurate estimation of the term will not affect our answer
in any significant way with high probability.

Now, we elaborate the details.
We iterate the following steps for $N$ times ($N=O(\frac{n^2m^4}{\e^3}(\ln n+\ln m))$ suffices). Since there are $O\bigl(m^2\bigr)$ different edges between points, we totally need $O(\frac{n^2m^6}{\e^3}(\ln n+\ln m))$ iterations.
\begin{itemize}
\item Suppose we are in the $i$th iteration.
We take a sample $G_i$ of the stochastic graph conditioning on the event $N_{s}(t)$.
We compute the nearest neighbor graph $\NN(G_i)$ and
the minimum length cycle cover $\CC(G_i)$.
If $e=(s,t)$ is the longest edge in $\NN(G_i)$, let $I_i=1$.
Otherwise $I_i=0$.
\end{itemize}
Our estimate of $\Exp[\,\CC\mid A_s(t)\,]\cdot \Prob[L_{st}\mid N_s(t)]$
is the following:
$$
\left( \frac{\sum_{i=1}^{N} I_i \cdot\CC(G_i) }{\sum_{i=1}^N I_i} \right)\left( \frac{\sum_{i=1}^N I_i}{N} \right)
=
\frac{\sum_{i=1}^{N} I_i\cdot \CC(G_i) }{N}
$$
It is not hard to see that the expectation of $\frac{\sum_{i=1}^{N} I_i\cdot \CC(G_i) }{N}$ is exactly
$\Exp[\,\CC\mid A_s(t)\,]\cdot \Prob[L_{st}\mid N_s(t)]$.

We distinguish the following two cases:
\begin{enumerate}
\item
$ \Prob[L_{st}\mid N_s(t)]\geq \frac{\epsilon}{2nm^4}$.
By Lemma~\ref{lm:chernoff}, $\frac{\sum_{i=1}^N I_i}{N}\in (1\pm \epsilon)\Prob[L_{st}\mid N_s(t)]$ with high probability.
In this case, we have enough successful samples (samples with $I_i=1$) to guarantee that
$\frac{\sum_{i=1}^{N} I_i \CC(G_i) }{\sum_{i=1}^N I_i}$
is a ($1\pm \e$)-approximation of $\Exp[\,\CC\mid A_s(t)\,]$ with high probability,
again by Lemma~\ref{lm:chernoff}.
We note that under the condition $A_s(t)$, we can get a $(1\pm \e)$-approximation
since $\CC$ is at least $\dist(s,t)$ and at most $2n\dist(s,t)$.
\item
$\Prob[L_{st}\mid N_s(t)]< \frac{\epsilon}{2nm^4}$.
We note that $I_i=0$ means that while $N_s(t)$ happens, the longest edge $\L$ in $\NN$ is longer than $e=(s,t)$.
Suppose $e'=(s',t')$ is the edge with the maximum $\Prob[L_{s't'}|N_s(t)]$.
Since $\Prob[L_{st}\mid N_s(t)]\leq \frac{\epsilon}{2nm^4}$, $e'=(s',t')$ must be different from $e=(s,t)$
and $\Prob[L_{s't'}\mid N_s(t)]\geq \frac{4nm^2}{\e}\Prob[L_{st}\mid N_s(t)]$.
Hence, we have that
\begin{align*}
\Exp[\CC\mid A_s(t)]\cdot\Prob[A_s(t)]&=
\Exp[\CC\mid A_s(t)]\cdot \Prob[L_{st}\mid N_s(t)]\cdot \Prob[N_s(t)] \\
&\leq 2n \cdot \dist(s,t) \cdot \frac{\e}{4nm^2}\cdot \Prob[L_{s't'}\mid N_s(t)]\cdot \Prob[N_s(t)] \\
&\leq \frac{\e}{2m^2}\cdot \dist(s',t') \cdot \Prob[L_{s't'}\mid N_s(t)]\cdot \Prob[N_s(t)] \\
&\leq \frac{\e}{2m^2}\cdot \Exp[\,\CC\mid A_{s'}(t')] \cdot \Prob[L_{s't'}] \\
&\leq \frac{\e}{2m^2}\cdot \Exp[\CC]
\end{align*}
The first and third inequalities are due to \eqref{eq:CCbound} and the fourth are due to \eqref{eq:CCbound2}.
By Chernoff Bound, we have that
$$
\Prob\left[\frac{\sum_{i=1}^{N} I_i\cdot \CC(G_i) }{N}\geq \frac{\e}{m^2}\cdot \Exp[\CC]\right]\leq \frac{e^{-n}}{m^2}
$$
Then, with probability at least $1-\poly(\frac{1}{n})$,
the contribution from all such edges is less than $\e\Exp[\CC]$.
\end{enumerate}

Summing up, we have obtained the following theorem.
\begin{theorem}
\label{thm:cc}
There is an FPRAS for estimating the expected length of
the minimum length cycle cover in both the locational uncertainty model
and the existential uncertainty model.
\end{theorem}



Finally, we remark that our algorithm also works in presence of both locational uncertainty and node uncertainty, i.e.,
the existence of each node is a Bernoulli random variable.
It is not hard to extend our technique to handle the case where each cycle is required to contain at least three nodes.
This is done by considering the longest edge in the $2\NN$ graph (each node connects to the nearest and the second nearest
neighbors).
The extension is fairly straightforward and we omit the details here.


\section{$k$th Longest $m$-Nearest Neighbor}
\label{app:kmNN}

We consider the problem of computing the expected length of the $k$th longest $m$-nearest neighbor
(i.e., for each point, find the distance to its $m$-nearest neighbor, then compute the $k$th longest one among these distances)
in the existential uncertainty model.
We use $\KMNN$ to denote the length of the $k$th longest $m$-nearest neighbor.

Similar to $k$-clustering, we use the HPF $\hpf$ for estimating $\Exp[\KMNN]$. We call a component a small component if it contains at most $m$ present points.
Let the random variable $Y$ be the largest integer $i$ such that there are at most $k-1$ present points
among those small components in $\Gamma_i$. We can see that if $Y=i$ then the special component $\nu_i$ is not a small component, while both $\mu'_{i+1}$ and $\mu''_{i+1}$ should not be empty, and one of $\mu'_{i+1}$ and $\mu''_{i+1}$ must be a small component. Moreover, $\Gamma'_i$ contains at most $k-1$ present points among those small components.

We can rewrite $\Exp[\KMNN]$ by $\Exp[\KMNN]=\sum_{i=1}^{m} \Prob[Y=i]\Exp[\KMNN\mid Y=i]$.
By the Property P1 and P2 of $\hpf$, we directly have the following lemma.

\begin{lemma}
\label{lm:kmNN}
Conditioning on $Y=i$, it holds that $\dist(e_{i})\leq \KMNN\leq m \dist(e_{i})$.
\end{lemma}

For a partition $\Gamma$ on $\P$, we use $\Gamma\A{\#j,\leq m}$ to denote the event that there are exactly $j$ present points among those small components in $\Gamma$. The remaining task is to show how to compute $\Prob[Y=i]$ and how to estimate $\Exp[\KMNN\mid Y=i]$. We first prove the following lemma.

\begin{lemma}
\label{lm:samJM}
For a partition $\Gamma$ on $\P$, we can compute $\Prob[\Gamma\A{\#j,\leq m}]$ in polynomial time. Moreover, there exists a polynomial time
sampler for sampling present points in $\Gamma$ conditioning on $\Gamma\A{\#j,\leq m}$.
\end{lemma}

\begin{proof}
W.l.o.g, we assume that the components in $\Gamma$ are $C_1,\ldots, C_{n}$. We denote $E[a,b]$ the event that among the first $a$ components, exactly $b$ points are present in those small components. We denote the probability of $E[a,b]$ by $\Prob[a,b]$. Note that our goal is to compute $\Prob[n,j]$. We have the following dynamic program:

\begin{enumerate}
\item If $\sum_{1\leq l\leq a}\min\{m,|C_l|\} < b$, $\Prob[a,b]=0$. If $b=0$, $\Prob[a,b]=\prod_{1\leq l\leq a}(\Prob[C_l\A{0}]+\Prob[C_l\A{\geq m+1}])$.
\item For $1\leq b\leq \sum_{1\leq l\leq a}\min\{m,|C_l|\}$, $\Prob[a,b]=\sum_{0\leq l\leq m }\Prob[C_a\A{l}]\cdot \Prob[a-1,b-l]+\Prob[C_a\A{\geq m+1}]\cdot \Prob[a-1,b]$.
\end{enumerate}

Thus we can compute $\Prob[n,j]$ in polynomial time. Similar to Lemma~\ref{lm:samTj}, we can also construct a polynomial uniform sampler.
\qed
\end{proof}

To prove Theorem~\ref{thm:kmNN}, we only need the following lemma.

\begin{lemma}
\label{lm:samkmNN}
We can compute $\Prob[Y=i]$ in polynomial time. Moreover, there exists a polynomial time sampler conditioning on $Y=i$.
\end{lemma}

\begin{proof}
By the definition of $Y=i$, we can rewrite $\Prob[Y=i]$ as follows:
\begin{align*}
\Prob[Y=i]&=\sum_{1\leq n_1\leq m, m+1-n_1\leq n_2\leq m} \Prob[\mu'_{i+1}\A{n_1}]\cdot \Prob[\mu''_{i+1}\A{n_2}]\cdot
\left( \sum_{k-n_1-n_2\leq l\leq k-1} \Prob[\Gamma'_i\A{\#l,\leq m}]\right) \\
&+\sum_{m+1\leq n_1\leq |\mu'_{i+1}|, 1\leq n_2\leq m} \Prob[\mu'_{i+1}\A{n_1}]\cdot \Prob[\mu''_{i+1}\A{n_2}]\cdot
\left( \sum_{k-n_2\leq l\leq k-1} \Prob[\Gamma'_i\A{\#l,\leq m}]\right) \\
&+\sum_{1\leq n_1\leq m, m+1\leq n_2\leq |\mu''_{i+1}|} \Prob[\mu'_{i+1}\A{n_1}]\cdot \Prob[\mu''_{i+1}\A{n_2}]\cdot
\left( \sum_{k-n_1\leq l\leq k-1} \Prob[\Gamma'_i\A{\#l,\leq m}]\right)
\end{align*}
Note that we can compute $\Prob[Y=i]$ in polynomial time by Lemma~\ref{lm:samJM}. Using the same argument as in Lemma~\ref{lm:samCP_k}, we can construct a polynomial uniform sampler conditioning on $Y=i$. By Lemma~\ref{lm:kmNN}, we only need to take $O(\frac{m}{\e^2}\ln m)$ independent samples for estimating $\Exp[\KMNN\mid Y=i]$. So we take $O(\frac{m^2}{\e^2}\ln m)$ independent samples in total.
\qed
\end{proof}

\begin{theorem}
\label{thm:kmNN}
There is an FPRAS for estimating the expected length of
the kth longest m-nearest neighbor in the existential uncertainty model.
\end{theorem}

\section{Conclusion}
%
%

Our work leaves a number of interesting open problems.
One interesting open problem is to estimate
the expected value
of the minimum cost matching of a certain cardinality (instead of the perfect matching).
It is not clear how to extend our technique to handle this problem.
Moreover, computing the threshold probabilities $\Pr[\mathsf{Obj}\leq 1]$ and $\Pr[\mathsf{Obj}\geq 1]$
for most problems, except closest pair and diameter, have not been studied yet.
The only hardness result we are aware of is that computing $\Pr[\MST\leq 1]$
is \#P-hard to approximate to any factor \cite{kamousi2011stochastic}.



\section{Acknowledgements}
Part of this work was done while JL visited
the Simons Institute for the Theory of Computing.
We would like to thank Alistair Sinclair, Jeff Phillips, Pinyan Lu, Yitong Yin, Uri Zwick for helpful discussions.

\bibliographystyle{plain}
\bibliography{smst}

\appendix

\section{Missing Proofs}

\subsection{Closest Pair}

\textbf{Lemma~\ref{lm:cp}} \emph{Steps 1,2,3 in Algorithm~\ref{algo:cp} provide
	$(1\pm \epsilon)$-approximations for $\Prob[\calF\A{i} \wedge \CP\leq 1]$
	for $i=0,1,2$ respectively,
	with high probability.}

\vspace{0.3cm}
\begin{proof}
As we just argued,  $\Prob[\calF\A{1}\wedge \CP \leq 1]$ can be estimated since
$I(\CP\leq 1)$, conditioned on $\calF\A{0}$, is poly-bounded.
For estimating $\Prob[\calF\A{1}\wedge \CP \leq 1]$, we first rewrite this term by $\sum_{s_i\in \calF}\Prob[\calF\A{\{s_i\}}\wedge \CP \leq 1]$.
For a point $s_i\in \calF$, note that $\Prob[\calF\A{\{s_i\}}\wedge \CP \leq 1]=\Prob[\calF\A{\{s_i\}}]\cdot \Prob[\CP \leq 1\mid \calF\A{\{s_i\}}]$. Since we have that $p_i(1-\frac{\e}{m})\leq \Prob[\calF\A{\{s_i\}}]\leq p_i$ by the first property of the \core\ $\calH$, we can use $p_i$ to estimate $\Prob[\calF\A{\{s_i\}}]$. For estimating $\Prob[\CP \leq 1\mid \calF\A{\{s_i\}}]$,
we denote $\B_{s_i}=\{t\in \calH:\dist(s_i,t)\leq 1\}$.
If $\B_{s_i}$ is not empty,
we can use Monte Carlo for estimating
$\Prob[\CP \leq 1\mid \calF\A{\{s_i\}}]$ since its value is at least $\frac{\epsilon}{m^2}$.
Otherwise,
computing $\Prob[\CP \leq 1\mid \calF\A{\{s_i\}}]$
is equivalent to computing
$\Prob[\CP \leq 1\mid \calF\A{0}]$
in the instance without $s_i$ (since $s_i$ is at distance more than 1 from any other point).
The proof for $\Prob[\calF\A{2}\wedge \CP \leq 1]$ is almost the same and we do not repeat it.
\qed
\end{proof}

\subsection{Minimum Spanning Tree}

\textbf{Lemma~\ref{lm:est2MST}}  \emph{
Algorithm~\ref{alg:estMSTdetail2} produces a $(1\pm\epsilon)$-estimate for the second term with high probability.}

\vspace{0.3cm}
\begin{proof}
To compute the second term, we first rewrite it as follows:
\begin{align*}
\Exp[\,\MST\mid \calF\A{1} \,]\cdot \Prob[\calF\A{1}]
=\sum_{v\in \V} \Bigl(\,\sum_{s\in F} \Prob[\calF\A{v}\wedge v\realize s]\, \Exp[\,\MST\mid \calF\A{v}, v\realize s ]\,\Bigr)
\end{align*}
Fix a  node $v$.
To estimate $\sum_{s\in F} \Prob[\calF\A{v}\wedge v\realize s]\, \Exp[\MST\mid \calF\A{v}, v\realize s ]$, we consider the following two situations:
\begin{enumerate}
\item
Point $s\in B$, i,e, $\dist(s,\H)<\frac{n}{\e } \cdot \diam(\H)$.

We estimate the sum for all $s\in B$. Notice that the sum is in fact $\Prob[\Cl(v)]\cdot\E[\,\MST\mid \Cl(v) ]$.
We can see that $\Prob[\Cl(v)]$ can be computed exactly in linear time.
We argue that the quality of the estimation taken on $N_1=O\bigl( \frac{nm^2}{\e^{5}}\ln n\bigr)$ samples is sufficient
by considering the following two cases:
    \begin{enumerate}
    \item Assume that $
    \E[\,\MST\mid \Cl(v) ]
    \geq \frac{1}{2}\Exp[\,\MST \mid \H\A{n}]\geq \Omega\Bigl(\frac{\e^2}{m^2}\Bigr)\diam(\H).
    $
    In this case, we have a poly-bounded random variable.
    This is because under the condition $\Cl(v)$,
    the maximum possible length of any minimum spanning tree is $O(\frac{n}{\e}\diam(\H))$.
    Hence we can use Monte Carlo to get a $(1\pm \epsilon)$-approximation
    of $\E[\,\MST\mid \Cl(v) ]$ with $O\bigl( \frac{nm^2}{\e^{5}}\ln n\bigr)$ samples.
    \item Otherwise, we assume that
    $
    \E[\,\MST\mid \Cl(v) ]
    \leq \frac{1}{2}\Exp[\,\MST \mid \H\A{n}]].
    $
    Let $V_0$ be the collection of these nodes. The probability that the sample average is larger than $\Exp[\MST\mid \H\A{n}]] $
    is at most $\poly(\frac{1}{n})$ by Chernoff Bound. The probability that for all nodes $v\in V_0$, the sample average are
    at most $\Exp[\MST\mid \H\A{n}]]$ is at least $1-\poly(\frac{1}{n})$ by union bound.
    If this is the case, we can see their total contribution
    to the final estimation of $\E[\MST]$ is less than $\e \Exp[\,\MST \mid \H\A{n}]] \Prob[\H\A{n}] $.
    In fact, this is because
    \begin{align*}
     \sum_{v\in V_0} \Prob[\Cl(v)]\cdot T_v& \leq \sum_{v\in V_0}\Prob[\Cl(v)]\cdot \Exp[\,\MST \mid \H\A{n}]]
     <\e \Exp[\,\MST \mid \H\A{n}]] \Prob[\H\A{n}].
    \end{align*}
    The second inequality is due to the fact that $\sum_{v\in V_0}\Prob[\Cl(v)]\leq n-p(\H)<\e/16<\e\Prob[\H\A{n}]$.
    \end{enumerate}
\item
Point $s\in \calF\setminus B$, each term has $\dist(s,\H)>\frac{n}{\e } \cdot \diam(\H)$. \\
We just use $\dist(s,\H)$
as the estimation of $\Exp[\MST\mid \calF\A{v}, v\realize s ]$.
This is because the length of $\MST$ is always at least $\dist(s,\H)$
and at most $\dist(s,\H)+n\cdot\diam(\H) \leq (1+\epsilon) \dist(s,\H)$.
\qed
\end{enumerate}
\end{proof}

\subsection{Minimum Perfect Matching}

\textbf{Lemma~\ref{lm:est1MM}} \emph{
Algorithm~\ref{alg:estMM} produces a $(1\pm\epsilon)$-estimate for the second term with high probability.}

\vspace{0.3cm}
\begin{proof}
To compute the second term, we first rewrite it as follows:
$$\Exp[\MM\mid \calF\A{1} ]\cdot \Prob[\H\A{1}]
=\sum_{v\in \V} \Bigl(\,\sum_{s\notin \Home(v)} \Prob[\calF\A{v}\wedge v\realize s]\
\Exp[\MM\mid \calF\A{v}, v\realize s ]\,\Bigr).
$$
Fix a particular node $v$.
To estimate $\sum_{s\in \calF} \Prob[\calF\A{v}\wedge v\realize s]\, \Exp[\MM\mid \calF\A{v}, v\realize s ]$, we consider the following two situations:
\begin{enumerate}
\item
Point $s\in B_v$, i,e, $\dist(s,\H(v))<\frac{4nD}{\e }$.

We estimate the sum for all $s\in B^v$. Notice that the sum is
in fact $\Prob[\Cl(v)]\cdot\E[\,\MM\mid \Cl(v) ]$.
We can see that $\Prob[\Cl(v)]$ can be computed exactly in linear time.
We argue that the quality of the estimation taken on $N_2=O\bigl( \frac{n^2m^5}{\e^{4}}\ln n\bigr)$ samples is poly-bounded
by considering the following two cases:
    \begin{enumerate}
    \item Assume that $
    \E[\,\MM\mid \Cl(v) ]
    \geq \frac{1}{2}\Exp[\,\MM \mid \H\A{n}]= \Omega\Bigl(\frac{\e D}{nm^5}\Bigr).
    $
    In this case, our estimation is poly-bounded.
    This is because under the condition $\Cl(v)$,
    the maximum possible length of any minimum perfect matching is $O(\frac{nD}{\e})$.
    Hence we can use Monte Carlo to get a $(1\pm \epsilon)$-approximation
    of $\E[\,\MM\mid \Cl(v) ]$ with $O\bigl( \frac{n^2m^5}{\e^{4}}\ln n\bigr)$ samples.
    \item Otherwise, we assume that
    $
    \E[\,\MM\mid \Cl(v) ]
    \leq \frac{1}{2}\Exp[\,\MM \mid \H\A{n}]].
    $
    Let $V_0$ be the collection of these nodes. The probability that the sample average is larger than $\Exp[\MM\mid \H\A{n}]] $
    is at most $\poly(\frac{1}{n})$ by Chernoff Bound. The probability that for each node $v\in V_0$, the sample average is
    at most $\Exp[\MM\mid \H\A{n}]]$ is at least $1-\poly(\frac{1}{n})$ by union bound.
    If this is the case, we can see their total contribution
    to the final estimation of $\E[\MM]$ is less than $\e \Exp[\,\MM \mid \H\A{n}]] \Prob[\H\A{n}] $.
    In fact, this is because
    $$
    \sum_{v\in V_0} \Prob[\Cl(v)]\cdot T_v\leq \sum_{v\in V_0}\Prob[\Cl(v)]\cdot \Exp[\,\MM \mid \H\A{n}]]
     <\e \Exp[\,\MM \mid \H\A{n}]] \Prob[\H\A{n}].
    $$
    The second inequality is due to the fact that $\sum_{v\in V}\Prob[\Cl(v)]\leq n-\sum_{v\in V_0} p_v(\H(v))\leq \frac{\e}{m^3}<\e \Prob[\H\A{n}]$.
    \end{enumerate}
\item
Point $s\in \calP\setminus (B_v\cup \H(v))$, each term has $\dist(s,\H(v))>\frac{4nD}{\e }$.
The algorithm uses $\dist(s,\H(v))$
as the estimation of $\Exp[\MM\mid \calF\A{v}, v\realize s ]$.
Note that the length of $\MM$ is always at least $\dist(s,\H(v))-nD\geq (1-\frac{\e}{4})\dist(s,\H(v))$. This is because such an instance $\MM$ contains a path from $s$ to some point $t\in \H(v)$ deleting no more than $n$ segments of length at most $D$ (each segment is in some $\H_j$).
On the other hand, the length of $\MM$ is at most $\dist(s,\H(v))+nD \leq (1+\frac{\e}{4}) \dist(s,\H(v))$. So it is a $(1\pm\epsilon)$-estimation.
\qed
\end{enumerate}
\end{proof}

\section{The Closest Pair Problem}

\subsection{Estimating $k$th Closest Pair in the Existential Uncertainty Model}
\label{app:CP}

Again, we construct the HPF $\hpf$.
Let the random variable $Y$ be the largest integer $i$ such that there are at least $k$ point collisions in $\Gamma_i$.
Here we use a point collision to denote that a pair of points are present in the same component.
Note that if there are exactly $i$ points in a component, the amount of point collisions in this component is ${i\choose 2}$.
We denote as $\Gamma\A{\#j}$ the event that there are exactly $j$ point collisions among the partition $\Gamma$ on $\P$. Similarly, we can rewrite $\Exp[\KCP]$ by $\Exp[\KCP]=\sum_{i=1}^{m-1} \Prob[Y=i]\Exp[\KCP\mid Y=i]$.

We use dynamic programming technique to achieve an $\FA$ for computing $\Exp[\KCP]$.
Note that conditioning on $Y=i$, the value of $\KCP$ is between $\dist(e_{i})$ and $m\cdot \dist(e_{i})$. So we only need to show the following lemma.

\begin{lemma}
\label{lm:samCP_k}
We can compute $\Prob[Y=i]$ in polynomial time. Moreover, there exists a polynomial time sampler conditioning on $Y=i$.
\end{lemma}

\begin{proof}
We denote $E[a,b]$ ($1\leq a\leq i-1$, $b\leq k$) the event that among the first $a$ components in $\Gamma'_i $, there are exactly $b\leq k$ point collisions. We denote the probability of $E[a,b]$ by $\Prob[a,b]$. We give the dynamic programming as follows.

\begin{enumerate}
\item If $\sum_{1\leq j\leq a}{|C_{j}|\choose 2} < b$, $\Prob[a,b]=0$. If $b = 0$, $\Prob[a,b]=\prod_{1\leq j\leq a}\Prob[C_j\A{\leq 1}]$. If $b<0$, $\Prob[a,b]=0$.
\item If $\sum_{1\leq j\leq a}{|C_{j}|\choose 2} \geq b$, $1\leq b\leq k$, $\Prob[a,b]=\sum_{0\leq l\leq n_{a}} \Prob[C_{a}\A{l}]\cdot \Prob[a-1,b-{l\choose 2}]$.
\end{enumerate}
\noindent
By the above dynamic programming, we can compute $\Prob[i-1,l]$ for $0\leq l\leq k-1$ in polynomial time. \\
By the definition of $Y=i$, it is no hard to see that we can rewrite $\Prob[Y=i]$ as follows:
$$
\Prob[Y=i]=\sum_{1\leq n_1\leq |\mu'_{i+1}|, 1\leq n_2\leq |\mu''_{i+1}|} \Prob[\mu'_{i+1}\A{n_1}]\cdot \Prob[\mu''_{i+1}\A{n_2}]\cdot
\left( \sum_{k-{n_1+n_2\choose 2}\leq l\leq k-1-{n_1\choose 2}-{n_2\choose 2}} \Prob[\Gamma'_i\A{\#l}]\right)
$$
Note that we can compute $\Prob[Y=i]$ in polynomial time. We need to describe our sampler conditioning on $Y=i$. We first sample the event $\mu'_{i+1}\A{n_1}\wedge \mu''_{i+1}\A{n_2}$ with probability $\Prob[\mu'_{i+1}\A{n_1}\wedge \mu''_{i+1}\A{n_2}\mid Y=i]$. Then conditioning on $k-{n_1+n_2\choose 2}\leq l\leq k-1-{n_1\choose 2}-{n_2\choose 2}$, we sample the total number of point collisions in $\Gamma'_i$. Then we sample the number of present points in each component in $\Gamma'_i$ using the dynamic programming. Finally, based on the number of present points in each component, we sample the present points by Lemma~\ref{lm:samTj}. \\
Using the Monte Carlo method, we only need to take $O(\frac{m}{\e^2}\ln m)$ independent samples for estimating $\Exp[\KCP\mid Y=i]$. Thus, we totally take $O(\frac{m^2}{\e^2}\ln m)$ independent samples.
\qed
\end{proof}

\begin{theorem}
\label{thm:kcp}
There is an FPRAS for estimating the expected distance between
the $k$th closest pair in the existential uncertainty model.
\end{theorem}

\subsection{Hardness for Closest Pair}
\label{app:npcp}

\begin{theorem}
\label{thm:cpsharpp}
Computing $\Pr[\CP\geq 1]$  is \#P-hard
to approximate within any factor in a metric space
in both the existential and locational uncertainty models.
\end{theorem}
\begin{proof}
First consider the existential uncertainty model.
Consider a metric graph $G$ with edge weights being either $0.9$ or $1.8$.
Each vertex in this graph exists with probability 1/2.
Let $G'$ be the unweighted graph with the same number of vertices.
$G'$ contains only those edges corresponding to edges with weight 0.9 in $G$.
It is not hard to see that
$$
\Pr[\CP\geq 1] = \#\text{independent sets of size at least two in }G'\cdot \frac{1}{2^n}.
$$
The right hand side is well known to be imapproximable for arbitrary graphs
\cite{sly2010computational}.

For the locational model,
let the instance be $G$ (with $m$ vertices $s_1,\ldots, s_m$) with $m$ additional vertices $t_1,\ldots, t_m$
which are far away from each other and any vertex in $G$.
Let the probability distribution of node $v_i$ be $\p_{v_i s_i}=1/2$, and $\p_{v_i t_i}=1/2$.
We can see that in this locational uncertainty model, the value
$\Pr[\CP\geq 1]$ is the same as
that in the corresponding existential model $G$.
\qed
\end{proof}

\begin{theorem} Computing $\Exp[\CP]$ exactly in both the existential and locational uncertainty models is \#P-hard in a metric space.
\end{theorem}
\begin{proof}
Consider a metric graph $G$ with edge weights being either 1 or 2.
Each vertex in this graph exists with probability 1/2.
Note that
$$
\Exp[\C] = \Pr[\C=1]+2\Pr[\C=2]
=(\Pr[\C\leq 1]-\Pr[\C=0])+2(1-\Pr[\C\leq 1])
$$
Computing $\Pr[\C=0]$ can be easily done in polynomial time.
Computing $\Pr[\C\leq 1]$ in such a graph is as hard as counting independent sets
in general graphs, hence is also \#P-hard (as in Theorem~\ref{thm:cpsharpp}). So,
computing $\Exp[\C]$ is \#P-hard as well.

For the locational model, let the instance be $G$ (with $m$ vertices $s_1,\ldots, s_m$) with $m$ additional vertices $t_1,\ldots, t_m$
which satisfies $\dist(s_i,t_j)=\dist(t_i,t_j)=5$ ($1\leq i,j\leq m, i\neq j$).
Let the probability distribution of node $v_i$ be $\p_{v_i s_i}=1/2$, and $\p_{v_i t_i}=1/2$.
It is not hard to see that in this locational uncertainty model, the value
$\Exp[\C]$ is linearly related to the value $\Exp[\C]$ in the existential model $G$. Therefore, computing $\Exp[\C]$ is also \#P-hard in the locational uncertainty model.
\qed
\end{proof}

\section{Another FPRAS for MST}
\label{app:mst}

\eat{
\vspace{0.3cm}
\topic{Finding \core}
Recall that
$\H\leftarrow\B(s, \dist(s,t))=\{s'\in \calP\mid \dist(s',s)\leq \dist(s,t) \}$,
where points $s$ and $t$ are the furthest two points
among all points $r$ with $p(r)\geq \frac{\epsilon}{16m}$.

\eat{
For proving Lemma~\ref{lm:home}, we need the following simple lemma.

\begin{lemma}
\label{lm:lmprob}
Consider two points $s$ and $t$ in $\P$. Suppose $p(s)\geq \delta$, $p(t)\geq \delta$ (Here $\delta \ll 0.2$ is a positive real number).
Suppose no node contributes to more than one half of both $p(s)$ and $p(t)$
(i.e., $\not\exists v\in V, \text{ s.t. }p_{vs}\geq 0.5 p(s)\text{ and } p_{vt}\geq 0.5 p(t)$).
Then, we have that $\Prob[\exists (v,u), v\ne u, v\realize s, u\realize t] = \Omega( \delta^2).$
\end{lemma}

\begin{proof}
Note that we only need to show the correctness for $p(s)=p(t)=\delta$. According to the given conditions, we have that
$$
\frac{p_{vs}p_{vt}}{p(s)p(t)}\leq \frac{1}{4}\Bigl(\frac{p_{vs}}{p(s)}+\frac{p_{vt}}{p(t)}\Bigr)^2\leq \frac{3}{8}\Bigl(\frac{p_{vs}}{p(s)}+\frac{p_{vt}}{p(t)}\Bigr).
$$
Then, we can see that
\begin{align*}
&\Prob[\exists (v,u), v\neq u, v\realize s, u\realize t] = 1-\prod_{v\in V}(1-p_{vs})-\prod_{v\in V}(1-p_{vt})+\prod_{v\in V}(1-p_{vs}-p_{vt}) \\
&= \left(1-\prod_{v\in V}(1-p_{vs})\right)\left(1-\prod_{v\in V}(1-p_{vt})\right)+\prod_{v\in V}(1-p_{vs}-p_{vt})-\prod_{v\in V}(1-p_{vs})(1-p_{vt})\\
&\geq \left(1-\prod_{v\in V}(1-p_{vs})\right)\left(1-\prod_{v\in V}(1-p_{vt})\right)-\sum_{v\in V}p_{vs}p_{vt} \\
&\geq \left(1-(1-\frac{p(s)}{n})^n\right)\left(1-(1-\frac{p(t)}{n})^n\right)-\sum_{v\in V}\frac{3}{8} p(s)p(t)
\Bigl(\frac{p_{vs}}{p(s)}+\frac{p_{vt}}{p(t)}\Bigr) \\
&\geq \left(1-e^{-p(s)}\right)\left(1-e^{-p(t)}\right)-\frac{3}{4}p(s)p(t) \\
&\geq (0.9\delta)^2-\frac{3}{4}\delta^2 = 0.06\delta^2.
\end{align*}
The last inequality holds since $\delta\ll 0.2$.
\qed
\end{proof}
}

\vspace{0.2cm}
\noindent
{\bf Lemma~\ref{lm:home}}
{\em Algorithm~\ref{alg:estMST} finds a \core\ $\H$ such that
\begin{enumerate}
\item[Q1.] $p(\H)\geq n-\frac{\epsilon}{16}=n-O(\epsilon)$
\item[Q2.] $\Exp[\,\MST\mid \H\A{n}\,]=\Omega\Bigl(\diam(\H)\frac{\epsilon^2}{m^2} \Bigr)$.
\end{enumerate}
Furthermore, the algorithm runs in linear time.
}

\begin{proof}
For each point $r$ that is not in $\H$, we know $p(r)<\frac{\epsilon}{16m}$.
Therefore, we have that
and  $p(\calP\setminus \H)<\frac{\epsilon}{16}$.
and  $p(\H)\geq n-\frac{\epsilon}{16}$.
Consider two cases:
\begin{enumerate}
\item There is no node $v\in V$ such that $p_{vs}= p(s)\text{ and } p_{vt}= p(t)$.
In this case, we have that
$
\Exp[\MST\mid \H\A{n}]\geq \dist(s,t) \Prob[\exists (v,u), v\ne u, v\realize s, u\realize t]=\dist(s,t)p(s)p(t)
\geq \dist(s,t) \frac{\epsilon^2}{256m^2}.
$
\item There is a node $v$ such that $p_{vs}=p(s)\text{ and } p_{vt}= p(t)$.
In this case, conditioning on the event that a different node $u$ is realized to an arbitrary point $q$,
$
\Exp[\MST\mid \H\A{n}]\geq \dist(s,q) \Prob[v\realize s]+\dist(t,q) \Prob[v\realize t]
\geq \dist(s,t) \frac{\epsilon}{16m}.
$
\end{enumerate}
In either case, $\H$ satisfies both Q1 and Q2.
\qed
\end{proof}

\topic{Estimating $\Exp[\C]$}
We only need to estimate
$\Exp[\,\MST\mid \calF\A{0} \,] \cdot \Prob[ \calF\A{0}]$
and
$\Exp[\,\MST\mid \calF\A{1} \,] \cdot \Prob[ \calF\A{1}]$.

\vspace{-0.1cm}
\linesnotnumbered
\begin{algorithm}[h]
\caption{Estimating $\Exp[\,\MST\mid \calF\A{0} \,] \cdot \Prob[ \calF\A{0}]$}
\label{alg:estMSTdetail}
\nl Take $N_0=O(\frac{nm^2}{\e^4}\ln n)$ random samples. Set $A\leftarrow \emptyset$ at the beginning.\\
\nl For each sample $G_i$, if it satisfies $\calF\A{0}$, $A\leftarrow A\cup \{G_i\}$.\\
\nl $T_0\leftarrow \frac{1}{N_0}\sum_{G_i\in A}\MST(G_i)$.\\
\end{algorithm}
\vspace{-0.5cm}

\begin{lemma}\label{lm:est1MST}
Algorithm~\ref{alg:estMSTdetail} produces a $(1\pm\epsilon)$-estimate for the first term with high probability.
\end{lemma}

\begin{proof}
Due to (Q2),
we have a poly-bounded random variable and can therefore obtain a $(1\pm\epsilon)$-estimate for $\Exp[\,\MST\mid \H\A{n}\,]$
using the Monte Carlo method with $O(\frac{nm^2}{\e^4}\ln n)$ samples satisfying $\H\A{n}$ (by Lemma~\ref{lm:chernoff}).
By the first property of $\H$, with probability close to 1, a sample satisfies $\H\A{n}$.
So, the expected time to obtain an useful sample is bounded by a constant.
Overall, we can obtain a $(1\pm\epsilon)$-estimate of
the first term with using $N_0=O(\frac{nm^2}{\e^4}\ln n)$ samples with high probability.
\qed
\end{proof}

\vspace{-0.1cm}
\linesnotnumbered
\begin{algorithm}[h]
\caption{Estimating $\Exp[\,\MST\mid \calF\A{1} \,] \cdot \Prob[ \calF\A{1}]$}
\label{alg:estMSTdetail2}
\nl Set $B\leftarrow \{s\mid s\in \calF, \dist(s,\H)<\frac{n}{\e } \cdot \diam(\H)\}$.
Let $\Cl(v)$ be the event that $v$ is the only node that realizes to some node $s\notin \H$ and $s\in B$.\\
\nl Conditioning on $\Cl(v)$,
take $N_1=O\bigl( \frac{nm^2}{\e^{5}}\ln n\bigr)$ independent samples. \\
Let $A_v\leftarrow \{G_{v,i}\mid 1\leq i\leq N_1\}$ be the set of $N_1$ samples for $\Cl(v)$.\\
\nl $T_v\leftarrow \frac{1}{N_1}\sum_{G_{v,i}\in A_v}\MST(G_{v,i})$  ~~~~(estimating $\Exp[\,\MST\mid \Cl(v)]$)\\
\nl $T_1\leftarrow \sum_{v\in \V}\Bigl(\Prob[\Cl(v)]T_v+\sum_{s\in \calF\setminus B}\Prob[\calF\A{v}\wedge v\realize s]\,\dist(s,\H)\,\Bigr)$.\\
\end{algorithm}
\vspace{-0.5cm}

\begin{lemma}
\label{lm:est2MST}
Algorithm~\ref{alg:estMSTdetail2} produces a $(1\pm\epsilon)$-estimate for the second term with high probability.
\end{lemma}

\begin{proof}
To compute the second term, we first rewrite it as follows:
\begin{align*}
\Exp[\,\MST\mid \calF\A{1} \,]\cdot \Prob[\calF\A{1}]
=\sum_{v\in \V} \Bigl(\,\sum_{s\in F} \Prob[\calF\A{v}\wedge v\realize s]\, \Exp[\,\MST\mid \calF\A{v}, v\realize s ]\,\Bigr)
\end{align*}
Fix a  node $v$.
To estimate $\sum_{s\in F} \Prob[\calF\A{v}\wedge v\realize s]\, \Exp[\MST\mid \calF\A{v}, v\realize s ]$, we consider the following two situations:
\begin{enumerate}
\item
Point $s\in B$, i,e, $\dist(s,\H)<\frac{n}{\e } \cdot \diam(\H)$.

We estimate the sum for all $s\in B$. Notice that the sum is in fact $\Prob[\Cl(v)]\cdot\E[\,\MST\mid \Cl(v) ]$.
We can see that $\Prob[\Cl(v)]$ can be computed exactly in linear time.
We argue the quality of the estimation taken on $N_1=O\bigl( \frac{nm^2}{\e^{5}}\ln n\bigr)$ samples is sufficient
by considering the following two cases:
    \begin{enumerate}
    \item Assume that $
    \E[\,\MST\mid \Cl(v) ]
    \geq \frac{1}{2}\Exp[\,\MST \mid \H\A{n}]\geq \Omega\Bigl(\frac{\e^2}{m^2}\Bigr)\diam(\H).
    $
    In this case, we have a poly-bounded random variable.
    This is because under the condition $\Cl(v)$,
    the maximum possible length of any minimum spanning tree is $O(\frac{n}{\e}\diam(\H))$.
    Hence we can use Monte Carlo to get a $(1\pm \epsilon)$-approximation
    of $\E[\,\MST\mid \Cl(v) ]$ with $O\bigl( \frac{nm^2}{\e^{5}}\ln n\bigr)$ samples.
    \item Otherwise, we assume that
    $
    \E[\,\MST\mid \Cl(v) ]
    \leq \frac{1}{2}\Exp[\,\MST \mid \H\A{n}]].
    $
    The probability that the sample average is larger than $\Exp[\MST\mid \H\A{n}]] $
    is at most $\poly(\frac{1}{n})$ by Chernoff Bound. The probability that for all nodes $v$, the sample average are
    at most $\Exp[\MST\mid \H\A{n}]]$ is at least $1-\poly(\frac{1}{n})$ by union bound.
    If this is the case, we can see their total contribution
    to the final estimation of $\E[\MST]$ is less than $\e \Exp[\,\MST \mid \H\A{n}]] \Prob[\H\A{n}] $.
    In fact, this is because
    \begin{align*}
     \sum_{v\in V} \Prob[\Cl(v)]\cdot\E[\,\MST\mid \Cl(v) ]& \leq \sum_{v\in V}\Prob[\Cl(v)]\cdot \Exp[\,\MST \mid \H\A{n}]]
     <\e \Exp[\,\MST \mid \H\A{n}]] \Prob[\H\A{n}].
    \end{align*}
    The second inequality is due to the fact that $\sum_{v\in V}\Prob[\Cl(v)]\leq n-p(\H)<\e/16<\e\Prob[\H\A{n}]$.
    \end{enumerate}
\item
Point $s\in \calF\setminus B$, each term has $\dist(s,\H)>\frac{n}{\e } \cdot \diam(\H)$. \\
We just use $\dist(s,\H)$
as the estimation of $\Exp[\MST\mid \calF\A{v}, v\realize s ]$.
This is because the length of $\MST$ is always at least $\dist(s,\H)$
and at most $\dist(s,\H)+n\cdot\diam(\H) \leq (1+\epsilon) \dist(s,\H)$.
\qed
\end{enumerate}
\end{proof}

\topic{Analysis}
Now, we analyze the performance guarantee of our algorithm.
We need to show that the total contribution from the scenarios
where more than one nodes are not in the \core\ is very small.
We need some notations first.
Suppose $S$ is the set of nodes out of \core\ $\H$.
We use $\calF_S$ to denote the set of all possible realizations of all nodes in $S$ to points in $\calF$
(we can think of each element in $\calF_S$ as a $|S|$-dimensional vector where each coordinate
is indexed by a node in $S$ and its value is a point in $\calF$).
Similarly, we denote the set of realizations of $\bS=V\setminus S$ to points in $\H$ by $\calH_{\bS}$.
For any $F_S\in \calF_S$ and $H_{\bS}\in \calH_{\bS}$, we use $(F_S, H_{\bS})$ to denote the event that
both $F_S$ and $H_{\bS}$ happen
and $\MST(F_S, H_{\bS})$ to denote the length of the minimum spanning tree
under the realization $(F_S, H_{\bS})$.
We need the following combinatorial fact.

\begin{lemma}
\label{lm:mstchange}
Consider a particular realization $(F_S,H_{\bS})$,
where $S$ is the set of nodes out of $\H$.
$|S|\geq 2$. Let $d=\dist(v_S,u_S)=\min_{v\in S,u\in \bS}\{\dist(u,v)\}$ where $v_S\in F_S$, $u_S\in H_{\bS}$.
The realization $(F'_{S'}, H'_{\bS'})$ is obtained from $(F_S,H_{\bS})$
by sending the node $v_S$ to $\H$, where $S'=S\setminus{v_S}$.
Then $\MST(F_S, H_{\bS})\leq 4\MST(F'_{S'}, H'_{\bS'})$.
\end{lemma}

\begin{proof}
We have
$$
4\MST(F'_{S'},H'_{\bS'})\geq 2\MST(F'_{S'},H'_{\bS'})+2d\geq \MST(F'_{S'},H_{\bS})+2d\geq \MST(F_S,H_{\bS})
$$
The second inequality holds since the length of the minimum spanning tree is at most two times
the length of the minimum Steiner tree (We think $\MST(F'_{S'},H_{\bS})$ as a Steiner tree
connecting all nodes in $F_{S'}\cup H_{\bS}$).
\qed
\end{proof}

The only remaining part for establishing Theorem~\ref{thm:mst}
is to show the following essential lemma.
\begin{lemma}
\label{lm:mstcharge}
For any $\epsilon>0$,
if $\H$ satisfies the properties in Lemma~\ref{lm:home}, we have that
$$
\sum_{i>1}\Exp[\,\MST\mid \calF\A{i}]\cdot \Prob[\calF\A{i}]\leq \epsilon\cdot\Exp[\,\MST\mid \calF\langle 1\rangle]\cdot \Prob[\calF\A{1}].
$$
\end{lemma}

\begin{proof}
We claim that for any $i>1$,
$
\Exp[\,\MST\mid \calF\A{i+1}]\cdot \Prob[\calF\A{i+1}]\leq \frac{\epsilon}{2}\Exp[\,\MST\mid \calF\langle i\rangle]\cdot \Prob[\calF\A{i}].
$
If the claim is true, then we can show the lemma easily by noticing that, for any $n\geq 2$,
$
\sum_{i> 1}\Exp[\,\MST\mid \calF\A{i}] \Prob[\calF\A{i}]
\leq  \sum_{i=1}^{n-1}\bigl(\frac{\epsilon}{2}\bigr)^i  \Exp[\,\MST\mid \calF\langle 1\rangle] \Prob[\calF\A{1}]
\leq  \epsilon\Exp[\,\MST\mid \calF\langle 1\rangle] \Prob[\calF\A{1}].
$
Now, we prove the claim.
First, we rewrite the LHS as follows:
$$
\Exp[\,\MST\mid \calF\A{i+1}]\cdot \Prob[\calF\A{i+1}]
=\sum_{|S|=i+1} \sum_{F_S\in \calF_S} \sum_{H_{\bS}\in \calH_{\bS}}
\bigl(\,
\Prob[(F_S,H_{\bS})]\cdot
\MST(F_S, H_{\bS})\,\bigr),
$$
Similarly, the RHS can be written as:
$$
\Exp[\,\MST\mid \calF\A{i}]\cdot \Prob[\calF\A{i}]
=\sum_{|S'|=i} \sum_{F'_{S'}\in\calF_{S'}} \sum_{H'_{\bS'}\in \calH_{\bS'}}
\bigl(\,
\Prob[(F_S,H_{\bS})]\cdot
\MST(F'_{S'}, R_{\bar{S}'})\,\bigr).
$$
For each pair $(F_S, H_{\bS})$,
let $C(F_S, H_{\bS})=
\Prob[F_S, H_{\bS}]\cdot \MST(F_S, H_{\bS})$.
Think each pair $(F_S, H_{\bS})$ with $|S|=i+1$ as a seller
and each pair $(F'_{S'}, H'_{\bS'})$ with $|S'|=i$ as a buyer.
The seller $(F_S, H_{\bS})$ want to sell the term $C(F_S, H_{\bS})$
and the buyers want to buy all these terms.
The buyer $(F'_{S'}, H'_{\bS'})$ has a budget of $C(F'_{S'}, H'_{\bS'})$.
We show there is a charging scheme such that
every term $C(F_S, H_{\bS})$ is fully paid by the buyers and
each buyer spends at most an $\frac{\e}{2}$ fraction of her budget.
Note that the existence of such a charging scheme suffices to prove the lemma.

Suppose we are selling the term $C(F_S, H_{\bS})$.
Consider the following charging scheme.
Suppose $v\in S$ is the node closest to any node in $\bS$.
Let $S'=S\setminus v$ and $F'_{S'}$ be the restriction of $F_S$ to all
coordinates in $S$ except $v$.
We say $(F'_{S'}, H'_{\bS'})$ is consistent with  $(F_S, H_{\bS})$,
denoted as $(F'_{S'}, H'_{\bS'})\consistent (F_S, H_{\bS})$,
if $H_{\bS}$ agrees with  $H'_{\bS'}$ for all vertices in $\bS$.
and $F_S$ agrees with  $F'_{S'}$ for all vertices in $S'$.
Intuitively, $(F'_{S'}, H'_{\bS'})$ can be obtained from $(F_S,H_{\bS})$
by sending $v$ to an arbitrary point in $\H$.
Let $$
Z(F_S, H_{\bS})=
\sum_{(F'_{S'}, H'_{\bS'})\consistent (F_S, H_{\bS})} \Prob[(F'_{S'}, H'_{\bS'})].
$$
We need the following inequality later:
For any fixed $(F'_{S'}, H'_{\bS'})$,
$$
\sum_{(F_S, H_{\bS})\consistent (F'_{S'}, H'_{\bS'})}
\frac{\Prob[F_S, H_{\bS}]}
{Z(F_S, H_{\bS})}
\leq \sum_{v\in \bS'} \frac{\Prob(v\in \calF)}{\Prob(v\in \H)}
\leq  \frac{\epsilon}{8}.
$$
To see the inequality,
for a fixed vertex $v$, consider the quantity
$$
\sum_{(F_S, H_{\bS})\consistent (F'_{S'}, H'_{\bS'}), \bS=\bS'\setminus \{v\}}
\frac{\Prob[F_S, H_{\bS}]}{Z(F_S, H_{\bS})}.
$$
A crucial observation here is that the denominators of all terms are in fact the same,
by the definition of $Z$,
which is
$
\sum_{} \Prob[(F''_{S''}, H''_{\bS''})],
$
and the summation is over all $(F''_{S''}, H''_{\bS''})$s
which are the same as $(F'_{S'}, H'_{\bS'})$ except the location of $v$ is a different point in $\H$.
The numerator is the summation is over all $(F_{S}, H_{\bS})$s
which are the same as $(F'_{S'}, H'_{\bS'})$ except the location of $v$ is a different point in $\calF$.
Canceling out the same multiplicative terms from the numerators and the denominator,
we can see it is at most $\frac{\Prob(v\in \calF)}{\Prob(v\in \H)}$.

Now, we specify how to charge each buyer.
For each buyer $(F'_{S'},H'_{\bS'})\sim (F_S, H_{\bS})$,
we charge her the following amount of money
$$
\frac{\Prob[(F'_{S'},H'_{\bS'})]\cdot C(F_S, H_{\bS})}
{Z(F_S, H_{\bS})}
$$
It is easy to see that $C(F_S, H_{\bS})$ is fully paid by all buyers
consistent with $(F_S, H_{\bS})$.
It remains to show that each buyer $(F'_{S'}, H'_{\bS'})$
has been charged at most $\frac{\epsilon}{2}C(F'_{S'}, H'_{\bS'})$.
By the above charging scheme,
the terms in LHS that are charged to buyer $(F'_{S'}, H'_{\bS'})$
are consistent with $(F'_{S'}, H'_{\bS'})$.
Now, we can see that the total amount charged to buyer $(F'_{S'}, H'_{\bS'})$
can be bounded as follows:
\begin{align*}
\sum_{(F_S, H_{\bS})\consistent (F'_{S'}, H'_{\bS'})}
\frac{\Prob[F'_{S'},H'_{\bS'}]\cdot C(F_S, H_{\bS})}
{Z(F_S, H_{\bS})}
&\leq
4\MST(F'_{S'}, H'_{\bS'}) \cdot \sum_{(F_S, H_{\bS})\consistent (F'_{S'}, H'_{\bS'})}
\frac{\Prob[F'_{S'},  H'_{\bS'}]\cdot \Prob[(F_S, H_{\bS})]}
{Z(F_S, H_{\bS})}
\\
=&\, 4\MST(F'_{S'}, H'_{\bS'}) \Prob[F'_{S'},H'_{\bS'}]\cdot
\sum_{(F_S, H_{\bS})\consistent (F'_{S'}, H'_{\bS'})}
\frac{\Prob[F_S, H_{\bS}]}
{Z(F_S, H_{\bS})}\\
\leq &\, \frac{\epsilon}{2}\MST(F'_{S'}, H'_{\bS'}) \Prob[F'_{S'}, H'_{\bS'}]
\end{align*}
The first inequality follows from Lemma~\ref{lm:mstchange}.
This completes the proof.
\qed
\end{proof}
}

W.l.o.g., we assume that for each point, there is only one node that may be realized
to this point.
Our algorithm is a slight generalization of
the one proposed in \cite{kamousi2011stochastic}.
Let $\Exp[i]$ be the expected $\MST$ length conditioned on the event that
all nodes $\{v_1,\ldots,v_n\}$ are realized to points in $\{s_i,\ldots, s_m\}$
(denote the event by $\iin(i,m)$).
Let $\Exp'[i]$ be the expected $\MST$ length conditioned on the event that
all nodes $\{v_1,\ldots,v_n\}$ are realized to $\{s_i,\ldots, s_m\}$
and at least one node is realized to $s_i$.
We use $s\realize s$ to denote the event that node $v$ is realized to point $s$.
Note that
\begin{align*}
\Exp[i]=\Exp'[i]\Prob[\exists v, v\realize s_i\mid \iin(i,m)]+
\Exp[i+1]\Prob[\not\exists v, v\realize s_i\mid \iin(i,m)]
\end{align*}

For a particular point $s_i$,
we reorder the points $\{s_i,\ldots, s_m\}$ as $\{s_i=r_i,\ldots, r_m\}$ in increasing order of distance from $s_i$.
Let $\Exp'[i, j]$ be the expected $\MST$ length for all nodes
conditioned on the event that
all nodes are realized to $\{r_i,\ldots, r_j\}$ (denoted as $\iin'(i,j)$) and
$\exists v, v\realize s_i$.
Let $\Exp''[i, j]$ be the expected $\MST$ length for all nodes
conditioned on the event $\iin'(i,j) \wedge
(\exists v, v\realize r_i)\wedge (\exists s', s'\realize r_j)$.
We can see that
\begin{align*}
\Exp'[i,j]=&\Exp''[i,j]\Prob[\exists v', v'\realize r_j\mid \iin'(i,j), \exists v, v\realize r_i] \\
&+ \Exp'[i,j-1]\Prob[\not\exists v, v\realize r_i\mid \iin'(i,j), \exists v, v\realize r_i]
\end{align*}
It is not difficult to see the probability
$\Prob[\exists v', v'\realize r_j\mid \iin'(i,j), \exists v, v\realize r_i]$
can be computed in polynomial time.
Here we use the assumption that for each point, only one node that may realize to it.
Moreover, we can also take samples conditioning on event
$\iin'(i,j) \wedge (\exists v, v\realize r_i)\wedge (\exists v', v'\realize r_j)$.
Therefore $\Exp''[i,j]$ can be approximated within a factor of $(1\pm \epsilon)$ using
the Monte Carlo method in polynomial time since it is poly-bounded.
The number of samples needed can be bounded by $O\bigl(\frac{nm^2}{\e^2}\ln m\bigr)$.

We can easily generalize the above algorithm to the case where
$\sum_{j=1}^m p_{ij}\leq 1$, i.e., node $i$ may not be present with some certainty.
Indeed, this can be done by generalizing the definition of
$\iin(i,j)$ (and similarly $\iin'(i,j)$)
to be the event that each node is either absent or realized to some point in $\{r_i,\ldots, r_j\}$.

\end{document}